\newcommand{\mathbbm}[1]{\text{\usefont{U}{bbm}{m}{n}#1}}
\DeclareMathOperator\arctanh{arctanh}
\newtheorem{theorem}{Theorem}
\newtheorem*{theorem*}{Theorem}
\newtheorem{proposition}[theorem]{Proposition}
\newtheorem{corollary}[theorem]{Corollary}
\newtheorem{lemma}{Lemma}
\newtheorem*{lemma*}{Lemma}
\newtheorem{definition}{Definition}
\newtheorem*{definition*}{Definition}
\newtheorem{remark}{Remark}
\newtheorem*{remark*}{Remark}
\newtheorem*{blank}{\hspace*{-5pt}}
\newtheorem*{fact*}{Fact}
\DeclareMathOperator{\Var}{Var}
\begin{document}

\title{Sample-optimal learning of quantum states using gentle measurements}
\date{}

\author[1]{\small Cristina Butucea \footnote{\texttt{cristina.butucea@ensae.fr}}}
\author[2]{\small Jan Johannes \footnote{\texttt{johannes@math.uni-heidelberg.de}}}
\author[1, 2]{\small Henning Stein \footnote{\texttt{henning.stein@math.uni-heidelberg.de}}}

\affil[1]{\footnotesize CREST, ENSAE, Institut Polytechnique de Paris, 91120 Palaiseau, France}
\affil[2]{\footnotesize Heidelberg University, 69120 Heidelberg, Germany}

\maketitle

\begin{abstract}
  Gentle measurements of quantum states do not entirely collapse the initial state. Instead, they provide a post-measurement state at a prescribed trace distance $\alpha$ from the initial state together with a random variable used for quantum learning of the initial state. We introduce here the class of $\alpha-$locally-gentle measurements ($\alpha-$LGM) on a finite dimensional quantum system which are product measurements on product states and prove a strong quantum Data-Processing Inequality (qDPI) on this class using an improved relation between gentleness and quantum differential privacy. We further show a gentle quantum Neyman-Pearson lemma which implies that our qDPI is asymptotically optimal (for small $\alpha$). This inequality is employed to show that the necessary number of quantum states for prescribed accuracy $\epsilon$ is of order $1/(\epsilon^2 \alpha^2)$ for both quantum tomography and quantum state certification. Finally, we propose an $\alpha-$LGM called quantum Label Switch that attains these bounds. It is a general implementable method to turn any two-outcome measurement into an $\alpha-$LGM. 
\end{abstract}

\section{Introduction}
\label{sec::Introduction}

Quantum measurements, which are physical devices to extract information about the quantum state or system of states at hand, are inherently destructive. This destruction is even reflected in the widely used term \textit{collapse} of a wave function. Quantum instruments are physical manipulations that can produce both statistical information (a random variable is observed) and a resulting collapsed quantum state. 

Recently, the pioneering work by \cite{aaronson2019gentle} formalized a class of measurements called \textit{gentle} which produce both statistical information and a resulting state which is at a prescribed distance from the measured state. We call the resulting state a "post-measurement state" rather than a "collapsed state" to highlight that it is indeed not far away from the initial state. While some principles of gentle (or weak) measurements were already known to researchers \cite{Bennett_2006, RevModPhys.52.341, hayashi2002simpleconstructionquantumuniversal} the introduced formalization sparked a renewed interest in the problem. In recent times, gentle measurements have been studied in a variety of contexts such as quantum event learning \cite{watts2024quantumeventlearninggentle}, Quantum Information Leakage \cite{10886009}, Quantum encryption \cite{Gulbahar2023}, quantum machine learning \cite{abbas2023on}, tests for group isomorphisms \cite{harrow_sequential_2017} and even the foundations of quantum theory \cite{wakakuwa2021gentlemeasurementprinciplequantum}. In each of these cases, the properties of gentle measurement prove essential for the tasks studied. The upside of gentleness is that after measuring a state gently the resulting post-measurement state is still of practical value. Rather than collapsing into a completely unrelated state the post-measurement state for a gentle measurement is by definition close to the original one. As such, it may be used again to store or to extract further information from it. These properties have already been observed in physical experiments \cite{katz_reversal_2008} showing the physical feasibility of gentle measurement. For study of their statistical properties, especially the connection between gentleness and quantum Differential Privacy (qDP) established in \cite{aaronson2019gentle} is proving fruitful for us in the development of new results in both fields. 

While \cite{aaronson2019gentle} only consider gentleness on a whole system, that we now call \textit{global gentleness}, we give a new definition of \textit{local gentleness} and \textit{locally-gentle measurements}. In contrast to globally-gentle measurements their local counterparts are unentangled and measure each subsystem of a quantum system one at a time. This approach allows for a far more physically feasible description of measurements \cite{Bubeck_9318006, liu2024rolesharedrandomnessquantum} as holding and manipulating a large number of quantum states at one time surpasses current technological capabilities.  

With the ever rising capabilities of QPUs and accessibility to qubits we believe that it is important to study the properties of gentle measurements for those types of states. In a recent paper \cite{abbas2023on} discussed how gentle measurements may be necessary in order to implement quantum backpropagation algorithms for machine learning. Given the success of classical backpropagation algorithms, studying gentle measurements for qubits, which constitute the building blocks of any modern quantum computer, is a integral part in the study of advanced quantum machine learning algorithms. 

In this paper, we consider the problems of estimating and testing qubits using gentle measurements. We start by giving a brief introduction into the necessary concepts of quantum and classical statistics in Section \ref{sec::background}. In Section \ref{sec::gentleness} we then introduce the concepts of gentle measurements. We state some basic properties of gentle measurements and improve the constant relating gentleness and differential privacy established by \cite{aaronson2019gentle} and extend its applicability to a wider range of values for the gentleness parameter $\alpha$. Furthermore, we give a short result on how to construct gentle measurements. In Section \ref{sec::qDPI} we then prove a proper quantum data-processing inequality (qDPI) for gentle measurements that relates the information extracted from a state and the destruction of that state by an arbitrary gentle measurement similar to a classical result for differentially-private mechanisms due to \cite{duchi2014localprivacydataprocessing}. We demonstrate the power of our qDPI by deriving theoretical guarantees for both locally-gentle quantum state certification and locally-gentle quantum tomography in Section~\ref{Sec:Nec}. Moreover, in Section \ref{sec::quantum_label_switch}, we develop a quantum locally-gentle measurement technique that we call the \textit{quantum Label Switch} (qLS), implementable using well known tools of quantum computing, that allows for locally-gentle state certification and locally-gentle tomography for qubits. Given a gentleness parameter that the user can fix in advance, we build explicit procedures for both problems that use the qLS measurement and show that a loss of rate is unavoidable in both problems due to the gentleness constraints. Finally, some further proofs can be found in Section \ref{sec::additioinal_proofs_and_calculations}.

\section{Quantum statistics background}
\label{sec::background}
Let us recall some basic definitions of both quantum and classical statistics. We will use the typical physicist notation where $\ket{\psi} \in \mathcal{H}$ denotes a unit vector in a Hilbert space $\mathcal{H}$ and $\bra{\varphi} \in \mathcal{H}^* \cong \mathcal{H}$ its dual vector. Consequently, the inner product between $\ket{\varphi}$ and $\ket{\psi}$ can be written as $\bra{\varphi}\ket{\psi}$ and the projection onto the space spanned by $\ket{\psi}$ is denoted as $\ket{\psi}\bra{\psi}$.

\subsection{States and measurements}
The laws of quantum mechanics dictate that a quantum system is described by a Hilbert space $\mathcal{H}$ and that the state of the system is given by a unit vector $\ket{\psi} \in \mathcal{H}$. \footnote{We will not go into the exact origin of $\ket{\psi}$ other than it being a solution to the famous "Schrödinger equation"} Even if we had complete knowledge of the state, any value we measure would still be probabilistic. Furthermore, when we have incomplete knowledge of the system, we must deal with the additional uncertainty of which state we are dealing with. From the outside, we are dealing with a probability distribution over a number of possible pure states $\ket{\psi_i}$. It turns out that the correct way to describe this probability distribution over pure states is as an operator $\rho = \sum_{i} \lambda_i \ket{\psi_i}\bra{\psi_i}$ for orthogonal $\ket{\psi_i}$ such that $\sum_i \lambda_i = 1$. This motivates the following definition.

\begin{definition}
\label{defn::quantum_system}
    A finite dimensional quantum system is described by the space $\mathcal{H} = \mathbb{C}^d$. The quantum state of the system is given by a positive semi-definite operator $\rho \geq 0$ on $\mathbb{C}^{d }$ with $\Tr(\rho) = 1$.
\end{definition}
A state $\rho$ is pure if and only if it has rank 1 and can be written as $\rho = \ket{\psi}\bra{\psi}$. In that case we identify the state $\rho$ by its representational (normed) vector $\ket{\psi}$. Note that any two normed vectors that only differ by a phase represent the same pure quantum state. That is, if there exists $\lambda \in \mathbb{C}$ with $|\lambda| = 1$ such that $\ket{\psi} = \lambda \ket{\psi'}$, then $\ket{\psi}$ and $\ket{\psi'}$ represent the same pure state. If a state is not pure, it is said to be mixed and can be written as $\rho = \sum_i \lambda_i \ket{\psi_i}\bra{\psi_i}$ for an orthonormal basis $\ket{\psi_i}$ of $\mathbb{C}^d$ and $\lambda_i \geq 0$ with $\sum_i \lambda_i = 1$. We denote by $\mathcal{S}(\mathbb{\mathcal{H}})$ the set of quantum states on $\mathcal{H}$ and by $\mathcal{S}_{pure}(\mathbb{\mathcal{H}})$ the set of pure quantum states on $\mathcal{H}$. For two states $\rho_1, \rho_2 $ on $\mathbb{C}^d$, their product is given by $\rho_1 \otimes \rho_2 $ on the product space $\mathbb{C}^d \otimes \mathbb{C}^d$, where $\otimes$ denotes the tensor product. For product states we use the shorthand notation $\rho^{\otimes n} $ on $(\mathbb{C}^d)^{\otimes n}$ to denote the $n-$fold tensor product. The state of a system cannot be observed directly but only through quantum measurements. There are several different notions of quantum measurements. We use the definition in \cite{NielsenChuang}.

\begin{definition}
\label{defn::quantum_measurement}
    A measurement $M$ on a quantum system with outcomes in a measurable space $(\mathcal{Y}, \mathcal{P}(\mathcal{Y}))$, for countable $\mathcal{Y}$, is a set of measurement operators $M_y $ in $\mathbb{C}^{d \times d}, y \in \mathcal{Y}$ such that 
    $$
    \sum_{y\in \mathcal{Y}} M_y^*M_y = \mathbbm{1}.
    $$
    Here, $\mathbbm{1} := \mathbbm{1}_{\mathcal{H}}$ denotes the identity matrix on $\mathcal{H}$.
\end{definition}
    The last equation is called the completeness relation. Measuring the state $\rho$ with the measurement $M$ yields the random variable $R_M$ having outcome $y$ with probability 
    $$
    p_{\rho}^{R_M}(y) := \mathcal{P}_{\rho}(R_M = y) = \Tr(\rho M_y^*M_y)
    $$ and the measurement alters the system, such that after the measurement, the state is given by 
    $$
    \rho_{M \to y} = \frac{1}{\mathcal{P}_{\rho}(R_M = y)} M_y \rho M_y^*.
    $$

If the state $\rho = \ket{\psi} \bra{\psi}$ is pure, the outcome probability simplifies to 
$$
\mathcal{P}_{\rho}(R_M = y) = \bra{\psi} M_y^* M_y \ket{\psi}.
$$
In that case the post-measurement state is also pure and given by 
$$
\ket{\psi}_{M \to y} = \frac{1}{\sqrt{\mathcal{P}_{\rho}(R_M = y)}} M_y \ket{\psi}.
$$ 

Other definitions for measurements include "projector values measurements" (PVM) where the defining measurement operators $M_y$ have to be projectors and "positive operator values measurements" (POVM). A POVM on a measurable space $(\Omega, \Sigma)$ is a map $M: \Sigma \to \mathcal{H}$ with
\begin{itemize}
    \item[(i)] $M(E) \geq 0$ for all measurable sets $E $ in $\Sigma$.

    \item[(ii)] $M\left(\bigcup_{i = 1}^{\infty} E_i \right) = \sum_{i = 1}^{\infty} M(E_i)$ for all disjoint measurable sets $E_i$ in $\Sigma$.

    \item[(iii)] $M(\Omega) = \mathbbm{1}$.
\end{itemize}
The properties (i)-(iii) again make sure that for every $\rho \in \mathcal{S}(\mathcal{H})$ the mapping
\begin{equation*}
    E \to \mathcal{P}_{\rho}(E) = \Tr(\rho M(E))
\end{equation*}
defines a probability measure on $(\Omega, \Sigma)$. While POVMs seem to define a more general class than Definition~\ref{defn::quantum_measurement}, they lack information about the post-measurement state which is a crucial aspect of gentleness if the outcome space $\Omega$ is uncountable. In the case of a countable $\Omega$ we can set $\mathcal{Y} = \Omega$ and $M_y = \sqrt{M(\{y\})}$, where $\sqrt{M}$ denotes the matrix square root of a positive semi-definite $M$. The so-defined operators $M_y$ then define a quantum measurement in the sense of Definition~\ref{defn::quantum_measurement}. Conversely, for a measurement $M = (M_y)_{y \in \mathcal{Y}}$ the mapping defined by $M(E) = \sum_{y \in E} M_y^*M_y$ defines a POVM on $(\mathcal{Y}, \mathcal{P}(\mathcal{Y}))$. As such, Definition~\ref{defn::quantum_measurement} is the most general definition that still allows for the the description of post-measurement states which are essential for gentleness. PVMs, while often used in standard quantum procedures, do not capture the full range of possible measurements as all non-trivial PVMs are not full rank. We will however see that gentle measurements need to be full rank excluding PVMs from the list of possible gentle measurements. The Naimark-Theorem \cite{doran1986characterizations} makes a connection between PVMs and general measurements showing that every general measurement may be implemented as a PVM on a larger Hilbert space. This duality can be seen in the description of the quantum Label Switch mechanism for qubits. While explicit formulas, see \eqref{eqn::label_switch_mathematical_formulation}, give the mathematical description of the qLS in the sense of Definition \ref{defn::quantum_measurement}, the physical realization of this mechanism includes pairing the qubits with an ancillary qubit that will be measured by a PVM.

Similarly to product states, we can consider product measurements. For two quantum mechanical measurements $M^{(1)}$ on $\mathcal{H}_1$ with measurement operators $(M_{y_1}^{(1)})_{y_1 \in \mathcal{Y}_1}$ and $M^{(2)}$ on $\mathcal{H}_2$ with measurement operators $(M_{y_2}^{(2)})_{y_2 \in \mathcal{Y}_2}$ we can consider the product measurement $M^{(1)} \otimes M^{(2)}$ given by the measurement operators
\begin{equation*}
    \left(M_{y_1}^{(1)} \otimes M_{y_2}^{(2)} \right)_{y_1 \in \mathcal{Y}_1, y_2 \in \mathcal{Y}_2}.
\end{equation*}
To see that this set of operators defines a measurement we have to check its completeness relation
\begin{align*}
    \sum_{(y_1, y_2) \in \mathcal{Y}_1 \times \mathcal{Y}_2} (M_{y_1}^{(1)} \otimes M_{y_2}^{(2)})^* (M_{y_1}^{(1)} \otimes M_{y_2}^{(2)}) &= \sum_{(y_1, y_2) \in \mathcal{Y}_1 \times \mathcal{Y}_2} ({M_{y_1}^{(1)}}^* \otimes {M_{y_2}^{(2)}}^*) (M_{y_1}^{(1)} \otimes M_{y_2}^{(2)})
    \\
    &= \sum_{(y_1, y_2) \in \mathcal{Y}_1 \times \mathcal{Y}_2} ({M_{y_1}^{(1)}}^* M_{y_1}^{(1)}\otimes {M_{y_2}^{(2)}}^* M_{y_2}^{(2)})
    \\
    &= \sum_{y_1 \in \mathcal{Y}_1} \sum_{y_2 \in \mathcal{Y}_2} ({M_{y_1}^{(1)}}^* M_{y_1}^{(1)}\otimes {M_{y_2}^{(2)}}^* M_{y_2}^{(2)})
    \\
    &= \sum_{y_1 \in \mathcal{Y}_1} \left({M_{y_1}^{(1)}}^* M_{y_1}^{(1)}\otimes  \sum_{y_2 \in \mathcal{Y}_2} {M_{y_2}^{(2)}}^* M_{y_2}^{(2)}\right)
    \\
    &= \mathbbm{1}_{\mathcal{H}_1} \otimes \mathbbm{1}_{\mathcal{H}_2} = \mathbbm{1}_{\mathcal{H}_1 \otimes \mathcal{H}_2}.
\end{align*}
As such, $M^{(1)} \otimes M^{(2)}$ indeed defines a measurement on $\mathcal{H}_1 \otimes \mathcal{H}_2$. Measuring the state $\rho_1 \otimes \rho_2$ with the measurement $M^{(1)} \otimes M^{(2)}$ yields two independent random variables $R_{M^{(1)}}$ and $R_{M^{(2)}}$ on $\mathcal{Y}_1 \times \mathcal{Y}_2$. 

Examples of non-product measurements are the so-called coherent measurements. These are measurements that may act on the whole product space \textit{coherently}, i.e. acting on all registers at once. Coherent measurements are often needed to obtain optimality, for example in optimally discriminating two quantum states (\cite{NussbaumSzkola2009}) but are physically very difficult to implement as they require access to the whole product system at once at the time of measuring. The case of a product measurement on a non-product state is a physically less demanding task. Similarly to the definition of $n$-fold product states, we write $M^{\otimes n}$ to denote $n$ measurements $M$ acting on each register independently.

\subsection{Qubits}
Qubits are a special case of quantum states for $d = 2$. The name arises from merging the words "quantum bit" into one. In quantum computing they play the analogous role of classical bits in classical computing. Every qubit can be described using three real-valued parameters $\mathbbm{r}_x, \mathbbm{r}_y, \mathbbm{r}_z$. As such, the set of qubits forms a three dimensional manifold, known as the Bloch sphere. Let us further develop this. To do this, we fix an orthonormal basis $\ket{0}, \ket{1}$ of $\mathbb{C}^2$. Although it is not important which basis vectors we chose, it is most common and useful to think of
\begin{equation*}
    \ket{0} = \begin{bmatrix}
        1 
        \\
        0
    \end{bmatrix} \hspace{20pt} \text{and} \hspace{20pt} \ket{1} = \begin{bmatrix}
        0 
        \\
        1
    \end{bmatrix}.
\end{equation*}
Based on the two basis vectors $\ket{0}$ and $\ket{1}$, we define six vectors which will represent the axis directions in the Bloch sphere visualization.
\begin{alignat*}{3}
    \ket{{x_+}} &:= \frac{1}{\sqrt{2}} \left( \ket{0} + \ket{1} \right), \hspace{20pt} &&\ket{{x_-}} &&:= \frac{1}{\sqrt{2}} \left( \ket{0} - \ket{1} \right)
    \\
    \ket{{y_+}} &:= \frac{1}{\sqrt{2}} \left( \ket{0} + i\ket{1} \right), && \ket{{y_-}} &&:= \frac{1}{\sqrt{2}} \left( \ket{0} - i\ket{1} \right)
    \\
    \ket{{z_+}} &:= \ket{0}, \phantom{\frac{1}{\sqrt{2}}} && \ket{{z_-}} &&:= \ket{1}.
\end{alignat*}
Note that the sets $\{ \ket{{x_+}}, \ket{{x_-}} \}$, $\{ \ket{{y_+}}, \ket{{y_-}} \}$, $\{ \ket{{z_+}}, \ket{{z_-}} \}$ each form an orthonormal basis of $\mathbb{C}^2$ and it holds
\begin{equation*}
\label{eqn::mutually_unbiasedness}
    |\bra{x_{\pm}}\ket{y_{\pm}}|^2 = \frac{1}{2}, \hspace{10pt} |\bra{x_{\pm}}\ket{z_{\pm}}|^2 = \frac{1}{2}, \hspace{10pt} |\bra{y_{\pm}}\ket{z_{\pm}}|^2 = \frac{1}{2}.
\end{equation*}
Such a collection of vectors are called mutually unbiased bases (MUBs). We can use these bases to further define the so called Pauli-basis. This is a basis of the hermitian $2 \times 2$-matrices composed of $\mathbbm{1}, \sigma_x, \sigma_y, \sigma_z$, where
\begin{align*}
    \sigma_x &= \begin{bmatrix}
        0 & 1
        \\
        1 & 0
    \end{bmatrix} = \ket{{x_+}}\bra{{x_+}} - \ket{{x_-}}\bra{{x_-}}
    \\
    \sigma_y &= \begin{bmatrix}
        0 & -i
        \\
        i & 0
    \end{bmatrix} = \ket{{y_+}}\bra{{y_+}} - \ket{{y_-}}\bra{{y_-}}
    \\
    \sigma_z &= \begin{bmatrix}
        1 & 0
        \\
        0 & -1
    \end{bmatrix} = \ket{{z_+}}\bra{{z_+}} - \ket{{z_-}}\bra{{z_-}}.
\end{align*}
The Pauli-basis representation of a qubit $\rho$ is then given as 
\begin{equation}
\label{eqn::qubit_in_pauli_basis}
    \rho = \rho(\mathbbm{r}) = \frac{1}{2} \left( \mathbbm{1} + \mathbbm{r}_x \sigma_x + \mathbbm{r}_y \sigma_y + \mathbbm{r}_z \sigma_z \right).
\end{equation}
The eigenvalues of the state $\rho(\mathbbm{r})$ are given by $\frac{1}{2}(1 \pm \norm{\mathbbm{r}}_2)$ (see Lemma~\ref{lem::trace_norm_qubits}). As such, in order for $\rho(\mathbbm{r})$ to be a valid quantum state, \emph{i.e.} positive and self-adjoint, it must hold $\norm{\mathbbm{r}}_2 \leq 1$. Indeed, in this way, every qubit can be uniquely identified with a point in the three dimensional real ball with radius 1. Furthermore, for $\norm{\mathbbm{r}}_2 = 1$, the eigenvalues of $\rho(\mathbbm{r})$ are $0$ and $1$. As such, the pure qubits correspond exactly to the points on boundary of the ball, that being the three-dimensional unit sphere.
 
\begin{figure}[ht]
    \begin{minipage}[c]{100pt}
    \end{minipage}\hfill
    \begin{minipage}[c]{0.3\textwidth}
        \fbox{\includegraphics[width=.9\linewidth]{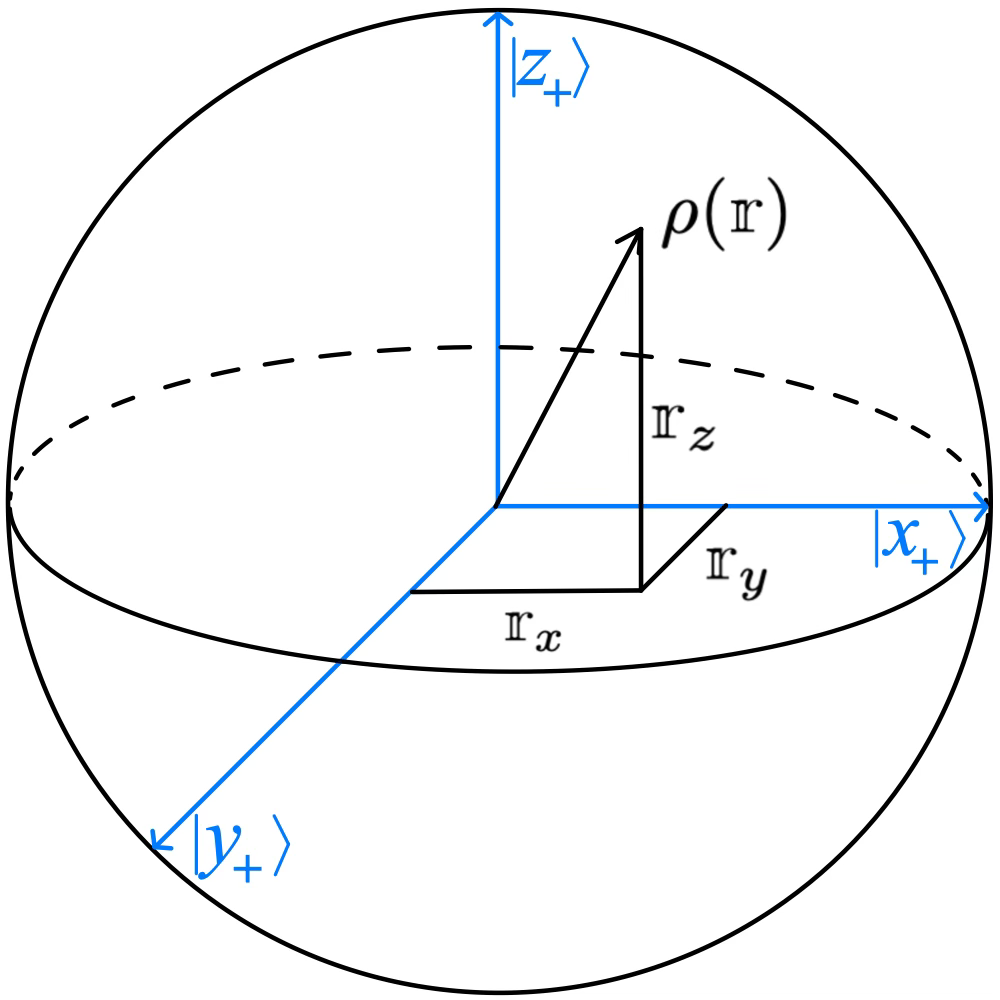}}
    \end{minipage}\hfill
    \begin{minipage}[c]{270pt}
        \caption{Bloch sphere. Every qubit can be represented uniquely by a point in the ball of radius 1.}
    \end{minipage}
\end{figure}
 
\subsection{Distance measures of quantum states}
In order to quantify the closeness of two quantum states, we need to define distance measures on the space of quantum states. One of the most common distance measures for quantum states is the trace-norm distance.
\begin{definition}
    Let $\rho_1, \rho_2 \in \mathcal{S}(\mathbb{C}^d)$ be two quantum states. The trace-norm between the two states is given by 
    $$
    \norm{\rho_1 - \rho_2}_{Tr} = \frac{1}{2} \Tr(|\rho_1 - \rho_2|) = \frac{1}{2} \sum_i |\lambda_i|,
    $$
    where the $\lambda_i$ denote the eigenvalues of $\rho_1-\rho_2$. 
\end{definition}
The calculation of the trace-norm is difficult in general as we need to calculate the spectral decomposition of $\rho_1 - \rho_2$ but simplifies a lot if $\rho_1$ and $\rho_2$ are pure states.

\begin{lemma}{\normalfont (\cite{kargin2003chernoffboundefficiencyquantum})} \label{lemmaTraceNorm}
    Let $\rho_1 = \ket{\psi_1}\bra{\psi_1}, \rho_2 = \ket{\psi_2}\bra{\psi_2}$ be two pure states. Then the trace-norm distance between the two is given by 
    \begin{equation*}
        \norm{\rho_1 - \rho_2}_{Tr} = \sqrt{1 - |\bra{\psi_1}\ket{\psi_2}|^2}.
    \end{equation*}
    For pure product states $\rho_1^{\otimes n}, \rho_2^{\otimes n}$ the trace-norm is given by
    \begin{equation*}
        \norm{\rho_1^{\otimes n} - \rho_2^{\otimes n}}_{Tr} = \sqrt{1 - |\bra{\psi_1}\ket{\psi_2}|^{2n}} = \sqrt{1 - (1 - \norm{\rho_1 - \rho_2}_{Tr}^2)^n}.
    \end{equation*}
\end{lemma}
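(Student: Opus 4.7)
The key observation is that the self-adjoint operator $\rho_1-\rho_2=\ket{\psi_1}\bra{\psi_1}-\ket{\psi_2}\bra{\psi_2}$ has rank at most $2$ and vanishing trace. Hence its only possibly nonzero eigenvalues come in a pair $\pm\lambda$ with $\lambda\ge 0$, so that $\sum_i|\lambda_i|=2\lambda$ and consequently $\norm{\rho_1-\rho_2}_{Tr}=\lambda$. The whole problem therefore reduces to computing this single nonnegative number $\lambda$, and the natural way to do so without diagonalising explicitly is to use the trace of the square.

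The plan is to expand
\[
\Tr\!\bigl((\rho_1-\rho_2)^2\bigr)=\Tr(\rho_1^2)-2\Tr(\rho_1\rho_2)+\Tr(\rho_2^2),
\]
and use purity: $\Tr(\rho_i^2)=1$ for $i=1,2$, while $\Tr(\rho_1\rho_2)=\Tr(\ket{\psi_1}\bra{\psi_1}\ket{\psi_2}\bra{\psi_2})=|\bra{\psi_1}\ket{\psi_2}|^2$ by the cyclic property of the trace. On the other hand, from the spectral description above, $\Tr((\rho_1-\rho_2)^2)=2\lambda^2$. Equating the two expressions gives $\lambda=\sqrt{1-|\bra{\psi_1}\ket{\psi_2}|^2}$, which is the first claimed identity. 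The degenerate case $\bra{\psi_1}\ket{\psi_2}|=1$ (parallel vectors) is consistent: both sides vanish.

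For the product-state version, I would simply apply the formula just proved to the pure states $\ket{\psi_1^{\otimes n}}$ and $\ket{\psi_2^{\otimes n}}$ on $(\mathbb{C}^d)^{\otimes n}$, using the multiplicativity of the inner product on tensor products,
\[
\bra{\psi_1^{\otimes n}}\ket{\psi_2^{\otimes n}}=\bra{\psi_1}\ket{\psi_2}^{n},
\]
so that $|\bra{\psi_1^{\otimes n}}\ket{\psi_2^{\otimes n}}|^2=|\bra{\psi_1}\ket{\psi_2}|^{2n}$. This yields the first expression $\sqrt{1-|\bra{\psi_1}\ket{\psi_2}|^{2n}}$. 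For the second expression, I substitute $|\bra{\psi_1}\ket{\psi_2}|^2=1-\norm{\rho_1-\rho_2}_{Tr}^2$, which is just the first identity solved for the squared inner product.

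There is essentially no hard obstacle here; the only thing to be careful about is to justify that the two nonzero eigenvalues of $\rho_1-\rho_2$ really occur as an opposite pair, and this follows at once from $\Tr(\rho_1-\rho_2)=0$ together with the rank bound $\mathrm{rank}(\rho_1-\rho_2)\le 2$. The remaining steps are short algebraic manipulations. Note also that the $\tfrac12$ factor in the definition of $\norm{\cdot}_{Tr}$ is what converts $\sum_i|\lambda_i|=2\lambda$ into $\lambda$, matching the right-hand side.
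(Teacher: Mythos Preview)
Your argument is correct. The paper does not actually supply a proof of this lemma; it merely cites the result from \cite{kargin2003chernoffboundefficiencyquantum}. Your proof via the rank-$2$, trace-zero structure of $\rho_1-\rho_2$ and the computation of $\Tr((\rho_1-\rho_2)^2)$ is the standard elementary argument and is entirely sound, including the tensor-product extension.
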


The trace-norm of pure qubits has the nice property of being related to the euclidian distance of the parameter vectors in the Bloch-sphere/Pauli-basis representation. We will formalize this result in the following lemma.
\begin{lemma}
\label{lem::trace_norm_qubits}
Let $\rho(\mathbbm{r}_1), \rho(\mathbbm{r}_2) \in \mathcal{S}(\mathbb{C}^2)$. Then, for $i = 1,2$, the eigenvalues of $\rho(\mathbbm{r}_i)$ are given by $\frac{1}{2}(1 \pm \norm{\mathbbm{r}_i}_2)$ and
\begin{equation*}
    \norm{\rho(\mathbbm{r}_1) - \rho(\mathbbm{r}_2)}_{Tr} = \frac{1}{2} \norm{\mathbbm{r}_1 - \mathbbm{r}_2}_2.
\end{equation*}
\end{lemma}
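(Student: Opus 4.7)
The plan is to reduce the trace norm of the difference to the eigenvalues of a linear combination of Pauli matrices, then exploit the Pauli anticommutation relations to read off those eigenvalues directly.

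First, I would use the Pauli-basis representation \eqref{eqn::qubit_in_pauli_basis} to write
\begin{equation*}
\rho(\mathbbm{r}_1) - \rho(\mathbbm{r}_2) = \tfrac{1}{2}\bigl((\mathbbm{r}_{1,x}-\mathbbm{r}_{2,x})\sigma_x + (\mathbbm{r}_{1,y}-\mathbbm{r}_{2,y})\sigma_y + (\mathbbm{r}_{1,z}-\mathbbm{r}_{2,z})\sigma_z\bigr),
\end{equation*}
so the identity component cancels. Setting $\mathbbm{s} := \mathbbm{r}_1 - \mathbbm{r}_2$ and $\vec{\sigma} := (\sigma_x,\sigma_y,\sigma_z)$, the difference becomes $\tfrac{1}{2}\,\mathbbm{s}\cdot\vec\sigma$, which is a Hermitian, traceless $2\times 2$ matrix.

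Next, I would use the well-known anticommutation relations $\sigma_i\sigma_j + \sigma_j\sigma_i = 2\delta_{ij}\mathbbm{1}$ (read off directly from the explicit $2\times 2$ matrices, or already implicit in the MUB identities stated above) to compute
\begin{equation*}
(\mathbbm{s}\cdot\vec\sigma)^2 = \sum_{i}\mathbbm{s}_i^2\,\sigma_i^2 + \sum_{i\neq j}\mathbbm{s}_i\mathbbm{s}_j\,\sigma_i\sigma_j = \|\mathbbm{s}\|_2^2\,\mathbbm{1},
\end{equation*}
since $\sigma_i^2 = \mathbbm{1}$ and the cross terms cancel in pairs by anticommutativity. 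A traceless Hermitian $2\times 2$ matrix whose square is a positive multiple of the identity must have eigenvalues $\pm\|\mathbbm{s}\|_2$. Consequently, the eigenvalues of $\rho(\mathbbm{r}_1)-\rho(\mathbbm{r}_2)$ are $\pm\tfrac{1}{2}\|\mathbbm{s}\|_2$.

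Finally, I would plug these eigenvalues into the defining formula
\begin{equation*}
\|\rho(\mathbbm{r}_1)-\rho(\mathbbm{r}_2)\|_{Tr} = \tfrac{1}{2}\sum_i |\lambda_i| = \tfrac{1}{2}\bigl(\tfrac{1}{2}\|\mathbbm{s}\|_2 + \tfrac{1}{2}\|\mathbbm{s}\|_2\bigr) = \tfrac{1}{2}\|\mathbbm{r}_1-\mathbbm{r}_2\|_2,
\end{equation*}
which is the claim. There is no real obstacle here; the only place one has to be slightly careful is keeping track of the factor of $\tfrac{1}{2}$ appearing in both the Pauli representation of $\rho$ and in the normalization convention of $\|\cdot\|_{Tr}$, which together produce the $\tfrac{1}{2}$ on the right-hand side.
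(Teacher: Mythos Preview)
Your proof is correct and follows essentially the same route as the paper: write the difference in the Pauli basis as $\tfrac{1}{2}(\mathbbm{r}_1-\mathbbm{r}_2)\cdot\sigma$, determine that its eigenvalues are $\pm\tfrac{1}{2}\|\mathbbm{r}_1-\mathbbm{r}_2\|_2$, and read off the trace norm. The only cosmetic difference is that the paper extracts the eigenvalues from the trace and determinant of $\mathbbm{s}\cdot\sigma$, whereas you square the matrix using the Pauli anticommutation relations; both are standard one-line computations yielding the same spectrum.
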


\begin{proof}
    For any $\mathbbm{r} \in \mathbb{R}^3$ let
    \begin{equation*}
        \mathbbm{r} \cdot \sigma := \mathbbm{r}_x \sigma_x + \mathbbm{r}_y \sigma_y + \mathbbm{r}_z \sigma_z = \begin{bmatrix}
            \mathbbm{r}_z & \mathbbm{r}_x - i \mathbbm{r}_y
            \\
            \mathbbm{r}_x + i \mathbbm{r}_y & - \mathbbm{r}_z
        \end{bmatrix}
    \end{equation*}
    Then we have $\Tr\left( \mathbbm{r} \cdot \sigma \right) = 0$ and $\det( \mathbbm{r} \cdot \sigma ) = -\mathbbm{r}_y^2 - \mathbbm{r}_x^2 - \mathbbm{r}_z^2 = - \norm{\mathbbm{r}}_2^2$. This shows that the eigenvalues of $\mathbbm{r} \cdot \sigma$ are given by $\lambda_+ = \norm{\mathbbm{r}}_2$ and $\lambda_- = -\norm{\mathbbm{r}}_2$. Together with~\eqref{eqn::qubit_in_pauli_basis} this shows the first claim. Let us now consider
    \begin{align*}
        \norm{\rho(\mathbbm{r}_1) - \rho(\mathbbm{r}_2)}_{Tr} = \frac{1}{2} \norm{(\mathbbm{r}_1 - \mathbbm{r}_2)\cdot \sigma}_{Tr} = \frac{1}{4} (\norm{\mathbbm{r}_1 - \mathbbm{r}_2}_2 + \norm{\mathbbm{r}_1 - \mathbbm{r}_2}_2) = \frac{1}{2} \norm{\mathbbm{r}_1 - \mathbbm{r}_2}_2
    \end{align*}
    where we used the representation of $\rho(\mathbbm{r}_i)$ in equation~\eqref{eqn::qubit_in_pauli_basis}.
\end{proof}
Let us now state an intermediary results that we make use of throughout the paper. It relates the distance of the probability distributions induced by a measurement on two quantum states to the Trace-norm distance between the states. For readability, we defer the proof to Section~\ref{sec::additonal_proofs}.
\begin{lemma}
\label{lem::Help1}
    Let $M$ be a quantum measurement with measurement operators $M_y$, where $E_y := M_y^* M_y$. Furthermore, for $i \in \{1,2\}$, let $\rho_i \in \mathcal{S}(\mathbb{C}^d)$ be quantum states. We set $$
    p_i^{R_M}(y) := \mathcal{P}_i(R_M = y) = \Tr(\rho_i E_y).
    $$ 
    Then:
    \begin{enumerate}
        \item[(i)] For all $y$ it holds: $\lambda_{min}(E_y) \leq p_i^{R_M}(y) \leq \lambda_{max}(E_y)$,

        \item[(ii)] For all $y$ it holds: $\left| p_{1}^{R_M}(y) - p_{2}^{R_M}(y) \right| \leq \left( \lambda_{max}(E_y) - \lambda_{min}(E_y) \right) \norm{\rho_1 - \rho_2}_{Tr}$.
    \end{enumerate}
\end{lemma}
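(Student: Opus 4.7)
For part (i), the plan is to use the operator ordering induced by the spectrum of $E_y$. Since $E_y = M_y^* M_y$ is positive semi-definite and Hermitian, we have the operator inequality
\[
\lambda_{min}(E_y)\,\mathbbm{1} \;\leq\; E_y \;\leq\; \lambda_{max}(E_y)\,\mathbbm{1}.
\]
Taking trace against the positive operator $\rho_i$ preserves this ordering, and since $\Tr(\rho_i)=1$, we immediately get $\lambda_{min}(E_y) \leq \Tr(\rho_i E_y) = p_i^{R_M}(y) \leq \lambda_{max}(E_y)$.

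For part (ii), the key idea is to exploit the fact that $\Delta := \rho_1 - \rho_2$ is traceless: $\Tr(\Delta)=0$. This means that shifting $E_y$ by any scalar multiple of the identity does not change $\Tr(\Delta E_y)$. I would center $E_y$ by subtracting the midpoint of its spectrum, setting $c = \tfrac{1}{2}(\lambda_{max}(E_y) + \lambda_{min}(E_y))$, so that
\[
\Tr(\Delta E_y) = \Tr\bigl(\Delta(E_y - c\mathbbm{1})\bigr),
\]
and the operator $E_y - c\mathbbm{1}$ has operator norm exactly $\tfrac{1}{2}(\lambda_{max}(E_y) - \lambda_{min}(E_y))$, which is better than the naive bound one would obtain by shifting by $\lambda_{min}(E_y)\mathbbm{1}$.

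Applying the matrix Hölder inequality $|\Tr(AB)| \leq \|A\|_1 \|B\|_\infty$ with $A = \Delta$ and $B = E_y - c\mathbbm{1}$, and noting that the trace-norm convention used in the paper is normalized, $\|\Delta\|_{Tr} = \tfrac{1}{2}\|\Delta\|_1$, I obtain
\[
|p_1^{R_M}(y) - p_2^{R_M}(y)| \;\leq\; \|\Delta\|_1 \cdot \tfrac{1}{2}(\lambda_{max}(E_y) - \lambda_{min}(E_y)) \;=\; (\lambda_{max}(E_y) - \lambda_{min}(E_y)) \, \|\rho_1-\rho_2\|_{Tr},
\]
which is exactly the claimed bound. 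The main (only mild) subtlety is to remember to center by the midpoint rather than the minimum eigenvalue, and to keep track of the factor $\tfrac{1}{2}$ in the paper's normalization of the trace-norm; otherwise one loses a factor of $2$ and the inequality becomes loose.
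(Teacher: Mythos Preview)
Your proof is correct. Part~(i) is essentially identical to the paper's argument: both use $\lambda_{\min}(E_y)\mathbbm{1}\le E_y\le\lambda_{\max}(E_y)\mathbbm{1}$ and trace against the state.

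For part~(ii), however, you take a genuinely different route. The paper works on the \emph{state side}: it splits $\rho_1-\rho_2=(\rho_1-\rho_2)_+-(\rho_1-\rho_2)_-$ into its positive and negative parts, bounds $\Tr(E_y(\rho_1-\rho_2)_+)\le\lambda_{\max}(E_y)\Tr((\rho_1-\rho_2)_+)$ and $\Tr(E_y(\rho_1-\rho_2)_-)\ge\lambda_{\min}(E_y)\Tr((\rho_1-\rho_2)_-)$, and then uses tracelessness to identify $\Tr((\rho_1-\rho_2)_+)=\Tr((\rho_1-\rho_2)_-)=\norm{\rho_1-\rho_2}_{Tr}$. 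You instead work on the \emph{observable side}: you shift $E_y$ by its spectral midpoint so that $\|E_y-c\mathbbm{1}\|_\infty=\tfrac12(\lambda_{\max}-\lambda_{\min})$, and apply matrix H\"older. Both arguments rely on $\Tr(\rho_1-\rho_2)=0$ in an essential way and both recover the sharp constant; your version is a bit slicker and invokes a standard duality inequality, while the paper's is more self-contained and makes the role of the Jordan decomposition explicit. Your remark about the midpoint being necessary (versus shifting by $\lambda_{\min}$) is exactly the H\"older analogue of why the paper needs \emph{both} the upper bound on the positive part and the lower bound on the negative part rather than just one of them.
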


\subsection{Distance measures of probability distributions}
Let us shortly recall some classical properties of some distances between probability distributions. The most important one for us is the Kullback-Leibler divergence.
\begin{definition}
    Let $\mathcal{P}_{1}, \mathcal{P}_2$ be two mutually absolutely continuous probability distributions with corresponding probability mass functions (pmf) $p_1, p_2$. The Kullback-Leibler divergence between $\mathcal{P}_1$ and $\mathcal{P}_2$ is given by 
    \begin{equation*}
        D_{KL}(\mathcal{P}_1 ||\mathcal{P}_2) = \sum_{y} p_1(y) \log\left(\frac{p_2(y)}{p_1(y)}\right).
    \end{equation*}
    Since the Kullback-Leibler divergence is not symmetric, we often consider the symmetrized Kullback-Leibler divergence $D_{KL}^{sym}(\mathcal{P}_1 ||\mathcal{P}_2) = D_{KL}(\mathcal{P}_1 ||\mathcal{P}_2) + D_{KL}(\mathcal{P}_2 ||\mathcal{P}_1)$. 
\end{definition}

The smaller the Kullback-Leibler divergence of two probability distributions, the closer the two distributions are. As such, it allows us to asses the information in the outcome distribution of a given measurement. The better a measurement $M$ can distinguish two quantum states $\rho_1, \rho_2$ the further apart their outcome distributions $\mathcal{P}_{\rho_1}^{R_M}, \mathcal{P}_{\rho_2}^{R_M}$ are. The Kullback-Leibler divergence has the nice property of factorizing for product distributions.
\begin{lemma}{\normalfont (\cite{Tsybakov})}
    Let $\mathcal{P}_{1}^{\otimes n}, \mathcal{P}_2^{\otimes n}$ be two $n$-fold product distributions. Then it holds
    \begin{equation*}
        D_{KL}(\mathcal{P}_1^{\otimes n} ||\mathcal{P}_2^{\otimes n}) = n D_{KL}(\mathcal{P}_1 ||\mathcal{P}_2) .
    \end{equation*}
\end{lemma}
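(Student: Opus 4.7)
The plan is to unfold the definition of the Kullback--Leibler divergence on the product space, exploit the fact that the joint probability mass function factorizes, and then turn the logarithm of the product into a sum of logarithms. Concretely, writing $y = (y_1, \ldots, y_n)$ and using that $p_i^{\otimes n}(y) = \prod_{k=1}^n p_i(y_k)$ for $i \in \{1,2\}$, I would start from
\begin{equation*}
  D_{KL}(\mathcal{P}_1^{\otimes n} \| \mathcal{P}_2^{\otimes n})
  = \sum_{y_1, \ldots, y_n} \Bigl( \prod_{k=1}^n p_1(y_k) \Bigr) \log\!\frac{\prod_{k=1}^n p_2(y_k)}{\prod_{k=1}^n p_1(y_k)}.
\end{equation*}

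The next step is to apply $\log \prod_k a_k = \sum_k \log a_k$ to obtain a double sum in which the logarithm depends on a single coordinate $y_k$. Exchanging the order of summation, I would for each fixed $k$ sum out all coordinates $y_j$ with $j \neq k$, using the normalisation $\sum_{y_j} p_1(y_j) = 1$. Only the $k$-th factor $p_1(y_k)$ survives together with $\log(p_2(y_k)/p_1(y_k))$, yielding exactly one copy of $D_{KL}(\mathcal{P}_1 \| \mathcal{P}_2)$ per index $k$.

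Summing the $n$ identical contributions delivers $n \, D_{KL}(\mathcal{P}_1 \| \mathcal{P}_2)$, completing the proof. There is essentially no obstacle here: the argument is a bookkeeping exercise combining the multiplicativity of product pmfs with the additivity of $\log$. The only minor care needed is the mutual absolute continuity hypothesis, which transfers automatically to the product measures and ensures all terms are well-defined (with the usual convention $0 \log 0 = 0$), so that Fubini-type interchanges of summation are legitimate.
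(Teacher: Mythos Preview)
Your argument is correct and is exactly the standard derivation; the paper does not give its own proof of this lemma but simply cites \cite{Tsybakov}, so there is nothing further to compare.
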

Another important classical distance measure is the total-variation distance.
\begin{definition}
    Let $\mathcal{P}_{1}, \mathcal{P}_2$ be two probability distributions with corresponding probability mass functions (pmf) $p_1, p_2$. The total-variation distance between $\mathcal{P}_1$ and $\mathcal{P}_2$ is given by 
    \begin{equation*}
        \norm{\mathcal{P}_1 - \mathcal{P}_2}_{TV} = \frac{1}{2} \sum_{y} |p_1(y) - p_2(y)|.
    \end{equation*}
\end{definition}
The total-variation distance is the classical analogue of the trace-norm distance and is closely linked to the optimal error in distinguishing between to probability distributions. Since the total-variation of product distributions is difficult to calculate, it is useful to relate the total-variation distance and Kullback-Leibler divergence
\begin{lemma}{\normalfont (Pinsker's inequality \cite{Tsybakov})} Let $\mathcal{P}_{1}, \mathcal{P}_2$ be two probability distributions. Then it holds
\begin{equation*}
    \norm{\mathcal{P}_1 - \mathcal{P}_2}_{TV} \leq \sqrt{\frac{1}{2} D_{KL}(\mathcal{P}_1 ||\mathcal{P}_2)}.
\end{equation*}
\end{lemma}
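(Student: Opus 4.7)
My plan is to prove Pinsker's inequality by reducing the general case to the two-point (Bernoulli) case, which is the standard route in Tsybakov's book. First, I would use the characterization of the total-variation distance as $\norm{\mathcal{P}_1 - \mathcal{P}_2}_{TV} = p_1(A) - p_2(A)$, where $A := \{y : p_1(y) > p_2(y)\}$; this follows immediately from the definition by splitting the sum over $A$ and its complement. Writing $p := p_1(A)$ and $q := p_2(A)$, the target inequality becomes $(p - q)^2 \leq \tfrac{1}{2} D_{KL}(\mathrm{Ber}(p) \,\|\, \mathrm{Ber}(q))$, provided one also shows that passing to the induced Bernoulli distributions can only decrease the KL divergence.

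For that decrease, I would invoke the data-processing property of $D_{KL}$ under the deterministic map $y \mapsto \mathbbm{1}_{A}(y)$, which can be proved directly from the log-sum inequality applied to the two subsums $\sum_{y \in A} p_1(y) \log(p_1(y)/p_2(y)) \geq p \log(p/q)$ and the analogous inequality on $A^c$. Adding these gives $D_{KL}(\mathrm{Ber}(p) \,\|\, \mathrm{Ber}(q)) \leq D_{KL}(\mathcal{P}_1 \,\|\, \mathcal{P}_2)$, so it suffices to prove the Bernoulli case.

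The Bernoulli case is a one-variable calculus exercise. Fixing $p \in (0,1)$, define
\begin{equation*}
  g(q) := p \log\frac{p}{q} + (1-p) \log\frac{1-p}{1-q} - 2 (p - q)^2 .
\end{equation*}
A short computation gives $g(p) = 0$ and
\begin{equation*}
  g'(q) = (p - q) \left( 4 - \frac{1}{q(1-q)} \right) .
\end{equation*}
Since $q(1-q) \leq \tfrac{1}{4}$ forces $4 - 1/(q(1-q)) \leq 0$, the sign of $g'(q)$ is opposite to the sign of $p - q$. Hence $g$ attains its minimum at $q = p$, giving $g(q) \geq 0$ for all $q \in (0,1)$, which is exactly the Bernoulli Pinsker inequality. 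Combining the two reductions yields $\norm{\mathcal{P}_1 - \mathcal{P}_2}_{TV}^2 = (p - q)^2 \leq \tfrac{1}{2} D_{KL}(\mathrm{Ber}(p) \,\|\, \mathrm{Ber}(q)) \leq \tfrac{1}{2} D_{KL}(\mathcal{P}_1 \,\|\, \mathcal{P}_2)$.

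The main obstacle is really only the calculus bookkeeping in the Bernoulli step; the reductions (optimal-set characterization of TV and data-processing for KL) are essentially definitional once one has the log-sum inequality. Edge cases where $q \in \{0, 1\}$ or where $\mathcal{P}_1 \not\ll \mathcal{P}_2$ need to be handled separately, but in those cases $D_{KL} = +\infty$ and the inequality is trivial, so they do not cause difficulty.
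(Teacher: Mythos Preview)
Your proof is correct and is precisely the standard argument from Tsybakov's book, which the paper merely cites without reproducing a proof. Since the paper states Pinsker's inequality as a known result with no accompanying argument, there is nothing to compare beyond noting that your Bernoulli-reduction route is the one the cited reference uses.
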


\section{Gentleness and quantum differential privacy}
\label{sec::gentleness}
Let us now introduce the concept of gentle measurements and see how they allow further processing when compared to destructive measurements. To do this, let us first demonstrate the impact of a general non-gentle measurement on a quantum state. Let $\ket{0}, \, \ket{1}$ be an arbitrary orthonormal basis of $\mathbb{C}^2$. We can then represent a pure state $\rho_{\eta} = \ket{\psi_{\eta}}\bra{\psi_{\eta}}$ for $\eta$ a complex number with $|\eta| \leq 1$ by the vector
\begin{equation*}
    \ket{\psi_{\eta}} = \eta \ket{0} + \sqrt{1 - |\eta|^2} \ket{1}.
\end{equation*}
Let us now see what happens when we measure the state in the computational basis i.e. using the measurement $M = (M_0, M_1)$, where $M_0 = \ket{0}\bra{0}$ and $M_1 = \ket{1}\bra{1}$. Applying the formulae from Definition~\ref{defn::quantum_measurement} shows that the two possible outcome states are
\begin{equation*}
    ({\rho_{\eta}})_{M \to 1} = \ket{1}\bra{1} \hspace{20pt} \text{and} \hspace{20pt} ({\rho_{\eta}})_{M \to 0} = \ket{0}\bra{0} 
\end{equation*}
with
\begin{equation*}
    \norm{\rho_{\eta} - ({\rho_{\eta}})_{M \to 1}}_{Tr}  = \eta \hspace{20pt} \text{and} \hspace{20pt} \norm{\rho_{\eta} - ({\rho_{\eta}})_{M \to 0}}_{Tr}  = \sqrt{1 - \eta^2}.
\end{equation*}
As such, for $|\eta|$ close to $0$ or $1$, one of the possible post-measurement states is almost maximally far away from the initial state. Since we do not know the state beforehand, we cannot exclude these edge cases from our model. Even more, the post-measurement state is completely unrelated to the state before the measurements, as the states after the measurements are independent of $\eta$. \footnote{\cite{aaronson2019gentle} speaks of "garbage" that is produced when measuring in such a way.} When we perform subsequent measurements of $M$, the outcome will always be $0$ (if the output state was $\ket{0}\bra{0}$) or always $1$ (if the output state was $\ket{1}\bra{1}$). It does not carry any useful information on the initial state $\rho_{\eta}$ anymore. Conversely, let us now assume that the measurement we perform is in such a way that, instead of completely collapsing the state $\rho_{\eta}$, the post measurement states are given by
\begin{equation}
\label{eqn::example_close_post_measurement_state}
    (\rho_{\eta})_{M \to y} = \rho_{\eta} + \Delta_{y, \alpha},
\end{equation}
where $\Delta_{y, \alpha} \in \mathbb{C}^{2 \times 2}$ is such that $\norm{\Delta_{y, \alpha}}_{Tr} \leq \alpha$. Let us see what now happens when we perform any subsequent measurement $\Tilde{M}$ on $(\rho_{\eta})_{M \to y}$. The outcome probabilities of the measurement $\tilde{M}$ will then be
\begin{align*}
    \mathcal{P}_{(\rho_{\eta})_{M \to y}}\left( R_{\tilde{M}} = \tilde{y} \right) &= \Tr\left( (\rho_{\eta})_{M \to y} \tilde{M}_{\tilde{y}}^* \tilde{M}_{\tilde{y}} \right)
    \\
    &= \Tr\left( \rho_{\eta} \tilde{M}_{\tilde{y}}^* \tilde{M}_{\tilde{y}} \right) + \Tr\left( \Delta_{y, \alpha} \tilde{M}_{\tilde{y}}^* \tilde{M}_{\tilde{y}} \right) 
    \\
    &= \mathcal{P}_{\rho_{\eta}}\left( R_{\tilde{M}} = \tilde{y} \right) + \Tr\left( \Delta_{y, \alpha} \tilde{M}_{\tilde{y}}^* \tilde{M}_{\tilde{y}} \right).
\end{align*}
Now, since we assumed $\norm{\Delta_{y, \alpha}}_{Tr} \leq \alpha$, we have
\begin{equation*}
    \left| \Tr\left( \Delta_{y, \alpha} \tilde{M}_{\tilde{y}}^* \tilde{M}_{\tilde{y}} \right) \right| \leq \alpha,
\end{equation*}
which shows that 
\begin{equation*}
    \mathcal{P}_{(\rho_{\eta})_{M \to y}}\left( R_{\tilde{M}} = \tilde{y} \right) = \mathcal{P}_{\rho_{\eta}}\left( R_{\tilde{M}} = \tilde{y} \right) + O(\alpha).
\end{equation*}
We see that under the assumption~\eqref{eqn::example_close_post_measurement_state} on the first measurement, the outcomes of the second measurement still carry information about the initial state. As such, we can gain information about the quantum state $\rho_{\eta}$ while retaining some information about it in the post-measurements state $(\rho_{\eta})_{M \to y}$ that allows for further quantum post-processing. Let us now formalize the above observation. 

\begin{definition}
\label{defn::gentleness}
    For a given \textit{gentleness parameter} $\alpha \in [0,1]$, a measurement $M$ is $\alpha$-gentle on a set $\mathcal{S}$ of quantum states if for all possible measurement outcomes $y$
    \begin{equation*}
        \norm{\rho - \rho_{M \to y}}_{Tr} \leq \alpha \hspace{20pt} \text{for all } \rho \in \mathcal{S}.
    \end{equation*}
    In that case, we write $M \in GM(\alpha, \mathcal{S})$. 
    
    If $\rho = \rho_1 \otimes ... \otimes \rho_n $ is a product state belonging to $ \mathcal{S}_1 \otimes ... \otimes \mathcal{S}_n =: \mathcal{S}^n$, we say that a measurement $M$ is locally-$\alpha$-gentle if it is a product measurement $M = M_1 \otimes ... \otimes M_n$ and $M_i$ is $\alpha$-gentle on $\mathcal{S}_i$ for all $i$. In that case, we write $M \in LGM(\alpha, \mathcal{S}^n)$.

    If $M$ is an arbitrary measurement of the product state $\rho$, that is $\alpha$-gentle, we write $M \in GM(\alpha, \mathcal{S}^n)$.
\end{definition}
By Busch's Theorem (see. \cite{Myrvold_2009}) we know that it is impossible to obtain information about a quantum system without disturbing it.\footnote{Busch's Theorem is a "No Free Lunch Theorem" in the quantum world. In a sense, we can think of this work as establishing the price of the lunch.} As such, the two extreme cases, $\alpha = 0$ and $\alpha = 1$, are covered by already known measurement techniques. That is, the case $\alpha = 0$ corresponds to the identity measurement that measures nothing. Here, we cannot infer anything on the system. The case $\alpha = 1$ corresponds to the case of non-gentle measurements without any restrictions. Here the sample optimality of measurement techniques is known \cite{Schmied_2016}. The goal of this paper is to bridge the gap between these two cases.

Considering globally gentle measurements in $GM(\alpha, \mathcal{S}^n)$ on the whole system (as in \cite{aaronson2019gentle}) requires access to (and manipulation of) a large system of multiple quantum states at the same time which is currently physically unfeasible or impossible in practice. In contrast, we only consider here local gentleness which better encapsulates the physical reality of quantum systems and their measurements since manipulations on individual subsystems are already physically possible.

\subsection{Properties of gentle measurements}
\label{sec::properties}
Let us develop some fundamental properties of gentle measurements. These properties will be crucial in proving lower bounds on the amount of copies needed for successful state certification and tomography. They are also instrumental for a better understanding of the differences between gentle and non-gentle measurements. We first note that when considering gentle measurements, we can restrict ourselves to gentleness on pure states. The proof can again be found in Section~\ref{sec::additonal_proofs}. 

\begin{lemma}
\label{lemma::3}
    If $M$ is an $\alpha$-gentle measurement on the set of pure states $\mathcal{S}_{pure}(\mathbb{C}^d)$ such that the $M_y$ are positive semi-definite, then $M$ is already $\alpha$-gentle on the whole set of states $\mathcal{S}(\mathbb{C}^d)$.
\end{lemma}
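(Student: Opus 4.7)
The plan is to bound $\|\rho - \rho_{M \to y}\|_{Tr}$ by passing through the quantum fidelity $F(\rho, \sigma) := \Tr\sqrt{\sqrt{\rho}\,\sigma\,\sqrt{\rho}}$, using the Fuchs--van de Graaf inequality $\|\rho - \sigma\|_{Tr} \leq \sqrt{1 - F(\rho, \sigma)^{2}}$ as the bridge. The positivity of $M_y$ lets me compute $F(\rho, \rho_{M \to y})$ in closed form, and I will show that the resulting expression depends on $\rho$ only through its diagonal in the eigenbasis of $M_y$ --- a diagonal that is always realised by some genuine pure state. That observation is what will transfer the pure-state hypothesis to all of $\mathcal{S}(\mathbb{C}^{d})$.

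First, because $M_y \geq 0$, the operator $C := \sqrt{\rho}\, M_y \sqrt{\rho}$ is itself positive semi-definite and satisfies $\sqrt{\rho}\,(M_y \rho M_y)\sqrt{\rho} = C^{2}$, which yields the explicit formula
\begin{equation*}
F(\rho, \rho_{M \to y}) \;=\; \frac{\Tr C}{\sqrt{p_{\rho}^{R_M}(y)}} \;=\; \frac{\Tr(M_y \rho)}{\sqrt{\Tr(M_y^{2} \rho)}}.
\end{equation*}
Writing $M_y = \sum_j \nu_j \ket{e_j}\bra{e_j}$ with $\nu_j \geq 0$ and setting $\rho_{jj} := \bra{e_j}\rho\ket{e_j}$, numerator and denominator become $\sum_j \nu_j \rho_{jj}$ and $\sum_j \nu_j^{2}\rho_{jj}$ respectively, so both depend on $\rho$ only through the probability vector $(\rho_{jj})_j$. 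Defining the pure state $\ket{\psi_\rho} := \sum_j \sqrt{\rho_{jj}}\,\ket{e_j}$ one checks $\bra{\psi_\rho} M_y^{k}\ket{\psi_\rho} = \Tr(M_y^{k}\rho)$ for $k \in \{1, 2\}$, hence
\begin{equation*}
F(\rho, \rho_{M \to y}) \;=\; F\bigl(\ket{\psi_\rho}\bra{\psi_\rho},\, (\ket{\psi_\rho}\bra{\psi_\rho})_{M \to y}\bigr).
\end{equation*}

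Both $\ket{\psi_\rho}\bra{\psi_\rho}$ and its post-measurement state are pure, so Lemma~\ref{lemmaTraceNorm} identifies the right-hand trace distance with $\sqrt{1 - F^{2}}$; the pure-state gentleness hypothesis applied at $\ket{\psi_\rho}$ therefore forces $F(\rho, \rho_{M \to y})^{2} \geq 1 - \alpha^{2}$, and Fuchs--van de Graaf concludes $\|\rho - \rho_{M \to y}\|_{Tr} \leq \alpha$. The main obstacle, and the step that genuinely uses positivity, is the closed-form fidelity: without $M_y \geq 0$ one would have $\sqrt{C^{2}} = |C|$ rather than $C$, and the clean reduction to the diagonal of $\rho$ would collapse. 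A naive convexity argument also fails here because the post-measurement map $\rho \mapsto \rho_{M \to y}$ is nonlinear through the normalisation $p_{\rho}^{R_M}(y) = \Tr(M_y^{2}\rho)$, so a decomposition of $\rho$ into pure states does not push to a matching decomposition of $\rho_{M \to y}$, ruling out joint convexity of the trace norm as a direct shortcut.
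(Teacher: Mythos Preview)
Your proof is correct and takes a genuinely different route from the paper's. The paper purifies $\rho$ on $\mathbb{C}^d \otimes \mathbb{C}^d$, extends the measurement to $M_y \otimes \mathrm{Id}$, and then uses contractivity of the trace norm under the partial trace to reduce the mixed-state bound to a pure-state bound on the enlarged space; it then shows, via the eigenbasis of $M_y$, that the resulting ratio $|\bra{\Psi}(M_y\otimes\mathrm{Id})\ket{\Psi}|^2 / \bra{\Psi}(M_y^2\otimes\mathrm{Id})\ket{\Psi}$ coincides with the corresponding ratio for some pure state in $\mathbb{C}^d$. You instead stay on $\mathbb{C}^d$ the whole time, compute the fidelity $F(\rho,\rho_{M\to y})$ in closed form by exploiting $\sqrt{\rho}M_y\sqrt{\rho}\geq 0$, and invoke Fuchs--van~de~Graaf to pass back to the trace norm. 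Both arguments pivot on the same structural observation---that $\Tr(M_y\rho)$ and $\Tr(M_y^2\rho)$ depend only on the diagonal of $\rho$ in the eigenbasis of $M_y$, a diagonal always realised by a pure state---and both use positivity of $M_y$ at precisely this point. Your version is more self-contained (no auxiliary Hilbert space) but imports Fuchs--van~de~Graaf; the paper's version trades that import for a purification detour, using only Lemma~\ref{lemmaTraceNorm} and monotonicity of the trace norm, tools already available in the text.
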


We see that the pure states, which constitute the extremal points of the convex set of states, are also being maximally altered by gentle measurements. Since pure states are often easier to work with than mixed states, studying the gentleness of a given measurement is significantly simplified by calculating only the gentleness on pure states. \medskip

We now consider the relation between gentle-measurements and quantum differentially-private measurements.

\begin{definition}
\label{defn::quantum_differential_privacy}
    A measurement $M $ is said to be $\delta$-quantum differentially-private ($\delta$-qDP) for $\delta>0$ on a set $\mathcal{S}$ of quantum states, if for the outcome probabilities of any two states $\rho_1, \rho_2 $ in $\mathcal{S}$ under $M$ it holds
    \begin{equation}
    \label{eqn::quantum_privacy_equation}
        \mathcal{P}_{1}(R_{M} = y) \leq e^{\delta} \mathcal{P}_2(R_{M} = y) \hspace{20pt} \text{for all } y \in \mathcal{Y}.
    \end{equation}
    As with gentleness, we call a measurement $M = M_1 \otimes ... \otimes M_n$ locally $\delta$-differentially private on $ \mathcal{S}_1 \otimes ... \otimes \mathcal{S}_n =: \mathcal{S}^n$ if each $M_i$ is differentially private on $\mathcal{S}_i$.
\end{definition}

\begin{remark}
\label{rem::privacy_as_eigenvalue_ratio}
    Since $\mathcal{P}_{\ket{\psi}}(R_M = y) = \bra{\psi}M_y^*M_y\ket{\psi}$ and $\lambda_{min}(M_y^*M_y) \leq \bra{\psi}M_y^*M_y\ket{\psi} \leq \lambda_{max}(M_y^*M_y)$ the probabilities $\mathcal{P}_i(R_{M} = y)$ are bounded from above and below by the maximal and minimal eigenvalues of $M_y^*M_y$ respectively. As such, if $\mathcal{S} = \mathcal{S}(\mathbb{C}^d)$, we can show quantum differential privacy by bounding the ratio of eigenvalues of $M_y^*M_y$. As such, for $\delta$ quantum-differentially private measurements $M = (M_y)_{y \in \mathcal{Y}}$ on $\mathcal{S}(\mathbb{C}^d)$, we have
    \begin{equation*}
        \frac{\lambda_{max}(M_y^*M_y)}{\lambda_{min}(M_y^*M_y)} \leq e^{\delta}.
    \end{equation*}
\end{remark}

Since local gentleness and local quantum Differential Privacy are defined separately for every register, it suffices to check \eqref{eqn::quantum_privacy_equation} for each register independently to show the latter.

\begin{lemma}
\label{lem::Gentle_eigenvalue_ratio}
    Let $\alpha $ in $[0, \frac{1}{2})$ and $M$ be $\alpha$-gentle on $\mathcal{S}(\mathbb{C}^d)$ with measurement operators $M_y$. Then $M$ is $\delta$ quantum differentially-private on $\mathcal{S}(\mathbb{C}^d)$ for $\delta = 2\log(\frac{1+2\alpha}{1-2\alpha})$. 
\end{lemma}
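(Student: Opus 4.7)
By Remark~\ref{rem::privacy_as_eigenvalue_ratio}, showing $\delta$-qDP on $\mathcal{S}(\mathbb{C}^d)$ is equivalent to the eigenvalue ratio bound $\lambda_{\max}(E_y)/\lambda_{\min}(E_y) \leq e^\delta = \left(\frac{1+2\alpha}{1-2\alpha}\right)^2$ for every outcome $y$, where $E_y := M_y^*M_y$. My plan is to establish the (stronger) square-root bound $\lambda_{\max}(E_y)/\lambda_{\min}(E_y) \leq \frac{1+2\alpha}{1-2\alpha}$. Since this quantity exceeds $1$ whenever $\alpha \in [0,1/2)$, it is automatically at most its own square, so the claim follows.

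Fix $y$, and let $\ket{v_+}, \ket{v_-}$ be unit eigenvectors of $E_y$ for $\lambda_+ := \lambda_{\max}(E_y)$ and $\lambda_- := \lambda_{\min}(E_y)$. The test state is the mixed state
\[
\rho := \tfrac{1}{2}\bigl(\ket{v_+}\bra{v_+} + \ket{v_-}\bra{v_-}\bigr) \in \mathcal{S}(\mathbb{C}^d),
\]
to which gentleness applies by hypothesis. Set $\ket{w_\pm} := M_y\ket{v_\pm}/\sqrt{\lambda_\pm}$; because $\bra{w_+}\ket{w_-} = \bra{v_+}E_y\ket{v_-}/\sqrt{\lambda_+\lambda_-} = 0$ by orthogonality of the eigenvectors, the vectors $\ket{w_+},\ket{w_-}$ are orthonormal, and the post-measurement state of $\rho$ is
\[
\rho_{M\to y} = \mu_+\ket{w_+}\bra{w_+} + \mu_-\ket{w_-}\bra{w_-}, \qquad \mu_\pm := \lambda_\pm/(\lambda_++\lambda_-).
\]

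The crux is a lower bound on $\|\rho - \rho_{M\to y}\|_{Tr}$ that does not depend on the relative geometric position of the two orthonormal pairs in $\mathbb{C}^d$. I use the dual characterization $\|\rho - \sigma\|_{Tr} \geq \Tr(P(\rho-\sigma))$, valid for any orthogonal projector $P$, with the specific choice $P := \mathbbm{1} - \ket{w_+}\bra{w_+}$. A one-line computation gives $\Tr(P\rho_{M\to y}) = \mu_-$, while Bessel's inequality applied to the orthonormal pair $\ket{v_+},\ket{v_-}$ and the unit vector $\ket{w_+}$ yields $|\bra{w_+}\ket{v_+}|^2 + |\bra{w_+}\ket{v_-}|^2 \leq 1$, so
\[
\Tr(P\rho) = 1 - \tfrac{1}{2}\bigl(|\bra{w_+}\ket{v_+}|^2 + |\bra{w_+}\ket{v_-}|^2\bigr) \geq \tfrac{1}{2}.
\]
Combining with gentleness of $M$ on $\rho$,
\[
\frac{\lambda_+ - \lambda_-}{2(\lambda_+ + \lambda_-)} = \tfrac{1}{2} - \mu_- \leq \|\rho - \rho_{M\to y}\|_{Tr} \leq \alpha,
\]
which rearranges to $\lambda_+(1-2\alpha) \leq \lambda_-(1+2\alpha)$, as desired. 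The main obstacle is the lower bound in the displayed inequality: for $d > 2$ the vectors $\ket{w_\pm}$ can have nontrivial components outside $\operatorname{span}(\ket{v_+},\ket{v_-})$, ruling out a naive restriction to a $2\times 2$ block, but the single dual witness $P$ above converts this geometric complication into the one-line Bessel estimate.
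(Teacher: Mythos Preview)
Your proof is correct and takes a genuinely different route from the paper's. The paper argues by convex interpolation: given two states $\rho_1,\rho_2$ at maximal trace distance, it applies gentleness simultaneously to $\rho_1$, $\rho_2$, and $\rho_\lambda=\lambda\rho_1+(1-\lambda)\rho_2$, then uses the triangle inequality to control the mixture weight shift, finally optimising over $\lambda$. You instead pick a single test state---the equal mixture of the extremal eigenvectors of $E_y$---and lower bound $\|\rho-\rho_{M\to y}\|_{Tr}$ via one projector witness $P=\mathbbm{1}-\ket{w_+}\bra{w_+}$, reducing the geometric content to a Bessel inequality.

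Your approach is shorter and in fact yields a strictly stronger conclusion: you obtain $\lambda_{\max}(E_y)/\lambda_{\min}(E_y)\le\frac{1+2\alpha}{1-2\alpha}$, i.e.\ $\log\frac{1+2\alpha}{1-2\alpha}$-qDP, which is a factor~$2$ better in the exponent than the stated lemma. This sits strictly between the paper's Lemma~\ref{lem::Gentle_eigenvalue_ratio} and Lemma~\ref{lem::improved_constant_positivity} (the latter needing positivity of the $M_y$), and is consistent with the qLS example, whose $E_y$-ratio is $\bigl(\frac{1+\alpha}{1-\alpha}\bigr)^2<\frac{1+2\alpha}{1-2\alpha}$. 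Two small points worth tightening: (i) Remark~\ref{rem::privacy_as_eigenvalue_ratio} as written only gives the implication qDP $\Rightarrow$ eigenvalue ratio; the converse you need follows immediately from Lemma~\ref{lem::Help1}(i). (ii) Your definition of $\ket{w_-}$ presumes $\lambda_->0$; in the degenerate case $\lambda_-=0$ the post-measurement state is simply $\ket{w_+}\bra{w_+}$, and the same witness $P$ then gives $\alpha\ge 1/2$, a contradiction---so $\lambda_->0$ follows and the main argument applies.
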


\begin{proof}
    Assume that $M = (M_y)_{y \in \mathcal{Y}}$ is an $\alpha$-gentle measurement on $\mathcal{S}(\mathbb{C}^d)$. Define $E_y := M_y^*M_y$. Let $\rho_1, \rho_2 \in \mathcal{S}(\mathbb{C}^d)$ s.t. $\norm{\rho_1 - \rho_2}_{Tr} = 1$. Let us further denote by $p_1$ (respectively $p_2$) the probability of obtaining outcome $y$ under $\rho_1$ (respectively $\rho_2$). Without loss of generality we assume that $p_1 > p_2 \geq 0$. Now, let
    \begin{equation*}
        \rho_{\lambda} = \lambda \rho_1 + (1-\lambda) \rho_2 \hspace{10pt} \text{for all } \lambda \in (0,1).
    \end{equation*}
    The probability of obtaining the outcome $y$ when measuring $\rho_{\lambda}$ is
    \begin{equation*}
        p_{\lambda} = \mathcal{P}_{\lambda}(R_M = y) = \Tr\left( \rho_{\lambda} E_y \right) = \lambda \Tr\left( \rho_1 E_y \right) + (1 - \lambda) \Tr\left( \rho_2 E_y \right) = \lambda p_1 + (1 - \lambda) p_2.
    \end{equation*}
    The post-measurement state of $\rho_{\lambda}$ is then given by
    \begin{equation*}
        (\rho_{\lambda})_{M \to y} = \frac{1}{p_{\lambda}} M_y \rho_{\lambda} M_y^* = \frac{\lambda p_1 (\rho_1)_{M \to y} + (1-\lambda) p_2 (\rho_2)_{M \to y}}{\lambda p_1 + (1-\lambda) p_2}.
    \end{equation*}
    Now if we define $\delta = \frac{\lambda p_1}{\lambda p_1 + (1-\lambda) p_2} - \lambda > 0$, we get
    \begin{align*}
        \rho_{\lambda} - (\rho_{\lambda})_{M \to y} = \frac{\lambda p_1}{p_{\lambda}} \left((\rho_1 - (\rho_1)_{M \to y}\right) + \frac{(1-\lambda) p_2}{p_{\lambda}} \left(\rho_2 - (\rho_2)_{M \to y}\right) + \delta \left(\rho_2 - \rho_1\right)).
    \end{align*}
    By the triangle inequality and gentleness we now have
    \begin{align*}
        \delta \norm{\rho_2 - \rho_1}_{Tr} \leq \frac{\lambda p_1}{p_{\lambda}} \alpha + \frac{(1-\lambda) p_2}{p_{\lambda}} \alpha + \alpha = 2 \alpha
    \end{align*}
    Since we further assumed $\norm{\rho_2 - \rho_1}_{Tr} = 1$, we get $\delta \leq 2 \alpha$. This allows us to write
    \begin{align*}
        p_1 = \frac{\lambda - \lambda^2 + \delta (1-\lambda)}{\lambda - \lambda^2 - \delta \lambda} p_2 \leq \frac{\lambda - \lambda^2 + 2\alpha (1-\lambda)}{\lambda - \lambda^2 - 2 \alpha \lambda} p_2 \hspace{10pt} \text{for all } \lambda \in (0,1 - 2 \alpha).
    \end{align*}
    The last inequality only holds as long as the denominator is positive which is the case for $\lambda < 1 - 2 \alpha$. As such, for $\lambda_0 = \frac{1 - 2 \alpha}{2} < 1 - 2 \alpha$, we obtain
    \begin{equation*}
        p_1 \leq \frac{\lambda_0 - \lambda_0^2 + 2\alpha (1-\lambda_0)}{\lambda_0 - \lambda_0^2 - 2 \alpha \lambda_0} p_2 = \left( \frac{1 + 2\alpha}{1 - 2\alpha} \right)^2 p_2.
    \end{equation*}
    Now, since we started with $\rho_1, \rho_2$ such that $\norm{\rho_1 - \rho_2}_{Tr} = 1$, the last relation holds for every pure state. As such $M$ is $2 \log\left( \frac{1 + 2\alpha}{1 - 2\alpha} \right)$-quantum differentially-private on pure states. Lemma~\ref{lem::Help1} then proves that $M$ is $2 \log\left( \frac{1 + 2\alpha}{1 - 2\alpha} \right)$-quantum differential private on the whole space.
\end{proof}

Lemma \ref{lem::Gentle_eigenvalue_ratio} is an improvement over the relation shown in \cite{aaronson2019gentle}. It not only further sharpens the constant $\delta$ but also extends the applicability of the result from $\alpha $ in $[0,\frac{1}{4})$ to $\alpha$ in $[0, \frac{1}{2})$. 

\begin{corollary}
    \label{lemma::1}
    Let $M$ be $\alpha$-gentle on $\mathcal{S}(\mathbb{C}^d)$ with measurement operators $M_y$ for $\alpha $ in $[0, 1)$. Then $M_y$ has full rank.
\end{corollary}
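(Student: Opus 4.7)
The plan is to prove the contrapositive: if $M_y$ has rank $r<d$ (with $r\geq 1$; outcomes corresponding to $M_y=0$ occur with probability zero for every state and can be discarded), then $M$ fails to be $\alpha$-gentle for any $\alpha<1$. The structural observation to exploit is that whenever $M_y\ket{\psi}\neq 0$, the post-measurement representative $\ket{\phi}\propto M_y\ket{\psi}$ lies in $\mathrm{Range}(M_y)$, a proper subspace of $\mathbb{C}^d$. Since both $\ket{\psi}\bra{\psi}$ and $\ket{\phi}\bra{\phi}$ are pure, Lemma~\ref{lemmaTraceNorm} reduces the trace-distance condition $\norm{\rho-\rho_{M\to y}}_{Tr}\leq\alpha$ to the overlap constraint $|\bra{\psi}\ket{\phi}|^2\geq 1-\alpha^2>0$.

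I would then split into two cases depending on whether $\ker M_y^*=\ker M_y$ (both subspaces have dimension $d-r$). In the \emph{non-range-symmetric} case $\ker M_y^*\neq\ker M_y$, the equality of dimensions forces the existence of a unit vector $\ket{\psi}\in\ker M_y^*\setminus\ker M_y$. Then $M_y^*\ket{\psi}=0$ yields $\bra{\psi}M_y\ket{\psi}=0$ and hence $\bra{\psi}\ket{\phi}=0$, while $M_y\ket{\psi}\neq 0$ ensures outcome $y$ has positive probability. The trace distance therefore equals $1$, contradicting $\alpha$-gentleness for any $\alpha<1$.

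In the \emph{range-symmetric} case $\ker M_y=\ker M_y^*$, the subspaces $\ker M_y$ and $\mathrm{Range}(M_y)$ become orthogonal complements. I choose unit vectors $\ket{v}\in\ker M_y$ and $\ket{u}\in\mathrm{Range}(M_y)$, and set $\ket{\psi_\epsilon}=\sqrt{1-\epsilon^2}\ket{v}+\epsilon\ket{u}$. Then $M_y\ket{\psi_\epsilon}=\epsilon M_y\ket{u}\neq 0$, giving positive outcome probability, and the post-measurement vector $\ket{\phi}\propto M_y\ket{u}\in\mathrm{Range}(M_y)$ is orthogonal to $\ket{v}$. Hence $|\bra{\psi_\epsilon}\ket{\phi}|^2=\epsilon^2|\bra{u}\ket{\phi}|^2\leq\epsilon^2$, so the trace distance is at least $\sqrt{1-\epsilon^2}$, exceeding any prescribed $\alpha<1$ once $\epsilon$ is taken small enough.

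The main subtlety is the need for the case split: the perturbation of the second case relies on $\ket{v}\in\ker M_y$ being orthogonal to $\mathrm{Range}(M_y)$, which fails in general when $M_y$ is not range-symmetric; conversely, the direct choice $\ket{\psi}\in\ker M_y^*\setminus\ker M_y$ is only available when the two kernels differ. An alternative and shorter route for $\alpha\in[0,\tfrac{1}{2})$ is to combine Lemma~\ref{lem::Gentle_eigenvalue_ratio} with Remark~\ref{rem::privacy_as_eigenvalue_ratio} to bound $\lambda_{\max}(M_y^*M_y)/\lambda_{\min}(M_y^*M_y)$ and thereby force $\lambda_{\min}(M_y^*M_y)>0$ whenever $M_y\neq 0$; the two-case argument above is needed to cover the full range $\alpha\in[0,1)$ claimed in the statement.
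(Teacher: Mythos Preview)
Your argument is correct and in fact more complete than the paper's. The paper proves the corollary in one line by invoking Remark~\ref{rem::privacy_as_eigenvalue_ratio}: $\alpha$-gentleness implies $\delta$-quantum differential privacy via Lemma~\ref{lem::Gentle_eigenvalue_ratio}, whence $\lambda_{\max}(E_y)/\lambda_{\min}(E_y)\leq e^{\delta}<\infty$ forces $\lambda_{\min}(E_y)>0$ and so $E_y=M_y^*M_y$ (hence $M_y$) has full rank. As you point out at the end of your proposal, this route only covers $\alpha\in[0,\tfrac12)$, because Lemma~\ref{lem::Gentle_eigenvalue_ratio} is stated only for that range; the paper gives no separate argument for $\alpha\in[\tfrac12,1)$. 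Your direct construction---producing, for any rank-deficient $M_y$, a pure input whose post-measurement state is orthogonal or nearly orthogonal to it---works for the full interval $[0,1)$ without passing through differential privacy. The case split on whether $\ker M_y=\ker M_y^*$ is genuinely needed in your approach: the perturbation in the second case requires the kernel vector $\ket{v}$ to be orthogonal to $\mathrm{Range}(M_y)=(\ker M_y^*)^\perp$, which is guaranteed only when the two kernels coincide; and the direct witness $\ket{\psi}\in\ker M_y^*\setminus\ker M_y$ of the first case is available only when they differ. So the paper's proof is shorter but, read literally, leaves a gap for $\alpha\geq\tfrac12$; your argument closes it.
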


This is a direct consequence of Remark~\ref{rem::privacy_as_eigenvalue_ratio} as the rank of $M_y$ is the same as the rank of $E_y = M_y^*M_y$ whose eigenvalues are all non-zero. Corollary \ref{lemma::1} already severely reduces the amount of possible gentle measurements. It also shows that many of the well known measurements, such as PVMs and mutually unbiased basis measurements, cannot be gentle. This means that we have to develop new measurement techniques and cannot directly rely on existing results. 

If we assume that the measurement operators $M_y$ are itself hermitian and positive-definite (which we call positive for short) the result of Lemma \ref{lem::Gentle_eigenvalue_ratio} can be further improved as follows.

\begin{lemma}
\label{lem::improved_constant_positivity}
    Let $\alpha $ in $[0, 1)$ and $M$ be $\alpha$-gentle on $\mathcal{S}(\mathbb{C}^d)$ with positive-definite measurement operators $M_y$. Then $M$ is $\delta$ quantum differentially-private on $\mathcal{S}(\mathbb{C}^d)$ for $\delta = 2\log(\frac{1+\alpha}{1-\alpha})$.
\end{lemma}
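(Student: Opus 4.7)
The plan is to exploit positivity of $M_y$ to give a sharper spectral-ratio bound and then invoke Remark~\ref{rem::privacy_as_eigenvalue_ratio}. The key observation is that when $M_y$ is hermitian and positive definite, we have $M_y = M_y^* = E_y^{1/2}$ so $M_y$ and $E_y := M_y^* M_y$ are simultaneously diagonalizable; in particular, the largest/smallest eigenvalues $\mu_{\max}, \mu_{\min}$ of $M_y$ correspond to the largest/smallest eigenvalues of $E_y$ via $\lambda_{\max}(E_y) = \mu_{\max}^2$ and $\lambda_{\min}(E_y) = \mu_{\min}^2$.

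The first step is to translate $\alpha$-gentleness into a pointwise inequality in terms of $M_y$. For any pure state $|\psi\rangle$, the post-measurement state is also pure, $|\psi\rangle_{M\to y} = M_y|\psi\rangle/\sqrt{p}$ with $p = \langle\psi|M_y^2|\psi\rangle$. By Lemma~\ref{lemmaTraceNorm}, gentleness is equivalent to $|\langle\psi|\psi\rangle_{M\to y}|^2 \geq 1 - \alpha^2$, which after multiplying out becomes
\begin{equation*}
\bigl(\langle\psi| M_y |\psi\rangle\bigr)^2 \;\geq\; (1-\alpha^2)\, \langle\psi| M_y^2 |\psi\rangle \qquad \text{for every unit }|\psi\rangle.
\end{equation*}

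Next, I restrict to a two-dimensional invariant subspace spanned by unit eigenvectors of $M_y$ for $\mu_{\max}$ and $\mu_{\min}$ and take $|\psi\rangle$ a real superposition with weights $t$ and $1-t$. The above inequality becomes
\begin{equation*}
(t\mu_{\max} + (1-t)\mu_{\min})^2 \;\geq\; (1-\alpha^2)\bigl(t\mu_{\max}^2 + (1-t)\mu_{\min}^2\bigr) \qquad \text{for all } t\in[0,1].
\end{equation*}
Writing this as $f(t) \geq 0$ and minimizing in $t$ (the minimum lies inside $[0,1]$ exactly in the regime $r := \mu_{\max}/\mu_{\min} \geq (1+\alpha^2)/(1-\alpha^2)$ where the bound is non-trivial), a direct calculation produces the necessary condition $2\alpha \geq (1-\alpha^2)(r+1) - 2$. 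Factoring $1-\alpha^2 = (1-\alpha)(1+\alpha)$ and cancelling $(1+\alpha)$ yields $r \leq (1+\alpha)/(1-\alpha)$.

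Squaring gives $\lambda_{\max}(E_y)/\lambda_{\min}(E_y) \leq \bigl((1+\alpha)/(1-\alpha)\bigr)^2$ for every outcome $y$, so by Remark~\ref{rem::privacy_as_eigenvalue_ratio} the measurement is $\delta$-qDP on pure states with $\delta = 2\log\bigl((1+\alpha)/(1-\alpha)\bigr)$, and Lemma~\ref{lem::Help1}(i) extends this bound to all of $\mathcal{S}(\mathbb{C}^d)$. The main obstacle is not the algebra itself but verifying that the optimal $t^\ast$ lies in $[0,1]$ and handling the edge case when it does not, which is resolved because there the worst $t$ is at an endpoint and then the constraint is trivially satisfied for any finite $r$ bounded by that threshold.
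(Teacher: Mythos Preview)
Your proof is correct and follows essentially the same strategy as the paper: restrict to the two-dimensional subspace spanned by eigenvectors for the extreme eigenvalues of $M_y$, use Lemma~\ref{lemmaTraceNorm} to turn $\alpha$-gentleness into an inequality between $\langle\psi|M_y|\psi\rangle^2$ and $\langle\psi|M_y^2|\psi\rangle$, and deduce the eigenvalue-ratio bound needed for Remark~\ref{rem::privacy_as_eigenvalue_ratio}. The only difference is executional: the paper short-circuits your optimization by directly exhibiting the worst state $|\psi\rangle=(\sqrt{\lambda_d}\,|v_1\rangle+\sqrt{\lambda_1}\,|v_d\rangle)/\sqrt{\lambda_1+\lambda_d}$ (in your parametrization this is $t=\mu_{\min}/(\mu_{\max}+\mu_{\min})$, the minimizer of $\langle\psi|M_y|\psi\rangle^2/\langle\psi|M_y^2|\psi\rangle$), for which the trace distance to the post-measurement state is exactly $(r-1)/(r+1)$, immediately giving $r\le(1+\alpha)/(1-\alpha)$ without the case analysis on $t^\ast$.
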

\begin{proof}
    Let $M_y = \sum_{i = 1}^d \lambda_i \ket{v_i}\bra{v_i}$ be positive-definite with maximal and minimal eigenvalue $\lambda_1$ and $\lambda_d$ respectively. Consider the gentleness of $M_y$ on the pure state $\ket{\psi} = \frac{1}{\sqrt{\lambda_1 + \lambda_d}} (\sqrt{\lambda_d} \ket{v_1} + \sqrt{\lambda_1} \ket{v_d})$. Using the Lemma \ref{lemmaTraceNorm} then gives 
    \begin{equation*}
        \frac{\lambda_1}{\lambda_d} \leq \frac{1 + \alpha}{1- \alpha}.
    \end{equation*}
    The result then follows using Remark~\ref{rem::privacy_as_eigenvalue_ratio}.
\end{proof}
Since $M_y^*M_y = |M_y|^2$, a measurement consisting of positive operators $M_y$ has the same output probabilities as the version where the measurement operators are not assumed to be positive. We conjecture that the measurement consisting of the operators $|M_y|$ has at least the same gentleness as the measurement given by the operators $M_y$. Under the assumptions of Lemma \ref{lem::improved_constant_positivity} the constant $\delta$ is asymptotically sharp as $\alpha \to 0$. The qLS mechanism we define attains this value. 

\subsection{Gentle-izing a two outcome measurement}
We will now demonstrate shortly how we can create a gentle measurement by modifying known non-gentle measurements. Given the relation between gentleness and differential privacy, we may use known mechanisms from local differential privacy to define gentle measurements. Since the resulting random variable of any two-outcome measurements is Bernoulli distributed, we may define measurements based on the label switching mechanism for Bernoulli random variables (that is shown to be optimal \cite{steinberger2023efficiency}) in order to develop gentle measurements. The technique works for two outcome measurements in any dimension and we will make use of the following lemma several times throughout this paper.
\begin{lemma}
\label{lem::qLS_gentleness}
Let $\alpha > 0$ and $M = (P_1, P_0)$ be a two-outcome PVM. For $\delta = 4 \arctanh(\alpha)$ we define the measurement $M_{\alpha} = (M_{\alpha,1}, M_{\alpha, 0})$ by
\begin{align*}
    M_{\alpha, 1} = \sqrt{\frac{e^{\delta}}{e^{\delta}+1}}  P_1 + \sqrt{\frac{1}{e^{\delta}+1}} P_0 \hspace{20pt} \text{and} \hspace{20pt} M_{\alpha, 0} = \sqrt{\frac{1}{e^{\delta}+1}}  P_1 + \sqrt{\frac{e^{\delta}}{e^{\delta}+1}} P_0.
\end{align*}
Then $M_{\alpha}$ is $\alpha$ gentle on $\mathcal{S}(\mathbb{C}^d)$ for all $\alpha \in [0, 1]$, that is $M_{\alpha} $ belongs to $ GM(\alpha, \mathcal{S}(\mathbb{C}^d))$.
\end{lemma}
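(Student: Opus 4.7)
The plan is to reduce to pure states via Lemma \ref{lemma::3} and then carry out a one-parameter optimization. First, observe that $M_{\alpha,1}$ and $M_{\alpha,0}$ are self-adjoint and positive semi-definite: because $P_1, P_0$ are orthogonal projectors summing to $\mathbbm{1}$, the operator $aP_1 + bP_0$ (with $a = \sqrt{e^\delta/(e^\delta+1)}$, $b = \sqrt{1/(e^\delta+1)}$) has eigenvalues $a$ and $b$, both positive. A short verification that $M_{\alpha,1}^2 + M_{\alpha,0}^2 = (a^2+b^2)(P_1+P_0) = \mathbbm{1}$ (using $P_0 P_1 = 0$, $P_i^2 = P_i$) confirms $M_\alpha$ is a measurement. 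By Lemma \ref{lemma::3}, it then suffices to prove gentleness on pure states.

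For an arbitrary pure state $\rho = \ket{\psi}\bra{\psi}$, I decompose $\ket{\psi} = c_1\ket{\psi_1} + c_0\ket{\psi_0}$ with $\ket{\psi_i}$ the normalized projection of $\ket{\psi}$ onto $\operatorname{ran}(P_i)$ and $|c_i|^2 = \bra{\psi}P_i\ket{\psi}$; set $p := |c_1|^2$. Applying $M_{\alpha,1} = aP_1 + bP_0$ gives $M_{\alpha,1}\ket{\psi} = ac_1\ket{\psi_1} + bc_0\ket{\psi_0}$ with squared norm $p_1 = a^2 p + b^2(1-p)$, so
\begin{equation*}
\bra{\psi}\psi_{M_\alpha \to 1}\rangle = \frac{a|c_1|^2 + b|c_0|^2}{\sqrt{p_1}} = \frac{ap + b(1-p)}{\sqrt{a^2 p + b^2(1-p)}}.
\end{equation*}
Then Lemma \ref{lemmaTraceNorm} yields
\begin{equation*}
\norm{\rho - \rho_{M_\alpha \to 1}}_{Tr}^2 = 1 - \frac{(ap + b(1-p))^2}{a^2 p + b^2(1-p)} = \frac{p(1-p)(a-b)^2}{a^2 p + b^2(1-p)},
\end{equation*}
after using $a^2 + b^2 = 1$ to simplify the numerator.

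The main remaining step is to show this quantity is at most $\alpha^2$ for every $p \in [0,1]$. I would maximize $f(p) = p(1-p)(a-b)^2/(a^2 p + b^2(1-p))$ in $p$; the stationarity condition $(a^2-b^2)p^2 + 2b^2 p - b^2 = 0$ gives the interior maximizer $p^* = b/(a+b)$. Plugging back produces the clean expression $f(p^*) = (a-b)^2/(a+b)^2$. With the choice $\delta = 4\arctanh(\alpha)$, i.e.\ $e^{\delta/2} = (1+\alpha)/(1-\alpha)$, a direct computation gives $ab = (1-\alpha^2)/(2(1+\alpha^2))$, hence $(a-b)^2 = 2\alpha^2/(1+\alpha^2)$ and $(a+b)^2 = 2/(1+\alpha^2)$, so $(a-b)^2/(a+b)^2 = \alpha^2$ exactly. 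This yields $\norm{\rho - \rho_{M_\alpha \to 1}}_{Tr} \leq \alpha$.

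For outcome $0$, exchanging $a \leftrightarrow b$ (equivalently $p \leftrightarrow 1-p$) leaves the above maximum unchanged, so $\norm{\rho - \rho_{M_\alpha \to 0}}_{Tr} \leq \alpha$ as well. By Lemma \ref{lemma::3}, this extends from pure states to all of $\mathcal{S}(\mathbb{C}^d)$, giving $M_\alpha \in GM(\alpha, \mathcal{S}(\mathbb{C}^d))$. The only genuinely non-routine part is recognising that the one-variable maximum collapses to the neat form $(a-b)^2/(a+b)^2$ and that the calibration $\delta = 4\arctanh\alpha$ was precisely tuned to make this equal $\alpha^2$; everything else is bookkeeping with the orthogonal decomposition along $P_1$ and $P_0$.
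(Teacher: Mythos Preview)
Your proof is correct and follows essentially the same approach as the paper's: reduce to pure states via Lemma~\ref{lemma::3}, use Lemma~\ref{lemmaTraceNorm} to express the trace distance as a function of a single parameter (your $p$ is the paper's $\gamma_+$), and optimize to find the maximum $\tanh^2(\delta/4)=\alpha^2$. Your use of the abbreviations $a,b$ and the observation that the maximum simplifies to $(a-b)^2/(a+b)^2$ is a slightly cleaner bookkeeping than the paper's, but the argument is the same.
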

\begin{proof}
    Since $P_1$ and $P_0$ are orthogonal projectors that fullfil the resolution of the identity we can find an orthonormal basis $(\ket{k})_{k = 1,...,d}$ with
    \begin{equation*}
        P_1 = \sum_{k = 1}^{d_+} \ket{k}\bra{k} \hspace{20pt} \text{and} \hspace{20pt} P_0 = \sum_{k = d_+ + 1}^{d} \ket{k}\bra{k}.
    \end{equation*}
    We will now show the gentleness of $M_{\alpha}$ on pure states and use Lemma \ref{lemma::3} to extend the result onto the whole space of states $\mathcal{S}(\mathbb{C}^d)$. We can write any pure state $\ket{\psi}$ in its basis representation with respect to the $\ket{k}$ as
    \begin{equation*}
        \ket{\psi} = \sum_{k = 1}^d \gamma_k(\psi) \ket{k}, \hspace{10pt} \text{where } \hspace{10pt} \sum_{k = 1}^d |\gamma_k(\psi)|^2 = 1.
    \end{equation*}
    The outcome probabilities of the measurement $M_{\alpha}$ are then given by
    \begin{align*}
        \mathcal{P}_{\ket{\psi}}(R^{M_{\alpha}} = 1) &= \bra{\psi} M_{\alpha, 1}^2\ket{\psi} = \frac{1}{e^{\delta}+1} \left( e^{\delta} \sum_{k = 1}^{d_+} |\gamma_k(\psi)|^2 + \sum_{k = d_+ + 1}^d |\gamma_k(\psi)|^2 \right)
        \\
        \mathcal{P}_{\ket{\psi}}(R^{M_{\alpha}} = 0) &= \bra{\psi} M_{\alpha, 0}^2\ket{\psi} = \frac{1}{e^{\delta}+1} \left( \sum_{k = 1}^{d_+} |\gamma_k(\psi)|^2 + e^{\delta} \sum_{k = d_+ + 1}^d |\gamma_k(\psi)|^2 \right)
    \end{align*}
    and the post-measurement state is given by
    \begin{align*}
        \ket{\psi}_{M_{\alpha} \to 1} &= \frac{1}{\sqrt{\mathcal{P}_{\ket{\psi}}(R^{M_{\alpha}} = 1)}} \left( \sum_{k = 1}^{d_+} \sqrt{\frac{e^{\delta}}{e^{\delta}+1}} \gamma_k(\psi) \ket{k} + \sum_{k = d_+ + 1}^{d} \sqrt{\frac{1}{e^{\delta}+1}} \gamma_k(\psi) \ket{k} \right)
        \\
        \ket{\psi}_{M_{\alpha} \to 0} &= \frac{1}{\sqrt{\mathcal{P}_{\ket{\psi}}(R^{M_{\alpha}} = 1)}} \left( \sum_{k = 1}^{d_+} \sqrt{\frac{1}{e^{\delta}+1}} \gamma_k(\psi) \ket{k} + \sum_{k = d_+ + 1}^{d} \sqrt{\frac{e^{\delta}}{e^{\delta}+1}} \gamma_k(\psi) \ket{k} \right).
    \end{align*}
    Let us calculate the distance between the pre- and post-measurement states for the outcome $y=1$. The outcome $0$ follows completely analogously. Since both the pre- and post-measurement state are pure, we use Lemma \ref{lemmaTraceNorm} to calculate
    \begin{align*}
        \norm{\ket{\psi}\bra{\psi} - \ket{\psi}\bra{\psi}_{M_{\alpha} \to 1}}_{Tr}^2 &= 1 - \left| \bra{\psi}\ket{\psi_{M_{\alpha} \to 1}} \right|^2
        \\
        &= 1 - \frac{\left(e^{\frac{\delta}{2}} \sum_{k = 1}^{d_+} |\gamma_k(\psi)|^2 + \sum_{k = d_+ + 1}^d |\gamma_k(\psi)|^2 \right)^2}{e^{\delta} \sum_{k = 1}^{d_+} |\gamma_k(\psi)|^2 + \sum_{k = d_+ + 1}^d |\gamma_k(\psi)|^2}
        \\
        &= 1 - \frac{(e^{\frac{\delta}{2}} \gamma_+ + 1 - \gamma_+)^2}{e^{\delta} \gamma_+ + 1 - \gamma_+} =: g_{\delta}(\gamma_+)
    \end{align*}
    where we used the notation
    \begin{equation*}
        \gamma_+ = \gamma_+(\psi) = \sum_{k = 1}^{d_+} |\gamma_k(\psi)|^2.
    \end{equation*}
    We see that the gentleness of the measurement is only dependent on the value $\gamma_+ \in [0,1]$. When we maximize the function $g_{\delta}$ on $[0,1]$, we see that the maximum is attained at
    \begin{equation*}
        \gamma_{max} = \frac{1}{e^{\frac{\delta}{2}}+ 1}
    \end{equation*}
    and we have 
    \begin{equation*}
        \norm{\ket{\psi}\bra{\psi} - \ket{\psi}\bra{\psi}_{M_{\alpha} \to 1}}_{Tr}^2 \leq g_{\delta}(\gamma_{max}) = \left(\frac{e^{\frac{\delta}{2}} - 1}{e^{\frac{\delta}{2}} +  1} \right)^2 = \tanh^2\left( \frac{\delta}{4} \right) = \alpha^2.
    \end{equation*}
    Since the result holds analogously for outcome $0$, the $\alpha$-gentleness of $M_{\alpha}$ is shown.
\end{proof}

\subsection{Local and global gentleness and Differential privacy}
\label{sec::Privacy_Triviality}
In this part, we discuss in more detail the relation between differential privacy and gentleness. In particular, we will take a look at the differences between local and global differentially private mechanisms, i.e. Markov Kernels, as well as local and global gentle measurements. Since the measurements we consider are non-interactive, we restrict our comparison to non-interactive privacy mechanisms as well. Furthermore, we assumed countable outcomes of the measurement in order to define the post-measurement state. In order to highlight the connection to privacy, we assume that the outcome of the privacy mechanism is countable as well.

\begin{definition}
    Let $\delta > 0$. A mechanism, i.e. a Markov kernel, $\mathcal{A}$ on $\mathcal{X}^n$ with values in $\mathcal{Y}$ is said to be (globally) $\delta$ differentially private if for any two $x, x' \in \mathcal{X}^n$ such that $x$ and $x'$ only differ in at most one entry it holds
    \begin{equation*}
        \mathcal{P}(\mathcal{A}(x) = y) \leq e^{\delta} \mathcal{P}(\mathcal{A}(x') = y)
    \end{equation*}
    for any outcome $y \in  \mathcal{Y}$.
\end{definition}
The elements $x, x'$ that differ in at most only one entry, i.e. $\norm{d - d'}_0 \leq 1$, are said to be \textit{adjacent}. The set $\mathcal{X}^n$ can be thought of as a database. This definition assures that we cannot draw conclusions about the entries of the database $\mathcal{D}$ when given only the privatized data, while still being able to perform statistical analyses on the data. This mechanism is used to protect confidentiality for medical data \cite{DYDA2021100366} as well as by tech-companies to analyze their products without invading the users privacy \cite{Apple_website}. The downside of global privacy is, that the user still has to give the un-privatized data to someone else and trust this central entity privatize the data before analyzing it. Local Differential Privacy remedies this problem by privatizing the data before it is transmitted into the database.
\begin{definition}
\label{def::local_differential_privacy}
    Let $\delta > 0$, a product mechanism $\mathcal{A} = \mathcal{A}_1 \times ... \times \mathcal{A}_n$ on the space $\mathcal{X}_1 \times ... \times \mathcal{X}_n = \mathcal{X}^n$ with outcomes in $\mathcal{Y}^n = \mathcal{Y}_1 \times ... \times \mathcal{Y}_n$ is said to be locally-$\delta$ differentially-private, if for any $i$ and for any two inputs $x, x' \in \mathcal{X}_i$ it holds
    \begin{equation*}
        \mathcal{P}(\mathcal{A}_i(x) = y_i) \leq e^{\delta} \mathcal{P}(\mathcal{A}_i(x') = y_i)
    \end{equation*}
    for any outcome $y_i \in  \mathcal{Y}_i$.
\end{definition}
\begin{remark}
    Note that local privacy is usually defined for a single mechanism $\mathcal{A}_i$ only. This leads to the same object as Definition \ref{def::local_differential_privacy} by considering product mechanism but allows for a simpler adaption to interactive mechanisms. However, we chose this formulation to better highlight the connection to gentleness.
\end{remark}
Local differential privacy is stricter than global differential privacy. It assures a higher level of privacy for the user at the cost of information about the true data. When we think of a locally-DP mechanism we think of having $n$ individuals $x_i$ and for each one, we apply the local mechanism. 
When we apply a privacy mechanism to a whole database viewed as one multivariate system $\mathcal{X} = \mathcal{X}^n$ we call the privacy mechanism global or central. This is the classical equivalent of our locally gentle measurements acting independently on each state and globally gentle measurements acting on the whole system.

When defining quantum local differential privacy, we can do the same distinction as in the classical case, with some changes. The privacy mechanism (or Markov Kernel) is replaced by a quantum measurement whereas the input samples are replaced by input quantum states.  Let us restate in more details the Definition~\ref{defn::quantum_differential_privacy}. 

\begin{definition}  
\label{defn::quantum_differential_privacy_in_depth}
    A product measurement $M = M_1 \otimes ... \otimes M_n$ on $\mathcal{S}(\mathbb{C}^d) \otimes ... \otimes \mathcal{S}(\mathbb{C}^d)$ with outcomes in $ \mathcal{Y}_1 \times ... \times \mathcal{Y}_n $ is said to be locally-$\delta$ quantum differentially-private ($\delta$-qDP) for $\delta>0$, if for the outcome probabilities of any two states $\rho, \rho' $ in $\mathcal{S}(\mathbb{C}^d)$ under $M_i$ it holds
    \begin{equation*}
        \mathcal{P}_{\rho}(R_{M_i} = y_i) \leq e^{\delta} \mathcal{P}_{\rho'}(R_{M_i} = y_i) \hspace{20pt} \text{for all } y_i \in \mathcal{Y}_i.
    \end{equation*}
\end{definition}
This is our definition of {\it local} quantum differential privacy. In contrast, \cite{aaronson2019gentle} consider measurement on products of states and we call their concept {\it global} differential privacy. 

\begin{definition}{\normalfont ($\delta$ quantum differential privacy as in \cite{aaronson2019gentle})}
    Let $\delta > 0$. A quantum measurement on a set of product states $\mathcal{S}^n$ with outcomes in $\mathcal{Y}$ is said to be globally-$\delta$ quantum differentially-private if for any $\rho, \rho' \in \mathcal{S}^{n}$ with $\rho = \rho_{1} \otimes ... \otimes \rho_{n}$ and $ \rho' = \rho_{1}' \otimes ... \otimes \rho_{n}'$ such that $\rho_1$ and $\rho_2$ differ in at most only one register it holds
    \begin{equation*}
        \mathcal{P}_{\rho}(R_M = y) \leq e^{\delta} \mathcal{P}_{\rho'}(R_M = y)
    \end{equation*}
    for any outcome $y \in \mathcal{Y}$.
\end{definition}
As in the classical case, the $\rho = \rho_{1} \otimes ... \otimes \rho_{n}$ and $ \rho' = \rho_{1}' \otimes ... \otimes \rho_{n}'$ that differ in at most only one register are said to be \textit{adjacent} or \textit{neighbors}. Differing in only one register means that $\rho_{i} = \rho_{i}'$ for all but at most one $i$. 

When connecting quantum differential privacy to gentleness, it is important which definition (local or global) we take as a starting point. Theorem 5 in \cite{aaronson2019gentle} and Lemma \ref{lem::Gentle_eigenvalue_ratio} both establish a connection between quantum differential privacy and gentleness in their respective global and local situation. Let us summarize this here.
\begin{proposition}
    Let $\alpha < 1/2$ and $M$ be a quantum measurement on a $d$-dimensional quantum system.
    \begin{enumerate}
        \item[(i)] \textbf{[Theorem 5 in \cite{aaronson2019gentle}]} If $M$ is globally-$\alpha$-gentle on product states, then $M$ is globally-$2 \log\left( \frac{1+2\alpha}{1-2\alpha} \right)$-differentially-private on product states, that is $M \in GM(\alpha, \mathcal{S}^n)$, where $\mathcal{S}^n$ denotes the set of product states in $\mathcal{S}((\mathbb{C}^d)^{\otimes n})$.  

        \item[(ii)] \textbf{[Lemma \ref{lem::Gentle_eigenvalue_ratio}]} If $M = M_1 \otimes ... \otimes M_n$ is locally-$\alpha$-gentle, each $M_i$ is locally-$2 \log\left( \frac{1+2\alpha}{1-2\alpha} \right)$-differentially-private, that is $M \in LGM(\alpha, \mathcal{S}(\mathbb{C}^d))$. 
    \end{enumerate}
\end{proposition}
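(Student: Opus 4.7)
The proposition simply packages two results that have already been proved or cited, so the proof reduces to invoking each one and checking that the hypotheses match.

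For part (i), I would quote Theorem~5 of \cite{aaronson2019gentle} directly, as this is exactly the statement it asserts. If a self-contained argument is desired, the strategy is the convex-combination trick used in the proof of Lemma~\ref{lem::Gentle_eigenvalue_ratio}, adapted to adjacent product states. Given $\rho = \rho_1 \otimes \cdots \otimes \rho_n$ and $\rho' = \rho_1 \otimes \cdots \otimes \rho'_i \otimes \cdots \otimes \rho_n$ that differ only in the $i$-th register, one has $\|\rho - \rho'\|_{Tr} = \|\rho_i - \rho'_i\|_{Tr} \leq 1$. Forming $\rho_\lambda = \lambda\rho + (1-\lambda)\rho'$, comparing $(\rho_\lambda)_{M\to y}$ with $\rho_\lambda$ by the triangle inequality, and applying the $\alpha$-gentleness bound only at the two endpoints $\rho, \rho'$ (which are product states, where the hypothesis is available), reproduces the same bound $p_\rho/p_{\rho'} \leq ((1+2\alpha)/(1-2\alpha))^2$ that we obtained in the local setting.

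For part (ii), the argument is even more transparent: one applies Lemma~\ref{lem::Gentle_eigenvalue_ratio} to each factor separately. By hypothesis $M = M_1 \otimes \cdots \otimes M_n$ and each $M_i$ is $\alpha$-gentle on $\mathcal{S}(\mathbb{C}^d)$, so Lemma~\ref{lem::Gentle_eigenvalue_ratio} yields that $M_i$ is $2\log\bigl(\frac{1+2\alpha}{1-2\alpha}\bigr)$-quantum differentially-private on $\mathcal{S}(\mathbb{C}^d)$. By Definition~\ref{defn::quantum_differential_privacy_in_depth}, this is precisely the local qDP condition for the product measurement $M$.

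The only genuine subtlety, if one wants to avoid the black-box citation in part (i), is that $\rho_\lambda$ is generally not a product state, so the gentleness hypothesis cannot be invoked at $\rho_\lambda$ itself. The proof of Lemma~\ref{lem::Gentle_eigenvalue_ratio} is already organized so that the gentleness bound is used only at the endpoints of the interpolation, which here are both product states; once one tracks this carefully the argument transfers with essentially no change. No new analytical obstacle arises, so the proposition holds.
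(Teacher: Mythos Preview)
Your top-level strategy is correct and matches the paper: the proposition is stated there without proof, as a summary of Theorem~5 in \cite{aaronson2019gentle} and Lemma~\ref{lem::Gentle_eigenvalue_ratio}, so invoking each result is all that is required.

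However, your optional self-contained sketch for part (i) contains a genuine error. You claim that ``the proof of Lemma~\ref{lem::Gentle_eigenvalue_ratio} is already organized so that the gentleness bound is used only at the endpoints of the interpolation.'' This is false. In that proof one writes
\[
\delta(\rho_2 - \rho_1) = \bigl(\rho_\lambda - (\rho_\lambda)_{M\to y}\bigr) - \tfrac{\lambda p_1}{p_\lambda}\bigl(\rho_1 - (\rho_1)_{M\to y}\bigr) - \tfrac{(1-\lambda)p_2}{p_\lambda}\bigl(\rho_2 - (\rho_2)_{M\to y}\bigr),
\]
and the triangle inequality then yields $\delta \leq \alpha + \alpha = 2\alpha$ only because $\|\rho_\lambda - (\rho_\lambda)_{M\to y}\|_{Tr} \leq \alpha$ is invoked as well. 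The third $\alpha$ in the displayed bound of that proof comes precisely from gentleness at the interpolant $\rho_\lambda$.

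The correct resolution of the issue you raise is different: when $\rho$ and $\rho'$ are \emph{adjacent} product states differing only in register $i$, their convex combination factors as
\[
\lambda\rho + (1-\lambda)\rho' = \rho_1 \otimes \cdots \otimes \bigl(\lambda\rho_i + (1-\lambda)\rho'_i\bigr) \otimes \cdots \otimes \rho_n,
\]
which \emph{is} a product state. Hence the global gentleness hypothesis on product states applies at $\rho_\lambda$ after all, and the argument of Lemma~\ref{lem::Gentle_eigenvalue_ratio} goes through verbatim. Your conclusion that ``no new analytical obstacle arises'' is therefore right, but the justification you give for it is not.
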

    
The subtle but important difference is that global gentleness alters the whole system by $\alpha$ while local gentleness alters each subsystem by $\alpha$. Lemma \ref{lemmaTraceNorm} shows that, in general, the alteration of each subsystem is amplified in the whole system. As such, globally-gentle measurements destroy the measured system less than their local counterparts. For them to be implemented physically we need to exploit quantum effects such as entanglement which require access to the whole system at once and perform coherent operations on the whole system which are currently physically unfeasible. In the same way that a  classical globally differentially private mechanism needs access to the whole database at once, a globally-gentle measurement needs access to the whole quantum system at once. The privacy-information trade-off in the classical world is therefore mirrored by a physical feasibility-information trade-off in the quantum world. 

\begin{figure}[ht]
    \setlength{\fboxsep}{9pt}%
    \setlength{\fboxrule}{1pt}%
    \fbox{\resizebox{.95\linewidth}{!}{\begin{tikzpicture}[
    state/.style={rectangle, draw, thick, minimum width=1.5cm, minimum height=1cm, font=\Large},
    measure/.style={rectangle, draw, thick, minimum width=1.5cm, minimum height=1.5cm},
    qline/.style={-, thick},
    cline/.style={double, double distance=1.5pt, thick, -{Latex[length=4mm, width=4mm]}}
]

\newcommand{\drawmeter}[1]{
    \draw[thick] ([xshift=-0.5cm, yshift=-0.4cm]#1.center) arc[start angle=150, end angle=30, radius=0.6cm];
    \draw[thick, -{Stealth[length=3mm]}] ([yshift=-0.5cm, xshift=0cm]#1.center) -- ([xshift=0.4cm, yshift=0.2cm]#1.center);
}

\node[font = \Large] (s1) at (0, 0) {$\rho_1$};
\node[font = \Large] (p1) at (1.5, 0) {$\otimes$};
\node[measure, label={[xshift=-0.3cm, yshift=-0.7cm]{\Large $M_1$}}] (m1) at (0, -3) {};
\node[font = \large] (r1) at (1.8, -3) {$y_1$};
\node[font = \large] (s1y) at (0, -6) {$(\rho_1)_{M_1 \to y_1}$};
\drawmeter{m1}

\node[font = \Large] (s2) at (3, 0) {$\rho_2$};
\node[font = \Large] (p1) at (4.5, 0) {$\otimes$};
\node[measure, label={[xshift=-0.3cm, yshift=-0.7cm]{\Large $M_2$}}] (m2) at (3, -3) {};
\node[font = \large] (r2) at (4.8, -3) {$y_2$};
\node[font = \large] (s2y) at (3, -6) {$(\rho_2)_{M_2 \to y_2}$};
\drawmeter{m2}

\node[font = \Large] (r3) at (5.5, 0) {$\dots$};
\node[font = \Large] (m3) at (5.5, -3) {$\dots$};
\node[font = \Large] (y3) at (5.5, -6) {$\dots$};

\node[font = \Large] (s3) at (7, 0) {$\rho_n$};
\node[font = \Large] (p1) at (6.3, 0) {$\otimes$};
\node[measure, label={[xshift=-0.3cm, yshift=-0.7cm]{\Large $M_n$}}] (m3) at (7, -3) {};
\node[font = \large] (r3) at (8.8, -3) {$y_n$};
\node[font = \large] (s3y) at (7, -6) {$(\rho_n)_{M_n \to y_n}$};
\drawmeter{m3}

\draw[qline] (s1) -- (m1);
\draw[cline] (m1) -- (r1);
\draw[qline] (m1) -- (s1y);

\draw[qline] (s2) -- (m2);
\draw[cline] (m2) -- (r2);
\draw[qline] (m2) -- (s2y);

\draw[qline] (s3) -- (m3);
\draw[cline] (m3) -- (r3);
\draw[qline] (m3) -- (s3y);

\node[font = \Large] (s1) at (10, 0) {$\rho_1$};
\node[font = \Large] (p1) at (11.5, 0) {$\otimes$};

\node[font = \Large] (s2) at (13, 0) {$\rho_2$};
\node[font = \Large] (p1) at (14.5, 0) {$\otimes$};

\node[font = \Large] (r3) at (15.5, 0) {$\dots$};

\node[font = \Large] (s3) at (17, 0) {$\rho_n$};
\node[font = \Large] (p1) at (16.3, 0) {$\otimes$};

\draw[line width = 0.03cm] (10, -0.3) -- (10, -1);
\draw[line width = 0.03cm] (13, -0.3) -- (13, -1);
\draw[line width = 0.03cm] (17, -0.3) -- (17, -1);
\draw[line width = 0.03cm] (10, -1) -- (17, -1);
\draw[line width = 0.03cm] (13.5, -1) -- (13.5, -1.6);

\node[font = \Large] (s1) at (13.5, -2) {$\rho$};

\draw[line width = 0.03cm] (13.5, -2.4) -- (13.5, -3);

\node[measure, label={[xshift=-0.3cm, yshift=-0.7cm]{\Large $M$}}] (m) at (13.5, -3.75) {};
\drawmeter{m}

\node[font = \large] (r) at (16, -3.75) {$y$};
\node[font = \large] (sy) at (13.5, -6) {$(\rho)_{M \to y}$};

\draw[cline] (m) -- (r);
\draw[qline] (m) -- (sy);

\end{tikzpicture}}}
    \label{alg::Quantum_Label_Switch}
    \caption{\justifying Schematic representation of local (left) and global (right) measurement schemes. Local schemes measure each subsystem $\rho_i$ individually whereas global schemes perform one measurement on the global system $\rho = \rho_1 \otimes ... \otimes \rho_n$. We say that the local measurement scheme is locally-gentle/quantum-differentially-private if each $M_i$ is gentle/quantum-differentially-private on each individual subsystem $\rho_i$. The global measurement scheme is globally-gentle/quantum-differentially private if it is gentle/quantum-differentially-private on the whole system $\rho$.}
\end{figure}

\section{A gentle data-processing inequality}
\label{sec::qDPI}
The main result of this work is a quantum data processing inequality for gentle measurements. The qDPI relates the symmetrized Kullback-Leibler divergence for the outcomes distributions of an $\alpha$-gentle measurement for two different states to the gentleness parameter $\alpha$ and the trace-norm distance between the two initial states. Since the Kullback-Leibler divergence plays a central role in finding optimal results in both quantum tomography and quantum state certification, this gives a new way of directly relating the information-gain of a measurement procedure to the amount of destruction of the system measured.

\begin{theorem}
\label{thm::B3_C2}
    Let $\mathcal{H} = \mathbb{C}^d$, $\mathcal{S}$ a set of quantum states that contains the pure states $\mathcal{S}_{pure}(\mathbb{C}^d)$ on $\mathbb{C}^d$. For $\alpha $ in $[0, \frac{1}{2})$ let $M $ belong to $ GM(\alpha, \mathcal{S})$. Then for any $\rho_1, \rho_2 $ in $\mathcal{S}$ it holds
    \begin{equation}
    \label{eqn::main_result}
        D_{KL}^{sym}(\mathcal{P}_{1}^{R_M} || \mathcal{P}_{2}^{R_M} ) \leq \left( \frac{8 \alpha}{(1-2\alpha)^2} \right)^2 \norm{\rho_1 - \rho_2}_{Tr}^2 \lesssim \alpha^2 \norm{\rho_1 - \rho_2}_{Tr}^2.
    \end{equation}
    The second inequality holds up to a numerical constant for small $\alpha$.
\end{theorem}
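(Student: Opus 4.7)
\medskip

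\noindent\textbf{Proof plan for Theorem \ref{thm::B3_C2}.} The plan is to combine the gentleness--privacy bridge (Lemma \ref{lem::Gentle_eigenvalue_ratio}) with the two complementary bounds of Lemma \ref{lem::Help1}, so that each $(p_1(y)-p_2(y))^2$ that appears in the symmetrized KL picks up one factor of $(e^{\delta}-1)$ from differential privacy and one factor of $\|\rho_1-\rho_2\|_{Tr}$ from the trace--norm inequality, giving a clean $(e^{\delta}-1)^{2}\,\|\rho_1-\rho_2\|_{Tr}^{2}$ bound that matches the claim once $\delta=2\log\!\big(\tfrac{1+2\alpha}{1-2\alpha}\big)$ is plugged in.

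\medskip

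\noindent First I would reduce $D_{KL}^{sym}$ to a chi--square--like sum. Writing $p_i(y)=\mathrm{Tr}(\rho_i E_y)$ with $E_y=M_y^{*}M_y$ and using the elementary identity
\[
(a-b)(\log a-\log b)\le \frac{(a-b)^{2}}{\min(a,b)}\qquad\text{for } a,b>0,
\]
(which follows from a one--line mean--value argument applied to $\log$), one gets
\[
D_{KL}^{sym}(\mathcal{P}_1^{R_M}\|\mathcal{P}_2^{R_M})
=\sum_{y}\bigl(p_1(y)-p_2(y)\bigr)\bigl(\log p_1(y)-\log p_2(y)\bigr)
\le \sum_{y}\frac{(p_1(y)-p_2(y))^{2}}{\min(p_1(y),p_2(y))}.
\]
Note that this step is legitimate because Corollary \ref{lemma::1} (gentleness with $\alpha<1$ forces full--rank $E_y$) guarantees $\lambda_{\min}(E_y)>0$, hence $p_i(y)>0$ for all $y$.

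\medskip

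\noindent Next I would plug in the two bounds of Lemma \ref{lem::Help1}. For each fixed $y$:
\begin{itemize}
\item Lemma \ref{lem::Help1}(ii) gives $(p_1(y)-p_2(y))^{2}\le (\lambda_{\max}(E_y)-\lambda_{\min}(E_y))^{2}\,\|\rho_1-\rho_2\|_{Tr}^{2}$.
\item Lemma \ref{lem::Gentle_eigenvalue_ratio} combined with Remark \ref{rem::privacy_as_eigenvalue_ratio} gives $\lambda_{\max}(E_y)-\lambda_{\min}(E_y)\le (e^{\delta}-1)\,\lambda_{\min}(E_y)$ for $\delta=2\log\!\big(\tfrac{1+2\alpha}{1-2\alpha}\big)$.
\item Lemma \ref{lem::Help1}(i) gives $\lambda_{\min}(E_y)\le\min(p_1(y),p_2(y))$, so $\lambda_{\min}(E_y)^{2}/\min(p_1(y),p_2(y))\le\lambda_{\min}(E_y)$.
\end{itemize}
Chaining these three inequalities inside the sum yields
\[
\frac{(p_1(y)-p_2(y))^{2}}{\min(p_1(y),p_2(y))}\le (e^{\delta}-1)^{2}\,\lambda_{\min}(E_y)\,\|\rho_1-\rho_2\|_{Tr}^{2}.
\]

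\medskip

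\noindent Finally I would sum over $y$. Because $\lambda_{\min}(E_y)\le p_1(y)$, the completeness relation gives $\sum_{y}\lambda_{\min}(E_y)\le\sum_{y}p_1(y)=1$. Putting everything together,
\[
D_{KL}^{sym}(\mathcal{P}_1^{R_M}\|\mathcal{P}_2^{R_M})\le (e^{\delta}-1)^{2}\,\|\rho_1-\rho_2\|_{Tr}^{2},
\]
and a direct computation with $\delta=2\log\!\big(\tfrac{1+2\alpha}{1-2\alpha}\big)$ yields $e^{\delta}-1=\tfrac{(1+2\alpha)^{2}-(1-2\alpha)^{2}}{(1-2\alpha)^{2}}=\tfrac{8\alpha}{(1-2\alpha)^{2}}$, which is exactly the announced constant. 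The asymptotic $\lesssim\alpha^{2}\|\rho_1-\rho_2\|_{Tr}^{2}$ then follows from $(1-2\alpha)^{-4}\to 1$ as $\alpha\to 0$.

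\medskip

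\noindent The conceptually delicate point — and the place where one could lose a power of $\alpha$ or $\|\rho_1-\rho_2\|_{Tr}$ — is the bookkeeping in the middle step: one must spend \emph{one} of the two available bounds on privacy (giving $(e^{\delta}-1)^{2}$) and the \emph{other} on the trace--norm gradient (giving $\|\rho_1-\rho_2\|_{Tr}^{2}$), while simultaneously using $\lambda_{\min}(E_y)\le\min(p_1,p_2)$ to cancel the denominator of the chi--square reduction. Any other allocation produces either an extra $\|\rho_1-\rho_2\|_{Tr}$ (making the bound linear) or an extra $(e^{\delta}-1)$, and only this precise split reproduces the constant $(8\alpha/(1-2\alpha)^{2})^{2}$, which the Neyman--Pearson analysis later in the paper shows is asymptotically tight.
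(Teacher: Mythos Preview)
Your proof is correct and follows essentially the same approach as the paper's own proof: both reduce $D_{KL}^{sym}$ to the chi--square--type sum $\sum_y (p_1(y)-p_2(y))^2/\min(p_1(y),p_2(y))$, then combine Lemma~\ref{lem::Help1}(i)--(ii) with the gentleness--privacy bridge of Lemma~\ref{lem::Gentle_eigenvalue_ratio} to extract the factor $(e^{\delta}-1)^2\|\rho_1-\rho_2\|_{Tr}^2$, and finally use $\sum_y \lambda_{\min}(E_y)\le 1$. Your justification of this last inequality via $\lambda_{\min}(E_y)\le p_1(y)$ is in fact slightly cleaner than the paper's (which writes $\sum_y\lambda_{\min}(E_y)\le\Tr(\mathrm{Id})/2$, a step that is only literally $\le 1$ when $d=2$).
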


While classical privacy mechanisms act like contractions on the space of probability distributions, gentle measurements act as contractions when transporting quantum states into classical likelihoods. The factor of contraction was shown to be of order $\alpha^2$ for classical privacy mechanisms in \cite{duchi2014localprivacydataprocessing} and here we generalize this result to quantum gentle measurements.

\begin{proof}[Proof of Theorem~\ref{thm::B3_C2}]
    We obtain the data-processing inequality for gentle measurements by proving a data-processing inequality for quantum-differentially-private measurements and using the relation between the two. Let $M \in GM(\alpha, \mathcal{S})$. By Lemma~\ref{lem::Gentle_eigenvalue_ratio}, we know that $M$ is $\delta$ quantum differential-private for 
    \begin{equation*}
        \delta_{\alpha} = 2 \log\left( \frac{1 + 2\alpha}{1- 2\alpha} \right).
    \end{equation*}
    Note that for a $\delta$ quantum differentially-private measurement $M$, we have 
    
    \begin{equation*}
        0 < \lambda_{min}(E_y) \leq \mathcal{P}_i(R_m = y) = \Tr(\rho_i E_y) \leq \lambda_{max}(E_y),
    \end{equation*}
    where $E_y = M_y^*M_y$. Then we write for the symmetrized Kullback-Leibler-divergence
    \begin{align*}
        D_{KL}^{sym}(\mathcal{P}_1^{R_M} || \mathcal{P}_2^{R_M} ) =& \sum_y (p_1^{R_M}(y) - p_2^{R_M}(y)) \log\left( \frac{p_1^{R_M}(y)}{p_2^{R_M}(y)} \right)
        \\
        \leq& \sum_{y} |p_1^{R_M}(y) - p_2^{R_M}(y)|^2 \frac{1}{\min\{ p_1^{R_M}(y), p_2^{R_M}(y) \}}
        \intertext{where we used Lemma 4 in \cite{duchi2014localprivacydataprocessing}. We can now further bound this term using Lemma \ref{lem::Help1} as}
        D_{KL}^{sym}(\mathcal{P}_1^{R_M} || \mathcal{P}_2^{R_M} ) \leq& \sum_{y} \left( \lambda_{max}(E_y) - \lambda_{min}(E_y) \right)^2 \norm{\rho_1 - \rho_2}_{Tr}^2 \frac{1}{\lambda_{min}(E_y)}
        \\
        =& \sum_y \frac{\left( \lambda_{max}(E_y) - \lambda_{min}(E_y) \right)^2}{\lambda_{min}(E_y)^2} \norm{\rho_1 - \rho_2}_{Tr}^2 \lambda_{min}(E_y)
        \\
        =& \sum_y \left( \frac{\lambda_{max}(E_y)}{\lambda_{min}(E_y)} - 1 \right)^2 \norm{\rho_1 - \rho_2}_{Tr}^2 \lambda_{min}(E_y)
        \\
        \leq& \sum_y \left( e^{\delta_{\alpha}} -1 \right)^2 \norm{\rho_1 - \rho_2}_{Tr}^2 \lambda_{min}(E_y).
        \intertext{Finally we note that the $\sum_y\lambda_{min}(E_y)\leq \sum_y \Tr(E_y)/2 \leq \Tr( \text{Id})/2 = 1$ and obtain the final inequality}
        D_{KL}^{sym}(\mathcal{P}_1^{R_M} || \mathcal{P}_2^{R_M} ) \leq& \left( e^{\delta_{\alpha}} -1 \right)^2 \norm{\rho_1 - \rho_2}_{Tr}^2.
    \end{align*}
    Having shown the data-processing inequality for quantum differentially private measurements allows us now to easily show the analogous result for gentle measurements and the fact that
    \begin{equation*}
        \left( e^{\delta_{\alpha}} - 1 \right)^2 = \left( \frac{8\alpha}{(1 - 2\alpha)^2} \right)^2
    \end{equation*}
\end{proof}

Since the Kullback-Leibler divergence for measurements $M \in LGM(\alpha, \mathcal{S}^n)$ on product states $\rho_1^{\otimes n}$ and $\rho_2^{\otimes n}$ expands into the sum of the individual Kullback-Leibler divergences, we have
\begin{equation*}
    D_{KL}^{sym}\left( \mathcal{P}_1^{R_M, n} \middle| \middle| \mathcal{P}_2^{R_M, n} \right) \leq n \left( \frac{8 \alpha}{(1-2\alpha)^2} \right)^2 \norm{\rho_1 - \rho_2}_{Tr}^2
    \lesssim n \alpha^2 \norm{\rho_1 - \rho_2}_{Tr}^2.
\end{equation*}
The factor $n \alpha^2$ appearing in the above equation is the effective sample size which shows that the additional assumption of gentleness increases the necessary amount of copies needed for both state certification and tomography from $n \geq \Omega(\frac{1}{\epsilon^2})$ to $n \geq \Omega(\frac{1}{\epsilon^2 \alpha^2})$. Since gentleness and differential privacy are highly related, the bound in (\ref{eqn::main_result}) sheds a new light on existing bounds for quantum differentially-private mechanisms such as in \cite{hirche2023quantumdifferentialprivacyinformation}.

We further show that the theoretical limit for the distinguishability of the states by gentle measurements in Theorem \ref{thm::B3_C2} can be attained up to a constant in the general case. This attainability is the content of Theorem~\ref{thm::sharpness}. Furthermore, in Lemma~\ref{lem::improved_constant_positivity} we have seen that the relation between gentleness and quantum differential privacy may be improved for measurements whose operators are positive. In that special case, the results of Theorem~\ref{thm::B3_C2} can be improved such that inequality~\eqref{eqn::main_result} is asymptotically sharp for small $\alpha$ with the smaller constant $(4 \alpha)^2/(1- \alpha)^4$. We show that the qLS mechanism we propose attains this limit.

\begin{theorem}
\label{thm::sharpness}
    Let $\mathcal{H} = \mathbb{C}^d$, $\mathcal{S} = \mathcal{S}(\mathbb{C}^d)$ and $\alpha$ in $[0, 1]$. Then, there exists $M $ in $GM(\alpha, \mathcal{S})$ such that for any $\rho_1, \rho_2 \in \mathcal{S}$ it holds
    \begin{equation}
    \label{eqn::sharpness_equation}
        D_{KL}^{sym}(\mathcal{P}_{1}^{R_M} || \mathcal{P}_{2}^{R_M} ) \geq \left( \frac{4 \alpha}{1 + \alpha^2} \right)^2 \norm{\rho_1 - \rho_2}_{Tr}^2.
    \end{equation}
\end{theorem}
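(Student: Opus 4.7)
The plan is to exhibit an explicit $\alpha$-gentle measurement realising the claimed lower bound via the quantum Label Switch of Lemma~\ref{lem::qLS_gentleness}. Given a pair $\rho_1, \rho_2 \in \mathcal{S}(\mathbb{C}^d)$, I decompose $\rho_1 - \rho_2 = \sum_j \lambda_j \ket{v_j}\bra{v_j}$ spectrally and form the Holevo--Helstrom PVM $(P_1, P_0)$ with $P_1 = \sum_{j:\, \lambda_j \geq 0} \ket{v_j}\bra{v_j}$ and $P_0 = \mathbbm{1} - P_1$. The zero-trace identity $\sum_j \lambda_j = 0$ then yields the key equality
\begin{equation*}
    \Tr\left((\rho_1 - \rho_2) P_1\right) = \sum_{\lambda_j \geq 0} \lambda_j = \tfrac{1}{2}\sum_j |\lambda_j| = \norm{\rho_1 - \rho_2}_{Tr}.
\end{equation*}
Applying Lemma~\ref{lem::qLS_gentleness} to $(P_1, P_0)$ with parameter $\alpha$ and $\delta = 4\arctanh(\alpha)$ produces $M = (M_{\alpha,1}, M_{\alpha,0})$ belonging to $GM(\alpha, \mathcal{S}(\mathbb{C}^d))$; gentleness is immediate from that lemma.

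Next, using $P_1 P_0 = 0$ one obtains $M_{\alpha,1}^* M_{\alpha,1} = \frac{e^\delta}{e^\delta+1} P_1 + \frac{1}{e^\delta+1} P_0$, so with $\gamma_i := \Tr(\rho_i P_1)$ the induced distributions are Bernoulli with $p_i(1) = \frac{(e^\delta-1)\gamma_i + 1}{e^\delta + 1}$. Subtracting and using the Holevo--Helstrom identity together with $\tanh(2\arctanh(\alpha)) = \frac{2\alpha}{1+\alpha^2}$ (valid since $\delta/2 = 2\arctanh(\alpha)$),
\begin{equation*}
    p_1(1) - p_2(1) = \tanh(\delta/2)\,(\gamma_1 - \gamma_2) = \frac{2\alpha}{1+\alpha^2}\,\norm{\rho_1 - \rho_2}_{Tr}.
\end{equation*}
I then invoke Pinsker's inequality in both directions: for Bernoulli laws one has $\norm{\mathcal{P}_1^{R_M} - \mathcal{P}_2^{R_M}}_{TV} = |p_1(1) - p_2(1)|$, so
\begin{equation*}
    D_{KL}^{sym}(\mathcal{P}_1^{R_M} \| \mathcal{P}_2^{R_M}) \geq 4 \norm{\mathcal{P}_1^{R_M} - \mathcal{P}_2^{R_M}}_{TV}^2 = 4 (p_1(1) - p_2(1))^2 = \left(\frac{4\alpha}{1+\alpha^2}\right)^2 \norm{\rho_1 - \rho_2}_{Tr}^2,
\end{equation*}
matching the target constant exactly.

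The main obstacle is the quantifier order in the statement. The $M$ built above genuinely depends on the pair through the Holevo--Helstrom projector $P_1$, so the argument as it stands establishes $\forall \rho_1, \rho_2\, \exists M$ rather than the literal reading $\exists M\, \forall \rho_1, \rho_2$. To recover a genuinely universal measurement one has to randomise: draw a rank-one projector $P$ from the Haar measure on $\mathbb{C}^d$, include this choice in the outcome, and apply the qLS fibrewise to $(P, \mathbbm{1} - P)$. The per-$P$ Bernoulli bound then integrates via the 2-design identity $\int \Tr((\rho_1-\rho_2) P)^2\, d\mu_{\mathrm{Haar}}(P) = \frac{1}{d(d+1)} \norm{\rho_1 - \rho_2}_{HS}^2$, which preserves the $\Omega(\alpha^2)$ scaling but degrades the constant by a dimension-dependent factor. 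Reaching the exact constant $\left(\frac{4\alpha}{1+\alpha^2}\right)^2$ uniformly over pairs therefore seems to require reading the existential in (\ref{eqn::sharpness_equation}) as a \emph{per-pair} assertion---i.e.\ as a pointwise sharpness counterpart to Theorem~\ref{thm::B3_C2}---which is what the Holevo--Helstrom$+$qLS construction above establishes.
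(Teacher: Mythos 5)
Your proof is correct and is essentially the paper's own argument: the same pair-dependent Helstrom projectors gentle-ized via Lemma~\ref{lem::qLS_gentleness}, followed by two-sided Pinsker; your direct computation of the Bernoulli outcome laws (which in fact yields $\norm{\mathcal{P}_1^{R_M} - \mathcal{P}_2^{R_M}}_{TV} = \frac{2\alpha}{1+\alpha^2}\norm{\rho_1 - \rho_2}_{Tr}$ as an equality) merely replaces the paper's detour through the total error of the gentle Neyman--Pearson test, which produces the same total-variation lower bound. Your quantifier worry is well spotted but needs no Haar randomisation: the paper's measurement \eqref{eqn::gentle_Neyman_Pearson_measurement} likewise depends on the pair through $P_+$ and $P_-$, so the per-pair reading you settle on is exactly what the paper itself proves.
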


In order to show Theorem~\ref{thm::sharpness} we prove a gentle quantum Neyman-Pearson Lemma and derive bounds on the Kullback-Leibler-divergence based on this test. Let us fix a gentleness level $\alpha$ and set 
\begin{equation*}
    \delta = 4 \arctanh(\alpha).
\end{equation*}
For any two quantum states $\rho_0, \rho_1 \in \mathcal{S}(\mathbb{C}^d)$, we consider the difference $\rho_0 - \rho_1$. Since $\rho_0 - \rho_1$ is still hermitian with $\Tr(\rho_0 - \rho_1) = 0$, we can decompose it into a positive and negative part. Let $P_+$ and $P_-$ denote the projection onto the positive and negative part respectively. Note that $P_+ + P_- = \mathbbm{1}$. We define the two measurement operators
\begin{align}
\label{eqn::gentle_Neyman_Pearson_measurement}
    M_{\alpha, 1} = \sqrt{\frac{e^{\delta}}{e^{\delta}+1}}  P_+ + \sqrt{\frac{1}{e^{\delta}+1}} P_- \hspace{20pt} \text{and} \hspace{20pt} M_{\alpha, 0} = \sqrt{\frac{1}{e^{\delta}+1}}  P_+ + \sqrt{\frac{e^{\delta}}{e^{\delta}+1}} P_-
\end{align}
and combine them to a measurement $M_{\alpha} = (M_{\alpha, 0}, M_{\alpha, 1})$. These measurement operators are gentle-ized version of the Helstrom-measurement for quantum states \cite{Audenaert_2008}. Their $\alpha$-gentleness is a direct application of Lemma \ref{lem::qLS_gentleness}. The gentle quantum Neyman-Pearson test is the test $\Delta_{\alpha}^*$ that decides for $H_1$ if the measurement outcome of the above measurement defined in \eqref{eqn::gentle_Neyman_Pearson_measurement} is "1" and for $H_0$ if the measurement outcome is "0".
\begin{lemma}{\normalfont (Gentle quantum Neyman-Pearson test)}
\label{lem::gentle_Neyman_Pearson_test}
    Given the testing problem 
\begin{equation*}
    H_0: \rho = \rho_0 \hspace{20pt} \text{vs.} \hspace{20pt} H_1: \rho = \rho_1
\end{equation*}
the gentle quantum Neyman-Pearson test $\Delta_{\alpha}^*$ based on the measurement \eqref{eqn::gentle_Neyman_Pearson_measurement} has a total error given by
\begin{equation*}
    \mathcal{P}_0(\Delta_{\alpha}^* = 1) + \mathcal{P}_1(\Delta_{\alpha}^* = 0) = 1 - \frac{e^{\delta} -1 }{e^{\delta} + 1} \norm{\rho_0 - \rho_1}_{Tr} = 1 - \frac{2\alpha}{1 + \alpha^2} \norm{\rho_0 - \rho_1}_{Tr}.
\end{equation*}
\end{lemma}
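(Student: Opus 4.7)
The plan is a direct computation of the outcome probabilities, followed by a clean algebraic reduction to the trace-norm of $\rho_0 - \rho_1$. Since $P_+$ and $P_-$ are orthogonal Hermitian projections with $P_+ + P_- = \mathbbm{1}$ and $P_+ P_- = 0$, and since the coefficients of the $M_{\alpha, j}$ are real so that each $M_{\alpha, j}$ is self-adjoint, the effect operators simplify to
\begin{equation*}
M_{\alpha, 1}^{*}M_{\alpha, 1} = \tfrac{e^{\delta}}{e^{\delta}+1}\, P_+ + \tfrac{1}{e^{\delta}+1}\, P_-, \qquad M_{\alpha, 0}^{*}M_{\alpha, 0} = \tfrac{1}{e^{\delta}+1}\, P_+ + \tfrac{e^{\delta}}{e^{\delta}+1}\, P_-,
\end{equation*}
which visibly sum to $\mathbbm{1}$ as a sanity check on the completeness relation.

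Next I would translate the trace-norm into traces of $\rho_i P_\pm$. Writing $\rho_0 - \rho_1 = \Lambda_+ - \Lambda_-$ in its Jordan decomposition with $\Lambda_\pm \geq 0$ supported on the ranges of $P_\pm$, the identity $\Tr(\rho_0 - \rho_1) = 0$ together with the paper's normalization $\norm{\rho_0 - \rho_1}_{Tr} = \tfrac12 \Tr(|\rho_0-\rho_1|)$ forces $\Tr(\Lambda_+) = \Tr(\Lambda_-) = T$ with $T := \norm{\rho_0 - \rho_1}_{Tr}$. Combined with $\Tr(\rho_i) = 1$, this yields the two scalar identities $\Tr(\rho_0 P_+) + \Tr(\rho_1 P_-) = 1 + T$ and $\Tr(\rho_0 P_-) + \Tr(\rho_1 P_+) = 1 - T$ that will collapse the computation.

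Assembling the total error, with outcome labels matched to the Helstrom-optimal decision so that the error events are ``$\rho_0$ produces the outcome whose effect is concentrated on $P_-$'' and ``$\rho_1$ produces the outcome whose effect is concentrated on $P_+$'', the sum $\Tr(\rho_0 M_{\alpha,1}^{*}M_{\alpha,1}) + \Tr(\rho_1 M_{\alpha,0}^{*}M_{\alpha,0})$ becomes the convex combination
\begin{equation*}
\tfrac{1}{e^{\delta}+1}(1+T) + \tfrac{e^{\delta}}{e^{\delta}+1}(1-T) \;=\; 1 - \tfrac{e^{\delta}-1}{e^{\delta}+1}\, T.
\end{equation*}
Converting the gentleness parameter via $\delta = 4\arctanh(\alpha)$, one has $\tfrac{e^{\delta}-1}{e^{\delta}+1} = \tanh(\delta/2) = \tanh(2\arctanh(\alpha)) = \tfrac{2\alpha}{1+\alpha^{2}}$ by the $\tanh$ double-angle formula, producing the claimed expression. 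The only delicate point is the bookkeeping between the sign convention of the Jordan decomposition and the label of the decision rule so that the coefficient of $T$ comes out negative rather than positive; once that pairing is fixed the rest is pure trace arithmetic and an $\alpha$--$\delta$ identity.
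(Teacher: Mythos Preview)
Your approach is essentially the paper's: compute the effect operators, express the two error probabilities as weighted traces against $P_\pm$, collapse via the Jordan decomposition, and finish with the $\tanh$ double-angle identity. The algebraic skeleton is correct, and your explicit derivation of the two scalar identities $\Tr(\rho_0 P_+)+\Tr(\rho_1 P_-)=1+T$ and $\Tr(\rho_0 P_-)+\Tr(\rho_1 P_+)=1-T$ is cleaner than the paper's one-line appeal to the non-gentle Helstrom error.

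There is, however, a genuine bookkeeping slip in the assembly step --- one that the paper shares. With the paper's conventions ($P_+$ the projection onto the positive part of $\rho_0-\rho_1$, and $\Delta_\alpha^*=1$ on outcome ``1''), the displayed sum $\Tr(\rho_0 M_{\alpha,1}^{*}M_{\alpha,1}) + \Tr(\rho_1 M_{\alpha,0}^{*}M_{\alpha,0})$ equals
\[
\tfrac{e^{\delta}}{e^{\delta}+1}(1+T)+\tfrac{1}{e^{\delta}+1}(1-T)=1+\tfrac{e^{\delta}-1}{e^{\delta}+1}\,T,
\]
not the convex combination you wrote; your verbal description of the error events (``$\rho_0$ on the $P_-$-outcome'', ``$\rho_1$ on the $P_+$-outcome'') is the Helstrom-optimal one, but it corresponds to $\Tr(\rho_0 M_{\alpha,0}^{*}M_{\alpha,0}) + \Tr(\rho_1 M_{\alpha,1}^{*}M_{\alpha,1})$, which \emph{does} give $1-\tfrac{e^{\delta}-1}{e^{\delta}+1}T$. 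In short, the decision rule stated in the paper (decide $H_1$ on outcome ``1'') is the wrong way round given that $M_{\alpha,1}$ is concentrated on $P_+$, the region where $\rho_0$ dominates; the paper's own proof hides this by asserting $\Tr[\rho_0 P_+ + \rho_1 P_-]=1-T$ when in fact it equals $1+T$. You correctly flag this as ``the only delicate point'', but your written formula and your claimed result do not match each other. The fix is a one-symbol swap: either exchange $P_+$ and $P_-$ in the definition of $M_{\alpha,1},M_{\alpha,0}$, or flip the decision rule to decide $H_0$ on outcome ``1''.
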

\begin{proof}
    The total error of $\Delta_{\alpha}^*$ is then given by
\begin{align*}
    &\mathcal{P}_0(\Delta_{\alpha}^* = 1) + \mathcal{P}_1(\Delta_{\alpha}^* = 0) 
    \\
    =& \Tr\left[ \rho_0 \left(\frac{e^{\delta}}{e^{\delta}+1} P_+ + \frac{1}{e^{\delta}+1} P_- \right) \right] + \Tr\left[ \rho_1 \left(\frac{1}{e^{\delta}+1} P_+ + \frac{e^{\delta}}{e^{\delta}+1} P_- \right) \right]
    \\
    =& \frac{1}{e^{\delta} +1 } \left( (e^{\delta}-1) \Tr\left[ \rho_0P_+ + \rho_1P_- \right] + \Tr\left[ \rho_0 + \rho_1 \right]\right)
    \\
    =& \frac{1}{e^{\delta}+1} \left[ e^{\delta}-1 - (e^{\delta} -1) \norm{\rho_0 - \rho_1}_{Tr} + 2 \right]
    \\
    =& 1 - \frac{e^{\delta} -1 }{e^{\delta} + 1} \norm{\rho_0 - \rho_1}_{Tr},
\end{align*}
where we used that $\Tr\left[ \rho_0P_+ + \rho_1P_- \right] = 1- \norm{\rho_0 - \rho_1}_{Tr}$, which is the error of the non-gentle Neyman-Pearson test.
\end{proof}

\begin{proof}[Proof of Theorem~\ref{thm::sharpness}]
To show results of Theorem \ref{thm::sharpness} we now further study the error probabilities of the gentle quantum Neyman-Pearson test. We know that the optimal error between the two distributions $\mathcal{P}_0^{R_{M_{\alpha}}}$ and $\mathcal{P}_1^{R_{M_{\alpha}}}$ is given by
\begin{equation*}
    1 - \norm{\mathcal{P}_0^{R_{M_{\alpha}}} - \mathcal{P}_1^{R_{M_{\alpha}}}}_{TV}.
\end{equation*}
Since the error of our particular test is higher than the minimal error, we obtain the inequality
\begin{align*}
    1 - \norm{\mathcal{P}_0^{R_{M_{\alpha}}} - \mathcal{P}_1^{R_{M_{\alpha}}}}_{TV} \leq 1 - \frac{e^{\delta} -1 }{e^{\delta} + 1} \norm{\rho_0 - \rho_1}_{Tr},
    \intertext{which is equivalent to}
    \norm{\mathcal{P}_0^{R_{M_{\alpha}}} - \mathcal{P}_1^{R_{M_{\alpha}}}}_{TV} \geq \frac{e^{\delta} -1 }{e^{\delta} + 1} \norm{\rho_0 - \rho_1}_{Tr}.
\end{align*}
Using Pinsker's inequality yields
\begin{align*}
    \frac{1}{2} D_{KL}\left( \mathcal{P}_0^{R_{M_{\alpha}}} \middle| \middle| \mathcal{P}_1^{R_{M_{\alpha}}} \right) \geq \left( \frac{e^{\delta} -1 }{e^{\delta} + 1} \norm{\rho_0 - \rho_1}_{Tr} \right)^2.
\end{align*}
Since Pinsker's inequality still holds if we switch $\mathcal{P}_0^{R_{M_{\alpha}}}$ and $\mathcal{P}_1^{R_{M_{\alpha}}}$, we obtain the bound 
\begin{equation}
\label{eqn::sharpness}
    D_{KL}^{sym}\left( \mathcal{P}_0^{R_{M_{\alpha}}} \middle| \middle| \mathcal{P}_1^{R_{M_{\alpha}}} \right) \geq \left( \frac{4 \alpha}{1 + \alpha^2}\right)^2 \norm{\rho_0 - \rho_1}_{Tr}^2
\end{equation}
on the symmetrized Kullback-Leibler divergence when we resubstitute $\alpha$ for $\delta$. 
\end{proof}

Comparing the result of Theorem \ref{thm::B3_C2} to the one of Theorem~\ref{thm::sharpness} gives
\begin{equation*}
    \left( \frac{8 \alpha}{(1-2\alpha)^2} \right)^2 \norm{\rho_1 - \rho_2}_{Tr}^2 \geq D_{KL}^{sym}\left( \mathcal{P}_0^{R_{M_{\alpha}}} \middle| \middle| \mathcal{P}_1^{R_{M_{\alpha}}} \right) \geq \left( \frac{4 \alpha}{1 + \alpha^2}\right)^2 \norm{\rho_0 - \rho_1}_{Tr}^2.
\end{equation*}
Moreover, note that our gentle quantum Neyman-Pearson test is comprised of positive operators. As shown in Lemma \ref{lem::improved_constant_positivity} under the additional assumption of positivity of the measurement operators we have
\begin{equation*}
    \left( \frac{4 \alpha}{(1-\alpha)^2} \right)^2 \norm{\rho_1 - \rho_2}_{Tr}^2 \geq D_{KL}^{sym}\left( \mathcal{P}_0^{R_{M_{\alpha}}} \middle| \middle| \mathcal{P}_1^{R_{M_{\alpha}}} \right) \geq \left( \frac{4 \alpha}{1 + \alpha^2}\right)^2 \norm{\rho_0 - \rho_1}_{Tr}^2.
\end{equation*}
Thus, in the case of positive measurements we see that the upper and lower bound on the symmetrized Kullback-Leibler divergence only differs in the denominator. We can calculate that
\begin{equation*}
    {\left( \frac{4 \alpha}{1 + \alpha^2}\right)^2} \, / \,{\left( \frac{4 \alpha}{(1-\alpha)^2} \right)^2 } = \frac{(1-\alpha)^4}{(1 + \alpha^2)^2} = 1 + 4\alpha + O(\alpha^2)
\end{equation*}
which shows that, as $\alpha$ tends to 0, both upper and lower bound tend to the same value, proving the asymptotic sharpness of our qDPI and the gentle quantum Neyman-Pearson test under these constraints. In the general case, we see that the qLS optimally distinguishes quantum states under all gentle measurements up to a constant. Since, for $\alpha = 1$, the qLS is given by the optimal measurement in the non-gentle case, we believe it to be optimal for all $\alpha$. 

\section{Theoretical guarantees for gentle quantum learning}\label{Sec:Nec}
We now demonstrate how Theorem \ref{thm::B3_C2} increases the necessary amount of samples for quantum learning from $n \geq \Omega(\frac{1}{\epsilon^2})$ to $n \geq \Omega(\frac{1}{\epsilon^2 \alpha^2})$ under the constraint of local gentleness. 
\subsection{Gentle quantum state certification}
In quantum state certification the goal is to test the two hypotheses
\begin{equation}
\label{eqn::test_problem}
    H_0: \rho = \rho_0 \hspace{10pt} \text{vs.} \hspace{10pt} H_1(\epsilon): \norm{\rho - \rho_0}_{Tr} \geq \epsilon
\end{equation}
using a combination of quantum measurements and classical post-processing of the data obtained by the measurements. In particular, for a given measurement $M = (M_y)_{y \in \mathcal{Y}}$ and a subsequent test function $\Delta_n: \mathcal{Y} \to \{0,1\}$, we are interested in the sum of errors
\begin{equation*}
    \mathcal{P}_e = \mathcal{P}_{\rho_0^{\otimes n}}(\Delta_n = 1) + \sup_{\rho \in H_1(\epsilon)} \mathcal{P}_{\rho^{\otimes n}}(\Delta = 0)
\end{equation*}
and its optimum for all possible test
\begin{equation*}
    \mathcal{P}_e^* = \inf_{(M_{\alpha}, \Delta_n)} \mathcal{P}_e.
\end{equation*}
Here, the infimum is over all possible pairs of locally-$\alpha$-gentle measurements $M_{\alpha}$ and subsequent test functions~$\Delta_n$. We will make use of the qDPI in Theorem \ref{thm::B3_C2} to show a lower bound on the number $n$ of copies needed. 
\begin{lemma}
    To solve the state certification problem (\ref{eqn::test_problem}) with probability $2/3$ under the constraint of local $\alpha$-gentleness, that is $\mathcal{P}_e^* \leq 1/3$, a total of $n \geq \Omega(\frac{1}{\epsilon^2 \alpha^2})$ copies of the state $\rho$ are needed.
\end{lemma}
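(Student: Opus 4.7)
The plan is to use Le Cam's two-point method, turning the composite alternative into a simple-versus-simple test, and then apply the qDPI from Theorem~\ref{thm::B3_C2} to control the total variation between the two outcome distributions. Since we want a lower bound, we only need to exhibit one inconvenient alternative; the hardest alternative for the tester determines the rate.

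First, I would pick an alternative $\rho_1 \in \mathcal{S}(\mathbb{C}^d)$ (for qubits, one may rotate $\rho_0$ on the Bloch sphere) such that $\norm{\rho_1 - \rho_0}_{Tr} = \epsilon$. Then $\rho_1^{\otimes n} \in H_1(\epsilon)$, so for any locally $\alpha$-gentle measurement $M = M_1 \otimes \ldots \otimes M_n$ and any test $\Delta_n$ based on $R_M$,
\begin{equation*}
    \mathcal{P}_e \ge \mathcal{P}_{\rho_0^{\otimes n}}(\Delta_n = 1) + \mathcal{P}_{\rho_1^{\otimes n}}(\Delta_n = 0) \ge 1 - \norm{\mathcal{P}_{\rho_0}^{R_M,n} - \mathcal{P}_{\rho_1}^{R_M,n}}_{TV}.
\end{equation*}
Taking infima, the same bound holds for $\mathcal{P}_e^*$, so I need to upper bound the right-most TV distance by $2/3$ to force $\mathcal{P}_e^* > 1/3$.

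Next, I would invoke Pinsker's inequality followed by additivity of the KL divergence for product distributions, which applies precisely because $M$ is a product measurement and the states are product states:
\begin{equation*}
    \norm{\mathcal{P}_{\rho_0}^{R_M,n} - \mathcal{P}_{\rho_1}^{R_M,n}}_{TV}^2 \le \tfrac{1}{2} D_{KL}(\mathcal{P}_{\rho_0}^{R_M,n} \| \mathcal{P}_{\rho_1}^{R_M,n}) \le \tfrac{1}{2} D_{KL}^{sym}(\mathcal{P}_{\rho_0}^{R_M,n} \| \mathcal{P}_{\rho_1}^{R_M,n}) = \tfrac{1}{2} \sum_{i=1}^n D_{KL}^{sym}(\mathcal{P}_{\rho_0}^{R_{M_i}} \| \mathcal{P}_{\rho_1}^{R_{M_i}}).
\end{equation*}
Each $M_i \in GM(\alpha, \mathcal{S}(\mathbb{C}^d))$ by definition of local $\alpha$-gentleness, so Theorem~\ref{thm::B3_C2} applied register-wise yields
\begin{equation*}
    \sum_{i=1}^n D_{KL}^{sym}(\mathcal{P}_{\rho_0}^{R_{M_i}} \| \mathcal{P}_{\rho_1}^{R_{M_i}}) \le n \left(\frac{8\alpha}{(1-2\alpha)^2}\right)^2 \epsilon^2 \lesssim n\alpha^2 \epsilon^2.
\end{equation*}

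Combining everything, if $\mathcal{P}_e^* \le 1/3$ then $\norm{\mathcal{P}_{\rho_0}^{R_M,n} - \mathcal{P}_{\rho_1}^{R_M,n}}_{TV} \ge 2/3$, which together with the chain above forces $n \alpha^2 \epsilon^2 \gtrsim 1$, i.e., $n \ge \Omega(1/(\alpha^2\epsilon^2))$. No step is technically difficult once the qDPI is available: the only conceptual point to verify carefully is the interchange of ``worst alternative'' and ``infimum over measurements,'' which is handled by the standard Le Cam reduction (fixing $\rho_1$ before the infimum). If a $d$-dependent factor is truly required in the stated bound, I would refine the construction by choosing a packing of $\Omega(d^2)$ alternatives around $\rho_0$ at trace distance $\epsilon$, apply Fano's inequality in place of the two-point bound, and recover the extra dimensional factor through the cardinality of the packing; otherwise the two-point argument above gives the claimed $\Omega(1/(\epsilon^2\alpha^2))$ rate.
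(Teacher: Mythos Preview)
Your proposal is correct and follows essentially the same route as the paper: reduce to a simple-versus-simple test via Le Cam, bound the total variation by Pinsker, tensorize the (symmetrized) Kullback--Leibler divergence over the $n$ registers, and apply Theorem~\ref{thm::B3_C2} to each factor. The only cosmetic difference is that the paper uses $TV^2 \le \tfrac14 D_{KL}^{sym}$ (from averaging Pinsker in both directions) rather than your $TV^2 \le \tfrac12 D_{KL}^{sym}$, which only affects the constant; your closing remark on Fano is extraneous here since the stated bound carries no $d$-dependence (the ``$p^2$'' in the statement is a typo for a constant).
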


\begin{proof}
We will show that for locally-$\alpha$-gentle measurements $M = M_1 \otimes ... \otimes M_n$, the distinguishability of the outcome distributions is decreased by the constraint of gentleness, such that every classical post-processing of the data needs an increased number of copies to solve the problem within a certain error. To do that, we see that when we apply an arbitrary locally-gentle measurement $M = M_1 \otimes ... \otimes M_n$ to the state $\rho^{\otimes n}$ we obtain a random outcome $y = (y_1,...,y_n)$. The distribution of the outcome $y$ is given by $\mathcal{P}_{\rho^{\otimes n}}^{R_M} = \mathcal{P}_{\rho}^{R_{M_1}} \otimes ... \otimes \mathcal{P}_{\rho}^{R_{M_n}}$. Based on this random outcome, we now chose to either reject or accept the null-hypothesis. This decision is given by the test function $\Delta_n$. We know that the optimal test-function $\Delta_n^*$ has an error that is given by
\begin{align}
\label{eqn::max_test_error}
    \mathcal{P}_e^* = \mathcal{P}_{\rho_0^{\otimes n}}^{R_M}(\Delta_n^* = 1) + \sup_{\rho \in H_1} \mathcal{P}_{\rho^{\otimes n}}^{R_M}(\Delta_n^* = 0).
\end{align}
It is easy to see that this error is larger than the error for a fixed $\rho_1 \in H_1$ with $\norm{\rho_1 - \rho_0}_{Tr} = \epsilon$. We write, using Pinsker's inequality,
\begin{align*}
    \frac{1}{3} \geq \mathcal{P}_e^* &\geq \mathcal{P}_{\rho_0^{\otimes n}}^{R_M}(\Delta_n^* = 1) + \mathcal{P}_{\rho_1^{\otimes n}}^{R_M}(\Delta_n^* = 0) \\
    &\geq 1 - \sqrt{\frac{1}{4} D_{KL}^{sym}\left(\mathcal{P}_{\rho_0^{\otimes n}}^{R_M} \middle| \middle|\mathcal{P}_{\rho_1^{\otimes n}}^{R_M} \right)}
    \geq 1 - \sqrt{\frac{1}{4} \sum_{i = 1}^n D_{KL}^{sym}\left( \mathcal{P}_{\rho_0}^{R_{M_i}} \middle| \middle| \mathcal{P}_{\rho_1}^{R_{M_i}} \right)}
    \\
    &\geq 1 - \frac{1}{2} \sqrt{n} \frac{8\alpha}{(1-2\alpha)^2} \norm{\rho_0 - \rho_1}_{Tr} 
    = 1 - \frac{4}{(1-2\alpha)^2} \sqrt{n} \cdot \alpha \epsilon
\end{align*}
which shows that in order to keep the total error below $1/3$ a total of
\begin{equation*}
    n \geq \left( \frac{(1-2\alpha)^2}{12} \right)^2 \frac{1}{\epsilon^2 \alpha^2}
\end{equation*}
copies of the state $\rho$, i.e. $n \geq \Omega(\frac{1}{\epsilon^2\alpha^2})$, are needed for gentle state certification. 
\end{proof}

\subsection{Gentle quantum tomography}
\label{sec::Tomography_lower_bound}

The question in quantum tomography is how many copies of an unknown state $\rho$ are needed to estimate it in trace norm up to a given error $\epsilon$. That is, we want to measure a given state $\rho^{\otimes n}$ using a gentle measurement $M = (M_y)_{y \in \mathcal{Y}}$ and give an estimator $\hat{\rho}_n: \mathcal{Y} \to \mathcal{S}(\mathbb{C}^2)$ such that its mean-squared-error (MSE)
\begin{equation*}
    \sup_{\rho} \mathbb{E}_{\rho}\left[ \norm{\hat{\rho}_n - \rho}_{Tr}^2 \right]
\end{equation*}
is at most $\epsilon^2$. In particular, we consider the minimax risk of this problem
\begin{equation*}
    \mathcal{R}^* = \inf_{(M_{\alpha}, \hat{\rho}_n)} \sup_{\rho} \mathbb{E}_{\rho}\left[ \norm{\hat{\rho}_n - \rho}_{Tr}^2 \right].
\end{equation*}
Here, the infimum is taken over all locally-$\alpha$-gentle measurement $M_{\alpha}$ and subsequent estimators $\hat{\rho}_n$. We will see that this problem is more difficult than the problem of quantum state certification. As such, we show a lower bound on the copies needed for tomography by relating it to the lower bound for state certification.
\begin{lemma}
    A total of $n \geq \Omega(\frac{1}{\epsilon^2 \alpha^2})$ copies of the state $\rho$ are needed in order to learn it up to an error $\epsilon$ in trace-norm. That is, in order to have $\mathcal{R}^* \leq \epsilon^2$, we need $n \geq \Omega(\frac{1}{\epsilon^2 \alpha^2})$.
\end{lemma}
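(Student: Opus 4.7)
The plan is to reduce tomography to state certification, then invoke the previously established certification lower bound. Suppose there exists a locally-$\alpha$-gentle measurement $M_\alpha$ together with an estimator $\hat\rho_n$ such that $\sup_\rho \mathbb{E}_\rho[\norm{\hat\rho_n - \rho}_{Tr}^2] \leq \epsilon^2$ using $n$ copies. I will turn this pair into a testing procedure for \eqref{eqn::test_problem} at separation $c\epsilon$, where $c$ is an absolute constant chosen below, and then apply the previous lemma with effective precision $c\epsilon$.

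Given a reference state $\rho_0$, define the test $\Delta_n$ by $\Delta_n = 0$ if and only if $\norm{\hat\rho_n - \rho_0}_{Tr} \leq c\epsilon/2$. Under $H_0: \rho = \rho_0$, Markov's inequality gives
\[
\mathcal{P}_{\rho_0^{\otimes n}}(\Delta_n = 1) = \mathcal{P}_{\rho_0^{\otimes n}}\!\left(\norm{\hat\rho_n - \rho_0}_{Tr} > c\epsilon/2\right) \leq \frac{\mathbb{E}_{\rho_0}[\norm{\hat\rho_n - \rho_0}_{Tr}^2]}{(c\epsilon/2)^2} \leq \frac{4}{c^2}.
\]
Under any alternative $\rho$ with $\norm{\rho - \rho_0}_{Tr} \geq c\epsilon$, the triangle inequality forces $\norm{\hat\rho_n - \rho}_{Tr} \geq c\epsilon/2$ on the event $\{\norm{\hat\rho_n - \rho_0}_{Tr} \leq c\epsilon/2\}$, so Markov's inequality applied to $\hat\rho_n - \rho$ yields
\[
\mathcal{P}_{\rho^{\otimes n}}(\Delta_n = 0) \leq \mathcal{P}_{\rho^{\otimes n}}\!\left(\norm{\hat\rho_n - \rho}_{Tr} \geq c\epsilon/2\right) \leq \frac{4}{c^2}.
\]
Taking the supremum over $\rho$ in $H_1(c\epsilon)$ preserves the bound $4/c^2$ because the MSE control is uniform in $\rho$. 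Choosing $c$ sufficiently large (for instance $c=5$) makes the total error $\mathcal{P}_e \leq 8/c^2$ at most $1/3$.

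The pair $(M_\alpha, \Delta_n)$ therefore solves the state certification problem at separation $c\epsilon$ with error at most $1/3$ using $n$ copies and a locally-$\alpha$-gentle measurement. The previous lemma then forces
\[
n \geq \Omega\!\left( \frac{1}{(c\epsilon)^2 \alpha^2} \right) = \Omega\!\left( \frac{1}{\epsilon^2 \alpha^2} \right),
\]
which is the desired lower bound. The only delicate point is the choice of the constant $c$: it must be large enough that both Markov bounds simultaneously fit inside the $1/3$ error budget, yet any finite $c$ only affects absolute constants and not the rate in $\epsilon$ or $\alpha$. No sharper dependence on $\epsilon$ or $\alpha$ can be obtained from a finer concentration inequality in place of Markov, so the reduction is tight up to numerical constants.
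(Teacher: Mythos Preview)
Your proof is correct and follows essentially the same approach as the paper: reduce tomography to certification by thresholding the estimator, control the resulting test errors via Markov's inequality and the triangle inequality, and invoke the certification lower bound. The only cosmetic difference is that the paper works forward, deriving an explicit lower bound on the minimax risk $\mathcal{R}^*\geq \min\{1/8,(1-2\alpha)^4/(864\,n\alpha^2)\}$ by optimizing over the free parameter $\epsilon$, whereas you argue by contrapositive with a fixed multiplier $c$; the ingredients and the resulting rate are identical.
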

\begin{proof}
As in state certification, we apply a gentle measurement $M = M_1 \otimes ... \otimes M_n$ to the state $\rho^{\otimes n}$ and obtain a random outcome $y = (y_1,...,y_n)$. For every possible random outcome $y$  we give an estimate $\hat{\rho}_n(y)$ of the state. The maximal error of this estimator is given by
\begin{equation*}
    \sup_{\rho} \mathbb{E}_{\rho}\left[ \norm{\hat{\rho}_n - \rho}_{Tr}^2 \right].
\end{equation*}
By the Markov inequality we can bound this expression from below for any $\epsilon > 0$ and a given fixed state $\rho_0$ by
\begin{align*}
    \frac{4}{\epsilon^2} \sup_{\rho} \mathbb{E}_{\rho}\left[ \norm{\hat{\rho}_n - \rho}_{Tr}^2 \right] &\geq \sup_{\rho} \mathcal{P}_{\rho}\left( \norm{\hat{\rho}_n - \rho}_{Tr} \geq \frac{\epsilon}{2} \right)
    \\
    &= \max\left\{ \mathcal{P}_{\rho_0}\left( \norm{\hat{\rho}_n - \rho_0}_{Tr}^2 \geq \frac{\epsilon}{2} \right), \sup_{\rho \neq \rho_0}\mathcal{P}_{\rho}\left( \norm{\hat{\rho}_n - \rho}_{Tr} \geq \frac{\epsilon}{2} \right) \right\}
    \\
    &\geq \frac{1}{2} \left( \mathcal{P}_{\rho_0}\left( \norm{\hat{\rho}_n - \rho_0}_{Tr} \geq \frac{\epsilon}{2} \right) + \sup_{\norm{\rho - \rho_0}_{Tr} \geq \epsilon} \mathcal{P}_{\rho}\left( \norm{\hat{\rho}_n - \rho}_{Tr} \geq \frac{\epsilon}{2} \right) \right)
    \intertext{Now, if $\norm{\rho - \rho_0} \geq \epsilon$ and $\norm{\hat{\rho}_n - \rho} < \frac{\epsilon}{2}$, then $\norm{\hat{\rho}_n - \rho_0} \geq \frac{\epsilon}{2}$, which further leads to}
    &\geq \frac{1}{2} \left( \mathcal{P}_{\rho_0}\left( \norm{\hat{\rho}_n - \rho_0}_{Tr} \geq \frac{\epsilon}{2} \right) + \sup_{\norm{\rho - \rho_0}_{Tr} \geq \epsilon} \mathcal{P}_{\rho}\left( \norm{\hat{\rho}_n - \rho_0}_{Tr} < \frac{\epsilon}{2} \right) \right)
    \\
    &= \frac{1}{2} \left( \mathcal{P}_{\rho_0}\left( \Delta = 1 \right) + \sup_{\norm{\rho - \rho_0}_{T r} \geq \epsilon} \mathcal{P}_{\rho}\left( \Delta = 0 \right) \right),
    \intertext{where $\Delta$ is the indicator-function on $\{ \norm{\hat{\rho}_n - \rho_0}_{Tr}^2 \geq \frac{\epsilon}{2} \}$. We can therefore bound the estimation error by the testing error (\ref{eqn::max_test_error}) for the testing problem (\ref{eqn::test_problem}) by}
    \sup_{\rho} \mathbb{E}_{\rho}\left[ \norm{\hat{\rho}_n - \rho}_{Tr}^2 \right] &\geq \frac{\epsilon^2}{8} \left( 1 - \frac{4}{(1-2\alpha)^2} \sqrt{n} \alpha \epsilon \right).
    \intertext{Chosing 
    \begin{equation*}
        \epsilon = \min \left\{1, \frac{(1-2\alpha)^4}{6 \alpha \sqrt{n}} \right\}
    \end{equation*}
    gives the final lower bound}
    \sup_{\rho} \mathbb{E}_{\rho}\left[ \norm{\hat{\rho}_n - \rho}_{Tr}^2 \right] &\geq \min \left\{ \frac{1}{8}, \frac{(1-2\alpha)^4}{864 n \alpha^2} \right\}.
\end{align*}
Since this final lower bound is independent of the actual estimator $\hat{\rho}_n$, it holds for all estimators, showing
\begin{equation*}
    \mathcal{R}^* \geq \min \left\{ \frac{1}{8}, \frac{(1-2\alpha)^4}{864 n \alpha^2} \right\}.
\end{equation*}
We see, that a total of $n \geq \Omega(\frac{1}{\epsilon^2\alpha^2})$ copies are needed to learn the state $\rho$ up to an error $\epsilon$ in trace-norm.
\end{proof}

\section{Attainability: gentle measurement techniques}
\label{sec::quantum_label_switch}
In the preceding section we have seen how we may use our qDPI to derive lower bounds of $n \geq \Omega(\frac{1}{\epsilon^2\alpha^2})$ on the amount of copies needed for gentle quantum state certification and tomography. We will now show that a sample size of $n = O(\frac{1}{\epsilon^2 \alpha^2})$ is not only necessary but also sufficient under those constraints. To do this, we demonstrate how gentle measurement techniques for qubits can be implemented using existing tools of quantum computing. 

\subsection{The quantum Label Switch measurement}
Suppose we want to measure a quantum state $\rho$ gently in the computational bases $\ket{0}, \ket{1}$. The typical measurement $M = (M_0, M_1)$ for rank-1 matrices
\begin{equation*}
    M_0 = \ket{0}\bra{0} \hspace{20pt} \text{and} \hspace{20pt} M_1 = \ket{1}\bra{1}
\end{equation*}
is not gentle for any $\alpha \in [0, 1)$ as we have seen in Section \ref{sec::properties}. In order to gentle-ize this measurement, we draw inspiration from classical differential privacy using the connection between the two concepts shown by \cite{aaronson2019gentle}. Let us specify a gentleness level $\alpha$ and again set 
\begin{equation*}
    \delta = 4 \arctanh(\alpha).
\end{equation*}
We then prepare a second ancillary state 
\begin{equation*}
    \sigma_{\delta} = \ket{\varphi_{\delta}}\bra{\varphi_{\delta}}, \quad \text{where }
\ket{\varphi_{\delta}} = \sqrt{\frac{e^{\delta}}{e^{\delta}+1}} \ket{0} + \sqrt{\frac{1}{e^{\delta}+1}}\ket{1}.
\end{equation*}
We then consider the product state $\rho \otimes  \sigma_{\delta}$ and pass it through a CNOT-gate, that is
\begin{equation*}
    \rho_{\delta} = \text{CNOT}(\rho \otimes  \sigma_{\delta})\text{CNOT}^*.
\end{equation*}
Measuring the second register of this state $\rho_{\delta}$ in the computational basis, that is by using the measurement $M = (\text{Id} \otimes \ket{y}\bra{y})_{y \in \{ 0,1 \}}$, disturbs the first register only by a slight amount, namely by $\alpha$, \textit{i.e.} $\norm{\rho - \rho_{M \to y}}_{Tr} \leq \alpha$ for $y = 0,1$. This works because the CNOT gate entangles the state $\rho \otimes \sigma_{\delta}$ in such a way that the measurement of the second register has only a small impact on the state of the first register. As a consequence, the information extracted from the system is dependent on $\alpha$ as well and decreases with decreasing $\alpha$. In comparison to standard tomography protocols for qubits \cite{Schmied_2016}, our mechanism requires both the preparation of the ancillary state as well as its entanglement with the original state. The measurement afterwards is the same for both the standard and the gentle tomography. Due to the locality of the procedure however, the ancillary register may be reused for multiple states, limiting the additional costs when compared to standard tomography.

Mathematically, the qLS measurement technique we just described can be written in the form of Definition~\ref{defn::quantum_measurement} as $M_{\alpha} = (M_{\alpha, 0}, M_{\alpha,1})$ where
\begin{equation}
    \label{eqn::label_switch_mathematical_formulation}
    M_{\alpha, 0} = \sqrt{\frac{e^{\delta}}{e^{\delta}+1}} \ket{0}\bra{0} + \sqrt{\frac{1}{e^{\delta}+1}} \ket{1}\bra{1} \text{ and } 
    M_{\alpha, 1} = \sqrt{\frac{1}{e^{\delta}+1}} \ket{0}\bra{0} + \sqrt{\frac{e^{\delta}}{e^{\delta}+1}} \ket{1}\bra{1}.
\end{equation}

This gentle-ization technique derives its name from the well known privatization mechanism for Bernoulli random variables \cite{steinberger2023efficiency}. Compared to non-gentle measurement $M = (M_0, M_1)$
we ``swap'' the outcomes with a probability of $\frac{1}{e^{\delta}+1}$ and keep the true outcome with a probability of $\frac{e^{\delta}}{e^{\delta}+1}$.
\begin{lemma}
    The qLS measurement defined in (\ref{eqn::label_switch_mathematical_formulation}) with $\delta = 4 \arctanh(\alpha)$ is $\alpha$-gentle on all qubits, i.e. $M_{\alpha} \in GM(\alpha, \mathcal{S}(\mathbb{C}^2))$.
\end{lemma}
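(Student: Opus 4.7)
My plan is to recognize this lemma as an immediate specialization of Lemma~\ref{lem::qLS_gentleness}, which already gentle-izes \emph{any} two-outcome PVM in arbitrary dimension. The pair $(P_0, P_1) := (\ket{0}\bra{0}, \ket{1}\bra{1})$ is a two-outcome PVM on $\mathbb{C}^2$: both operators are orthogonal rank-one projectors and satisfy the completeness relation $P_0 + P_1 = \mathbbm{1}$. Comparing term-by-term, the qLS operators $M_{\alpha, 0}, M_{\alpha, 1}$ in \eqref{eqn::label_switch_mathematical_formulation} coincide exactly with the operators produced by the construction in Lemma~\ref{lem::qLS_gentleness} applied to this PVM with parameter $\delta = 4 \arctanh(\alpha)$ (up to the cosmetic swap of labels 0 and 1, which is symmetric in that lemma). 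Invoking Lemma~\ref{lem::qLS_gentleness} then yields $M_\alpha \in GM(\alpha, \mathcal{S}(\mathbb{C}^2))$ without any further work.

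If one preferred a self-contained argument, the steps would be as follows. First, observe that $M_{\alpha, 0}$ and $M_{\alpha, 1}$ are diagonal in the computational basis with non-negative entries, hence positive semi-definite, so Lemma~\ref{lemma::3} reduces the problem to checking gentleness on pure states. For a pure qubit $\ket{\psi} = \gamma_0 \ket{0} + \gamma_1 \ket{1}$ with $|\gamma_0|^2 + |\gamma_1|^2 = 1$, one would compute the outcome probabilities $\mathcal{P}_\psi(R^{M_\alpha}=y) = \bra{\psi} M_{\alpha,y}^2 \ket{\psi}$ and the post-measurement state $\ket{\psi}_{M_\alpha \to y} \propto M_{\alpha,y}\ket{\psi}$, then apply Lemma~\ref{lemmaTraceNorm} to express
\begin{equation*}
\norm{\ket{\psi}\bra{\psi} - \ket{\psi}\bra{\psi}_{M_\alpha \to y}}_{Tr}^2 = 1 - |\bra{\psi}\ket{\psi_{M_\alpha \to y}}|^2
\end{equation*}
as a function of the single real parameter $\gamma_+ := |\gamma_0|^2 \in [0,1]$. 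Elementary calculus maximizes this function, yielding the bound $\tanh^2(\delta/4) = \alpha^2$, with the outcome $y=0$ case being symmetric to $y=1$.

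The main (and really only) step is identifying the qLS operators as an instance of the template from Lemma~\ref{lem::qLS_gentleness}; no genuine obstacle arises. One subtlety worth noting is that the paragraphs preceding the lemma also describe a physical implementation via a CNOT gate with an ancillary qubit $\sigma_\delta$; verifying that this physical description is equivalent to the measurement operators \eqref{eqn::label_switch_mathematical_formulation} amounts to a short computation along the lines of Naimark's dilation, but this equivalence is not required for the stated lemma, which concerns only the gentleness of the operators as given.
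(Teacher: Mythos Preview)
Your proposal is correct and matches the paper's own proof exactly: the paper states in one line that ``This result is a special case of Lemma~\ref{lem::qLS_gentleness} for $d = 2$,'' which is precisely your main argument. Your optional self-contained sketch also mirrors the computation done inside the proof of Lemma~\ref{lem::qLS_gentleness}, so there is nothing to add.
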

This result is a special case of Lemma \ref{lem::qLS_gentleness} for $d = 2$. In the following sections we study the statistical properties of the resulting outcome probabilities. We use the same methodology of label switching for both state certification and tomography in the gentle regime.

\subsection{Gentle quantum tomography of qubits}
\label{sec::Tomography_upper_bound}
A well known and easy tomography method for qubits is the ``scaled direct inversion'' \cite{Schmied_2016}. Any qubit can be represented in the Pauli basis as $\rho = \rho(\mathbbm{r}) = \frac{1}{2}\left( \text{Id} + \mathbbm{r}_x \sigma_x + \mathbbm{r}_y \sigma_y+ \mathbbm{r}_z \sigma_z \right)$ where $\mathbbm{r}$ is a real vector of length less than or equal to 1. The scaled inversion method aims at learning the vector $\mathbbm{r}$ by estimating the three axes individually using basis measurements for each axis, that is, measuring $M_x = (\ket{x_+}\bra{x_+}, \ket{x_-}\bra{x_-})$, $M_y = (\ket{y_+}\bra{y_+}, \ket{y_-}\bra{y_-})$ and $M_z = (\ket{z_+}\bra{z_+}, \ket{z_-}\bra{z_-})$. A gentle version of this measurement is given by applying the quantum Label Switch to each of the three basis measurements. We obtain the following result that is an upper bound on number of copies sufficient for tomography.

\begin{lemma}
\label{lem::gentle_tomography}
    There exists a locally-$\alpha$-gentle measurement  $M_{\alpha}$ using the qLS mechanism and an estimator $\hat{\rho}_n$ based on the outcomes of the measurement $M_{\alpha}$ that can learn a state $\rho$ up to an error of $\epsilon$ in trace-norm using $n = O(\frac{1}{\epsilon^2 \alpha^2})$ copies of $\rho$.
\end{lemma}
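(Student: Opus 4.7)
The plan is to gentle-ize the classical ``scaled direct inversion'' estimator by applying the qLS mechanism to each of the three Pauli basis measurements $M_x, M_y, M_z$ and then debiasing. Concretely, I would split the $n$ copies of $\rho$ into three equal groups of $n/3$ and apply, respectively, the qLS gentle-izations $M_{\alpha,x}, M_{\alpha,y}, M_{\alpha,z}$ constructed as in equation~\eqref{eqn::label_switch_mathematical_formulation} with $\delta = 4\arctanh(\alpha)$. Each of these is $\alpha$-gentle on $\mathcal{S}(\mathbb{C}^2)$ by the preceding lemma, so the product measurement lies in $LGM(\alpha, \mathcal{S}(\mathbb{C}^2)^n)$.

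Next I would compute the outcome distributions. For axis $i \in \{x,y,z\}$, the non-gentle basis measurement produces outcome ``$+$'' with probability $(1+\mathbbm{r}_i)/2$ on $\rho(\mathbbm{r})$. Using the identity $\tanh(\delta/2) = 2\alpha/(1+\alpha^2)$, a short calculation gives
\begin{equation*}
p_i := \mathcal{P}_\rho(R_{M_{\alpha,i}} = +) = \tfrac{1}{2} + \tfrac{\alpha}{1+\alpha^2}\,\mathbbm{r}_i.
\end{equation*}
Letting $\hat{p}_i$ denote the empirical frequency of outcome $+$ within the $i$-th group and inverting, the estimator $\hat{\mathbbm{r}}_i := \frac{1+\alpha^2}{\alpha}(\hat{p}_i - \tfrac12)$ is unbiased with variance at most $\frac{(1+\alpha^2)^2}{\alpha^2}\cdot\frac{p_i(1-p_i)}{n/3} \leq \frac{3(1+\alpha^2)^2}{4\alpha^2 n}$. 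Define $\hat{\rho}_n := \rho(\hat{\mathbbm{r}})$, with $\hat{\mathbbm{r}}$ projected onto the closed unit ball if $\norm{\hat{\mathbbm{r}}}_2 > 1$; since the true $\mathbbm{r}$ already lies in the ball, this projection can only reduce the Euclidean error.

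Combining with Lemma~\ref{lem::trace_norm_qubits} yields
\begin{equation*}
\mathbb{E}_\rho\!\left[\norm{\hat{\rho}_n-\rho}_{Tr}^2\right] = \tfrac{1}{4}\,\mathbb{E}_\rho\!\left[\norm{\hat{\mathbbm{r}}-\mathbbm{r}}_2^2\right] \leq \tfrac{1}{4}\sum_{i \in \{x,y,z\}}\Var(\hat{\mathbbm{r}}_i) \lesssim \frac{(1+\alpha^2)^2}{\alpha^2 n},
\end{equation*}
uniformly in $\rho$. Setting the right-hand side to be at most $\epsilon^2$ gives $n = O(1/(\epsilon^2\alpha^2))$, as required. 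There is essentially no real obstacle here: the only mildly subtle point is confirming the exact linear form of the qLS outcome bias, whose slope $\tanh(\delta/2) = 2\alpha/(1+\alpha^2)$ is precisely the contraction factor that inflates the Bernoulli variance by $1/\alpha^2$ after debiasing. Combined with the matching lower bound $n \geq \Omega(1/(\epsilon^2\alpha^2))$ from the previous section, this shows that qLS is sample-optimal among locally-$\alpha$-gentle measurements for qubit tomography.
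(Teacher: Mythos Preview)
Your proposal is correct and follows essentially the same approach as the paper: gentle-ize the three Pauli basis measurements via the qLS with $\delta = 4\arctanh(\alpha)$, debias the resulting Bernoulli frequencies using the contraction factor $\tanh(\delta/2) = 2\alpha/(1+\alpha^2)$, project onto the Bloch ball, and convert to trace-norm error via Lemma~\ref{lem::trace_norm_qubits}. The only cosmetic difference is that you explicitly split the $n$ copies into three groups of $n/3$ (the paper appears to budget $n$ copies per axis), which changes the constant by a factor of~$3$ but does not affect the $O(1/(\epsilon^2\alpha^2))$ conclusion.
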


\begin{proof}
As in the non-gentle case, we consider each axis separately. For the $z$-Axis, we consider the measurement $M_{\alpha, z} = (M_{\alpha, z_+}, M_{\alpha, z_-})$ given by 
\begin{align}
    M_{\alpha, z_+} =& \sqrt{\frac{e^{\delta}}{e^{\delta}+1}} \ket{z_+}\bra{z_+} + \sqrt{\frac{1}{e^{\delta}+1}} \ket{z_-}\bra{z_-} \notag
    \\ 
    &\text{and} \label{eqn::definition_qLS_z_direction}
    \\
    M_{\alpha, z_-}  = & \sqrt{\frac{1}{e^{\delta}+1}} \ket{z_+}\bra{z_+} + \sqrt{\frac{e^{\delta}}{e^{\delta}+1}} \ket{z_-}\bra{z_-} \notag
\end{align}
for $\delta = 4 \arctanh(\alpha)$. Its gentleness on $\mathcal{S}(\mathbb{C}^2)$ is assured by Lemma \ref{lem::qLS_gentleness}. We can actually calculate the post-measurement state for an arbitrary input state $\rho(\mathbbm{r})$. Although this description does not aid in the calculation of its gentleness, we state it in Lemma \ref{lem::post_measurement_state_qls} in the supplementary material. When measuring a state $\rho(\mathbbm{r})$ using the measurement $M_{\alpha, z}$ we obtain outcome $z_+$ with probability
\begin{align*}
        \mathcal{P}_{\mathbbm{r}}\left( R^{M_{\alpha, z}} = z_+ \right) =& \Tr\left[ \rho(\mathbbm{r}) M_{\alpha, z_+} M_{\alpha, z_+} \right]
        \\
        =& \left( \frac{e^{\delta}}{e^{\delta}-1} \Tr[\ket{{z_+}}\bra{{z_+}} \rho(\mathbbm{r})] + \frac{1}{e^{\delta}-1} \Tr[\ket{{z_-}}\bra{{z_-}} \rho(\mathbbm{r})] \right)
        \\
        =& \frac{1}{2} \frac{1}{e^{\delta}-1} \bigg( e^{\delta} \bra{{z_+}}\mathbbm{1}_{\mathbb{C}^2}+\mathbbm{r}_x\sigma_x + \mathbbm{r}_y\sigma_y + \mathbbm{r}_z\sigma_z \ket{{z_+}} 
        \\
        &+ \bra{{z_-}}\mathbbm{1}_{\mathbb{C}^2}+\mathbbm{r}_x\sigma_x + \mathbbm{r}_y\sigma_y + \mathbbm{r}_z\sigma_z \ket{{z_-}} \bigg)
        \\
        =& \frac{1}{2} \frac{1}{e^{\delta}-1} \Big( e^{\delta}(1+\mathbbm{r}_z) + 1 - \mathbbm{r}_z \Big)
        \\
        =& \frac{1}{2} + \frac{1}{2} \frac{e^{\delta}-1}{e^{\delta}+1} \mathbbm{r}_z =: p_{z_+}.
        \intertext{Analogously, we find}
        \mathcal{P}_{\mathbbm{r}}\left( R^{M_{\alpha, z}} = z_- \right) =& \frac{1}{2} - \frac{1}{2} \frac{e^{\delta}-1}{e^{\delta}+1} \mathbbm{r}_z =: p_{z_-}.
\end{align*}
The expectation of the random variable $R^{M_{\alpha, z}}$ is given by
\begin{equation*}
    \mathbb{E}_{\mathbbm{r}}\left[ R^{M_{\alpha, z}} \right] = p_{z_+} - p_{z_-} = \frac{e^{\delta}-1}{e^{\delta}+1} \mathbbm{r}_z.
\end{equation*}
When measuring $M_{\alpha, z}$ $n$times on $n$ identical copies of $\rho(\mathbbm{r})$ we obtain the outcome $z_+$ $n_{z_+}$-times and $z_-$ is measured $n_{z_-} = n - n_{z_-}$-times. Using the linearity of the expected value, we can see that the expectation of the difference in relative frequencies is given as
\begin{equation*}
    \mathbb{E}_{\mathbbm{r}}\left[ \frac{n_{z_+} - n_{z_-}}{n} \right] = \frac{1}{n} \mathbb{E}_{\mathbbm{r}}\left[ n_{z_+} - n_{z_-} \right] = p_{z_+} - p_{z_-} = \frac{e^{\delta}-1}{e^{\delta}+1} \mathbbm{r}_z.
\end{equation*}
The factor $({e^{\delta}+1})/({e^{\delta}-1}) = ({\alpha^2+1})/({2\alpha} )$ appearing is due to gentleness. It introduces a bias whose correction is the reason for the additional factor of $\alpha^2$ appearing in the bound. Let us consider
\begin{equation*}
    \Tilde{\mathbbm{r}}_{n,z} = \frac{e^{\delta}+1}{e^{\delta}-1} \frac{n_{z_+} - n_{z_-}}{n}.
\end{equation*}
$\Tilde{\mathbbm{r}}_{n,z}$ is then an unbiased estimator of $\mathbbm{r}_z$. Its mean squared error is given by the variance
\begin{align*}
    \mathbb{E}_{\mathbbm{r}}\left[ \norm{ \Tilde{\mathbbm{r}}_{n,z} - \mathbbm{r}_z }_2^2\right] = \left( \frac{e^{\delta}+1}{e^{\delta}-1} \right)^2 \Var_{\mathbbm{r}}\left[\frac{n_{z_+} - n_{z_-}}{n}\right] &= \left(\frac{e^{\delta}+1}{e^{\delta}-1}\right)^2 \Var_{\mathbbm{r}}\left[ \frac{2 n_{z_+} - n}{n} \right] 
    \\
    &= \frac{4 (e^{\delta}+1)^2}{n (e^{\delta}-1)^2} \, p_{z_+}p_{z_-}
    \\
    &\leq \frac{(e^{\delta}+1)^2}{n (e^{\delta}-1)^2}.
\end{align*}
When we define estimators of $\mathbbm{r}_x$ and $\mathbbm{r}_y$ in an analogous way, we obtain an estimator of the full parameter vector
\begin{equation*}
    \Tilde{\mathbbm{r}}_n = \left[ \Tilde{\mathbbm{r}}_{n,x}, \Tilde{\mathbbm{r}}_{n,y}, \Tilde{\mathbbm{r}}_{n,z} \right].
\end{equation*}
The $\ell_2$-error is given by
\begin{align*}
    \mathbb{E}_{\mathbbm{r}}\left[ \norm{\Tilde{\mathbbm{r}}_n - \mathbbm{r}}_2^2 \right] &= \mathbb{E}_{\mathbbm{r}}\left[ \norm{ \Tilde{\mathbbm{r}}_{n,x} - \mathbbm{r}_x }_2^2\right] + \mathbb{E}_{\mathbbm{r}}\left[ \norm{ \Tilde{\mathbbm{r}}_{n,y} - \mathbbm{r}_y }_2^2\right] + \mathbb{E}_{\mathbbm{r}}\left[ \norm{ \Tilde{\mathbbm{r}}_{n,z} - \mathbbm{r}_z }_2^2\right]
    \\
    &= \frac{4 (e^{\delta}+1)^2}{n (e^{\delta}-1)^2} \left( p_{x_+}p_{x_-} + p_{y_+}p_{y_-} + p_{z_+}p_{z_-}  \right)
    \\
    &\leq \frac{3 (e^{\delta}+1)^2}{n (e^{\delta}-1)^2}.
\end{align*}
Since $\norm{\Tilde{\mathbbm{r}}_n}_2$ is not necessarily smaller than 1, $\Tilde{\mathbbm{r}}_n$ is not necessarily the parameter vector of a valid quantum state. To remedy this problem, we scale the estimator down whenever it is too large. Geometrically this is the projection of the (not-necessarily positive) matrix $\rho(\Tilde{\mathbbm{r}}_n)$ onto the convex subspace of quantum states in $\mathbb{C}^2$. For qubits, this can be easily calculated as
\begin{equation*}
    \hat{\rho}_n = \rho(\hat{\mathbbm{r}}_n) \hspace{10pt} \text{where } \hat{\mathbbm{r}}_n = \frac{\Tilde{\mathbbm{r}}_n}{\norm{\Tilde{\mathbbm{r}}_n}_2 \vee 1 }.
\end{equation*}
Since $\hat{\rho}_n = \rho(\hat{\mathbbm{r}}_n)$ is a projection onto the convex set of states, its error is smaller than the error of the non-projected $\rho(\Tilde{\mathbbm{r}}_n)$. Using Lemma \ref{lem::trace_norm_qubits}, we obtain
\begin{align}
    \mathbb{E}_{\mathbbm{r}}\left[ \norm{\rho(\hat{\mathbbm{r}}_n) - \rho(\mathbbm{r})}_{Tr}^2 \right] &= \frac{1}{4} \mathbb{E}_{\mathbbm{r}}\left[ \norm{\hat{\mathbbm{r}}_n - \mathbbm{r}}_2^2 \right] \notag
    \\
    \label{eqn::error_tomography_qubits}
    &\leq \frac{1}{4} \mathbb{E}_{\mathbbm{r}}\left[ \norm{\Tilde{\mathbbm{r}}_n - \mathbbm{r}}_2^2 \right] \leq \frac{3 (e^{\delta}+1)^2}{4 n (e^{\delta}-1)^2} = \frac{3(\alpha^2 +1)^2}{16 \alpha^2 n}. 
\end{align}
This error bound shows that in order to learn an arbitrary state $\rho(\mathbbm{r})$ up to an an error of $\epsilon$ in trace-norm. A total of of $n = O(\frac{1}{\epsilon^2 \alpha^2})$ copies of the state are sufficient. In Section \ref{sec::Tomography_lower_bound} we have seen that this loss is unavoidable, which shows that the proposed quantum Label Switch mechanism is optimal for gentle state tomography.
\end{proof}

\subsection{Gentle quantum state certification}
The quantum Label Switch is also suitable for gentle state certification. That is, we may use the same measurement technique to derive a tool for solving gentle state certification. 
\begin{lemma}
\label{lem::gentle_state_certification}
    There exists a locally-$\alpha$-gentle measurement $M_{\alpha}$ using the qLS mechanism that can solve the state certification problem (\ref{eqn::test_problem}) with probability $2/3$ using $n = O(\frac{1}{\epsilon^2 \alpha^2})$ copies of $\rho$.
\end{lemma}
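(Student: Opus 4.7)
The plan is to reduce state certification to the tomography result already established in Lemma \ref{lem::gentle_tomography}. The qLS-based tomography procedure already produces an estimator $\hat{\rho}_n$ whose mean-squared trace-norm error is bounded by $\tfrac{3(\alpha^2+1)^2}{16\alpha^2 n}$. I will simply apply a plug-in test based on this estimator and show that the same $n = O(1/(\epsilon^2\alpha^2))$ suffices for testing.

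Concretely, I would use the locally-$\alpha$-gentle measurement $M_\alpha = M_{\alpha,x}^{\otimes n_1} \otimes M_{\alpha,y}^{\otimes n_2} \otimes M_{\alpha,z}^{\otimes n_3}$ of Section~\ref{sec::Tomography_upper_bound} (splitting the $n$ copies in three equal groups, one per Pauli axis), compute the corresponding estimator $\hat{\rho}_n$ from Lemma \ref{lem::gentle_tomography}, and then define the test
\begin{equation*}
    \Delta_n = \mathbbm{1}\!\left\{ \norm{\hat{\rho}_n - \rho_0}_{Tr} > \tfrac{\epsilon}{2} \right\}.
\end{equation*}
Gentleness of $\Delta_n$ is inherited from $M_\alpha$ since the post-processing step does not touch the quantum system. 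It then remains to control the two error terms in $\mathcal{P}_e$.

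For the type-I error, under $H_0$ the state is $\rho_0$, and by Markov's inequality combined with the tomography bound \eqref{eqn::error_tomography_qubits},
\begin{equation*}
    \mathcal{P}_{\rho_0^{\otimes n}}(\Delta_n = 1)
    = \mathcal{P}_{\rho_0^{\otimes n}}\!\left( \norm{\hat{\rho}_n - \rho_0}_{Tr} > \tfrac{\epsilon}{2} \right)
    \leq \frac{4}{\epsilon^2}\, \mathbb{E}_{\rho_0}\!\left[\norm{\hat{\rho}_n - \rho_0}_{Tr}^2\right]
    \leq \frac{3(\alpha^2+1)^2}{4 \alpha^2 n\, \epsilon^2}.
\end{equation*}
For the type-II error, if $\rho \in H_1(\epsilon)$ so that $\norm{\rho - \rho_0}_{Tr} \geq \epsilon$, then the triangle inequality gives $\{\Delta_n = 0\} \subseteq \{\norm{\hat{\rho}_n - \rho}_{Tr} \geq \epsilon/2\}$, hence the same Markov argument applied to the true state $\rho$ yields the same upper bound uniformly over $H_1(\epsilon)$. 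Summing the two contributions and choosing $n$ large enough that each one is at most $1/6$ gives $\mathcal{P}_e \leq 1/3$, which forces $n \geq \tfrac{9(\alpha^2+1)^2}{2 \alpha^2 \epsilon^2} = O\!\left(\tfrac{1}{\epsilon^2 \alpha^2}\right)$, matching the lower bound of Section~\ref{Sec:Nec}.

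There is no real obstacle here beyond bookkeeping: the nontrivial content was packed into Lemma~\ref{lem::gentle_tomography} (the qLS tomography rate) and into the local gentleness of the qLS measurement from Lemma~\ref{lem::qLS_gentleness}. The only subtlety worth a remark is that the type-II bound must be uniform in $\rho$ over $H_1(\epsilon)$, which is automatic since the tomography MSE bound is uniform in the underlying state; this is precisely why the reduction from certification to tomography is essentially free and why one recovers the optimal $O(1/(\epsilon^2\alpha^2))$ rate rather than a suboptimal one.
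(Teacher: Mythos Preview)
Your proposal is correct and follows essentially the same approach as the paper: use the qLS tomography estimator $\hat{\rho}_n$, define the plug-in test $\Delta_n = \mathbbm{1}\{\norm{\hat{\rho}_n - \rho_0}_{Tr} > \epsilon/2\}$, and control both error terms via the triangle inequality and Markov's inequality applied to the uniform MSE bound \eqref{eqn::error_tomography_qubits}. The resulting constant $\tfrac{3(\alpha^2+1)^2}{2\alpha^2 n \epsilon^2}$ for the total error matches the paper's bound exactly.
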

\begin{proof}
Suppose we are given $n$ identical copies of the unknown state $\rho = \rho(\mathbbm{r})$ and want to know whether $\rho$ is equal to a fixed reference state $\rho_0 = \rho(\mathbbm{r}_0)$ or $\epsilon$ far away from it in trace-norm for a given $\epsilon > 0$ under the constraint of gentleness. One way to do this, is by measuring the state gently and obtain an estimator $\hat{\rho}_n = \rho(\hat{\mathbbm{r}}_n)$ of $\rho$ as we did for tomography in Section \ref{sec::Tomography_upper_bound}. We decide $H_0$ if $\norm{\hat{\rho}_n - \rho_0}_{Tr} \leq \frac{\epsilon}{2}$ and $H_1$ if $\norm{\hat{\rho}_n - \rho_0}_{Tr} > \frac{\epsilon}{2}$. The total error is then given by
\begin{align*}
    \mathcal{P}_{\rho_0}\left(\norm{\hat{\rho}_n - \rho_0}_{Tr} > \frac{\epsilon}{2}\right) + \sup_{\rho: \norm{\rho - \rho_0}_{Tr} > \epsilon} \mathcal{P}_{\rho}\left(\norm{\hat{\rho}_n - \rho_0}_{Tr} \leq \frac{\epsilon}{2}\right).
\end{align*}
Now, by the triangle-inequality, if $\norm{\rho - \rho_0}_{Tr} > \epsilon$ and $\norm{\hat{\rho}_n - \rho_0}_{Tr} \leq \epsilon/2$, then $\norm{\hat{\rho}_n - \rho} > \epsilon/2$. This allows us to upper bound the error from above by
\begin{align*}
    \mathcal{P}_{\rho_0}\left(\norm{\hat{ \rho}_n - \rho_0}_{Tr} > \frac{\epsilon}{2} \right) + \sup_{\rho: \norm{\rho - \rho_0}_{Tr} > \epsilon} \mathcal{P}_{\rho}\left(\norm{\hat{\rho}_n - \rho}_{Tr} > \frac{\epsilon}{2} \right).
\end{align*}
The Markov inequality lets us further bound this term from above by
\begin{align*}
    \frac{4}{\epsilon^2} \mathbb{E}_{\rho_0}\left[ \norm{\hat{\rho}_n - \rho_0}_{Tr}^2 \right] + \frac{4}{\epsilon^2} \sup_{\rho: \norm{\rho - \rho_0}_{Tr} > \epsilon} \mathbb{E}_{\rho}\left[ \norm{\hat{\rho}_n - \rho}_{Tr}^2 \right] \leq \frac{8}{\epsilon^2} \sup_{\rho} \mathbb{E}_{\rho}\left[ \norm{\hat{\rho}_n - \rho}_{Tr}^2 \right]
\end{align*}
We see that this last term is the error of the tomography problem given by equation~\eqref{eqn::error_tomography_qubits}. This gives us the final upper bound
\begin{equation*}
    \mathcal{P}_{\rho_0}\left(\norm{\hat{\rho}_n - \rho_0}_{Tr} > \frac{\epsilon}{2}\right) + \sup_{\rho: \norm{\rho - \rho_0}_{Tr} > \epsilon} \mathcal{P}_{\rho}\left(\norm{\hat{\rho}_n - \rho_0}_{Tr} \leq \frac{\epsilon}{2}\right) \leq \frac{3 (1+\alpha^2)^2}{2} \frac{1}{n \epsilon^2 \alpha^2}.
\end{equation*}
This shows that for state tomography, a total of $n = O\left( \frac{1}{\epsilon^2 \alpha^2} \right)$ copies of $\rho$ are needed to bound the right hand side by $1/3$, again showing the additional factor of $\frac{1}{\alpha^2}$ appearing.
\end{proof}

\section{Discussion}
We have studied quantum statistical learning of qubits under the effects of gentle measurements. The main result of this paper is the strong quantum data processing inequality Theorem~\ref{thm::B3_C2} that relates the symmetrized Kullback-Leibler-divergence of the probability distributions of two states from a gentle measurement to the trace distance of the states and the gentleness parameter $\alpha$. This result allows us to show a necessary increase of the effective sample size for state tomography and state certification from $\frac{1}{ \epsilon^2 }$ to $\frac{1}{ \epsilon^2 \alpha^2}$. By comparison with our gentle Neyman-Pearson test, we show that it is sharp in the high-gentleness regime, that is $\alpha$ being small. Furthermore, by introducing the quantum label switch (qLS) we propose a physically implementable mechanism that attains the rate from the lower bound. 

\subsection{From local to global gentleness}
Having explored the local impact of locally-gentle measurements, it is a natural next step to study the global impact of gentle measurements, in the sense that we do not view $\rho^{\otimes n}$ as $n$ individual systems but as one singular enlarged system. A local (product) measurement that perturbs every subsystem naturally perturbs the whole system even more. The question is whether or not this global perturbation can also be limited by a local measurement. The answer to this question is no. Let us assume that such a global $\alpha_g$-gentle product measurement $M = M_1 \otimes... \otimes M_n$ exists. Lemma~\ref{lemma::3} assures us that it suffices to study the gentleness of this measurement on pure states only. By Lemma~\ref{lemmaTraceNorm} we then know that each $M_i$ is $\alpha_l$-gentle on its respective register where
\begin{equation}
\label{eqn::relation_globa_local_gentleness}
        \alpha_g = \sqrt{1 - (1 - \alpha_l^2)^n} .
\end{equation}
For $\alpha_g$ to stay bounded away from $1$, we need $\alpha_l \lesssim n^{-\frac{1}{2}}$. The result of Theorem \ref{thm::sharpness} then shows that the measurement $M$ cannot be globally-$\alpha_g$-gentle on product states while being informative for any fixed $\alpha_g < 1$, in the sense that there exists an estimator whose error tends to $0$ as $n$ grows. In order to achieve global gentleness, one must apply a global (product) measurement such as the Laplace-mechanism in \cite{aaronson2019gentle}, which is inspired by the classical Laplace mechanism in differential privacy \cite{duchi2014localprivacydataprocessing}. This mechanism however also requires access to- and manipulation of the system as a whole which is currently physically unfeasible.

\begin{table}[ht]
\renewcommand{\arraystretch}{1.8} 
    \centering
    \begin{tabular}{|c|c|c|}
        \hline
        \diagbox[width=12em, height=4em]{System}{Measurement} & Global/entangled & Local/unentangled \\
        \hline
        Global & Laplace mechanism \cite{aaronson2019gentle} &  Impossible by~\eqref{eqn::relation_globa_local_gentleness} \\
        \hline
        Local & - & qLS in Lemmas \ref{lem::gentle_state_certification}, \ref{lem::gentle_tomography} \\
        \hline
    \end{tabular}
    \caption{\justifying An overview of which measurements achieve local or global gentleness on a system of $n$ identical states $\rho^{\otimes n}$. In the local case we view the system as composed of $n$ individual states $\rho_1,...,\rho_n$ and study the gentleness on each singular system. In the global case we view the system as one large system $\rho_1 \otimes ... \otimes \rho_n$ and study the global gentleness on the whole system. We do not study the impact of global measurements on local systems because a local system ceases to be local as soon as we entangle it with the others.}
    \label{tab::Gentleness_per_states}
\end{table}

\subsection{From qubits to qudits}
Although qubits form the building blocks of many quantum systems, most importantly quantum computers, it would be interesting to study the properties of gentle measurements on higher dimensional quantum states. Most of the results we presented here are equally valid for $d$-dimensional as for $2$-dimensional states. However, as is the case for classical inference under local differential privacy, we expect the optimal rates to be dependent on the dimension $d$. However, deriving such results would require developing a far deeper understanding of gentle measurements than what has already been done. We believe it to be an interesting topic for future research to study the effects of gentle measurements on general $d$-dimensional quantum states. 

Another possible future topic of study are adaptive sequential gentle measurements, which will definitely improve upon the results of the qLS mechanism. It remains open whether this improvement will be substantial in that it changes the dependence of the rate on $\alpha$. The authors of this paper conjecture that the minimax optimal rate for both state tomography and certification will be $\Theta(\frac{1}{\epsilon^2 \alpha^2})$.

As with classical differential privacy, where a lot of advancements have been made in recent years \cite{acharya_inference_2021, Butucea2020, Butucea2023, Cai2021, duchi2014localprivacydataprocessing, Rohde2020, Wasserman2010}, one can consider even more involved problems like estimating nonparametric quantum states or quadratic functionals of such. However one must keep in mind that although connected, differential privacy and gentleness are still distinctive concepts that require different treatments. For example, one cannot simply project a state onto a lower dimensional subspace gently. As such, typical projection estimators cannot easily be adapted to gentleness. While the exact implications of gentleness on infinite dimensional systems are therefore still unexplored, we believe it to still reduce the effective sample size in those regimes. As such, we there are many interesting discoveries to be made for gentle measurements.

\section{Additional proofs and calculations}
\label{sec::additioinal_proofs_and_calculations}
\subsection{Explicit calculation of the post-measurement state under the qLS-mechanism.}
Let us give an explicit calculation of the post-measurement state of a qubit after the measurement of the qLS mechanism in the $z$-direction. The post-measurement state in the $x$ and $y$ directions can be calculated analogously by permuting the corresponding parameters. The calculations are straightforward, but long.

\begin{lemma} 
\label{lem::post_measurement_state_qls}
For the post measurement states of a state $\rho(\mathbbm{r})$ under the measurement $M_{\alpha, z}$ defined in~\eqref{eqn::definition_qLS_z_direction}, it holds
    \begin{align*}
        \rho(\mathbbm{r})_{M_{\alpha, z} \to z_+} = \rho(\mathbbm{r}_{M_{\alpha, z} \to z_+}),
    \end{align*}
    where
    \begin{align*}
        \mathbbm{r}_{M_{\alpha, z} \to z_+} = \frac{1}{e^{\delta} + 1 +\mathbbm{r}_z (e^{\delta} -1)} \Big[ 2 e^{\frac{\delta}{2}} \mathbbm{r}_x, \;  2 e^{\frac{\delta}{2}} \mathbbm{r}_y, \; \mathbbm{r}_z (e^{\delta} +1) + e^{\delta} -1 \Big]
    \end{align*}
    and
    \begin{align*}
        \rho(\mathbbm{r})_{M_{\alpha, z} \to z_-} = \rho(\mathbbm{r}_{M_{\alpha, z} \to z_-}),
    \end{align*}
    where
    \begin{align*}
        \mathbbm{r}_{M_{\alpha, z} \to z_-} = \frac{1}{e^{\delta} + 1 - \mathbbm{r}_z (e^{\delta} -1)} \Big[ 2 e^{\frac{\delta}{2}} \mathbbm{r}_x, \;  2 e^{\frac{\delta}{2}} \mathbbm{r}_y, \; \mathbbm{r}_z (e^{\delta} +1) - (e^{\delta} -1)\Big].
    \end{align*}
\end{lemma}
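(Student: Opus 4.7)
The plan is to verify the claim by direct calculation, exploiting the fact that both $M_{\alpha, z_+}$ and $M_{\alpha, z_-}$ are diagonal in the computational basis $\ket{z_+}=\ket{0}, \ket{z_-}=\ket{1}$. Writing $a = \sqrt{e^{\delta}/(e^{\delta}+1)}$ and $b = \sqrt{1/(e^{\delta}+1)}$, we have $M_{\alpha, z_+} = \operatorname{diag}(a,b)$ (which is hermitian, so $M_{\alpha, z_+}^* = M_{\alpha, z_+}$), and the Pauli representation gives
\begin{equation*}
\rho(\mathbbm{r}) = \frac{1}{2}\begin{bmatrix} 1+\mathbbm{r}_z & \mathbbm{r}_x - i\mathbbm{r}_y \\ \mathbbm{r}_x + i\mathbbm{r}_y & 1-\mathbbm{r}_z \end{bmatrix}.
\end{equation*}
So the conjugated matrix $M_{\alpha, z_+}\rho(\mathbbm{r})M_{\alpha, z_+}^*$ is a simple sandwich of a $2\times 2$ matrix by two diagonal matrices, and its entries can be written down explicitly in terms of $a^2$, $b^2$, and $ab = e^{\delta/2}/(e^{\delta}+1)$.

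Next, I would extract the normalization factor. The denominator is $\mathcal{P}_{\mathbbm{r}}(R^{M_{\alpha,z}} = z_+)$, which the paper has already computed in the proof of Lemma~\ref{lem::gentle_tomography}:
\begin{equation*}
p_{z_+} = \tfrac{1}{2}\left( 1 + \mathbbm{r}_z \tfrac{e^{\delta}-1}{e^{\delta}+1}\right) = \tfrac{1}{2(e^{\delta}+1)}\bigl((e^{\delta}+1) + \mathbbm{r}_z(e^{\delta}-1)\bigr).
\end{equation*}
Dividing the sandwiched matrix by $p_{z_+}$ yields the $2\times 2$ density matrix $\rho(\mathbbm{r})_{M_{\alpha,z}\to z_+}$, and I then read off its Bloch coordinates via the inversion formulas $\mathbbm{r}'_x = 2\operatorname{Re}(\rho'_{12})$, $\mathbbm{r}'_y = -2\operatorname{Im}(\rho'_{12})$, $\mathbbm{r}'_z = 2\rho'_{11}-1$. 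The off-diagonal entries produce immediately the factor $2 e^{\delta/2}\mathbbm{r}_{x,y}$ in the numerator with denominator $(e^{\delta}+1) + \mathbbm{r}_z(e^{\delta}-1)$, matching the claim. For the $z$-component, a brief calculation rearranges $a^2(1+\mathbbm{r}_z) - p_{z_+}$ into $\tfrac{1}{2}[(a^2-b^2) + \mathbbm{r}_z(a^2+b^2)] = \tfrac{1}{2(e^{\delta}+1)}[(e^{\delta}-1) + \mathbbm{r}_z(e^{\delta}+1)]$, producing the stated formula.

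The case $y = z_-$ is obtained by the same computation after interchanging the roles of $a$ and $b$ (equivalently, replacing $e^{\delta}$ by $e^{-\delta}$ in the off-diagonal factor and negating the $\mathbbm{r}_z$ shift); this flips the sign of the constant shift in the $z$-component and flips the sign of $(e^{\delta}-1)\mathbbm{r}_z$ in the denominator, recovering the second displayed formula.

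The proof is essentially bookkeeping rather than conceptually hard; the only place where care is required is in the algebraic simplification of $\mathbbm{r}'_z$, where one must keep track of how the common factor $(e^{\delta}+1)$ cancels between numerator and denominator so that the final expression matches the form stated in the lemma. Since the calculation uses nothing beyond Definition~\ref{defn::quantum_measurement} and the Pauli parametrization~\eqref{eqn::qubit_in_pauli_basis}, the proof is self-contained and can be written in a few short lines once the matrix products are expanded.
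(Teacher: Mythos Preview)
Your approach is correct and is conceptually the same as the paper's proof---both compute the conjugation $M_{\alpha,z_\pm}\,\rho(\mathbbm{r})\,M_{\alpha,z_\pm}^*$ directly, normalize by the outcome probability, and identify the resulting Bloch vector. The difference is purely in bookkeeping: the paper expands $\rho(\mathbbm{r})$ in the Pauli basis and computes the four conjugates $M_{\alpha,z_+}\,\mathbbm{1}\,M_{\alpha,z_+}$, $M_{\alpha,z_+}\,\sigma_x\,M_{\alpha,z_+}$, $M_{\alpha,z_+}\,\sigma_y\,M_{\alpha,z_+}$, $M_{\alpha,z_+}\,\sigma_z\,M_{\alpha,z_+}$ separately via bra--ket manipulations and the mutual-unbiasedness relations, which takes roughly two pages. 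Your route---writing everything as explicit $2\times 2$ matrices in the computational basis and reading off entries---is considerably shorter and avoids the detour through the $\ket{x_\pm},\ket{y_\pm}$ bases entirely, at the minor cost of being specific to qubits (whereas the paper's style would generalize more naturally to the $d$-dimensional qLS of Lemma~\ref{lem::qLS_gentleness}). Either way the result is just linear algebra, and your version is the cleaner of the two.
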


\begin{proof}
    We may represent any state $\rho \in \mathcal{S}(\mathbb{C}^2)$ as $\rho = \rho(\mathbbm{r})$, where
    \begin{equation*}
        \rho(\mathbbm{r}) = \frac{1}{2} \left( \mathbbm{1}_{\mathbb{C}^2}+ \mathbbm{r} \cdot \sigma \right) = \frac{1}{2} \left( \mathbbm{1}_{\mathbb{C}^2}+ \mathbbm{r}_x \sigma_x + \mathbbm{r}_y \sigma_y + \mathbbm{r}_z \sigma_z \right).
    \end{equation*}
    $\sigma = [\sigma_x, \sigma_y, \sigma_z]^t$ are the Pauli-matrices and $\mathbbm{r} = [\mathbbm{r}_x, \mathbbm{r}_y, \mathbbm{r}_z]$ with $\norm{\mathbbm{r}}_2 \leq 1$ is the coordinate vector in the Pauli-basis representation. Note that we have
    \begin{equation}
    \label{eqn::Pauli_matrices_basis_rep}
        \sigma_x = \ket{{x_+}}\bra{{x_+}} - \ket{{x_-}}\bra{{x_-}}, \hspace{10pt} \sigma_y = \ket{{y_+}}\bra{{y_+}} - \ket{{y_-}}\bra{{y_-}}, \hspace{10pt} \sigma_z = \ket{{z_+}}\bra{{z_+}} - \ket{{z_-}}\bra{{z_-}}.
    \end{equation}
    Measuring $\rho(\mathbbm{r})$ using $M_{\alpha, z}$ yields the outcome $z_+$ with probability
    \begin{align*}
        \mathcal{P}_{\mathbbm{r}}\left( R_{M_{\alpha, z}} = z_+ \right) =& \Tr \left( \rho(\mathbbm{r}) M_{\alpha, z_+} M_{\alpha, z_+}^* \right) =  \Tr \left( \rho(\mathbbm{r}) \left( \frac{e^{\delta}}{e^{\delta}+1} \ket{{z_+}}\bra{{z_+}} + \frac{1}{e^{\delta}+1} \ket{{z_-}}\bra{{z_-}}\right) \right)
        \\
        =& \frac{1}{2} \frac{1}{e^{\delta}+1} \Big( e^{\delta} \bra{{z_+}}\mathbbm{1}_{\mathbb{C}^2}+ \mathbbm{r}_x \sigma_x + \mathbbm{r}_y \sigma_y + \mathbbm{r}_z \sigma_z \ket{{z_+}}
        \\
        & + \bra{{z_-}}\mathbbm{1}_{\mathbb{C}^2}+ \mathbbm{r}_x \sigma_x + \mathbbm{r}_y \sigma_y + \mathbbm{r}_z \sigma_z \ket{{z_-}} \Big).
    \end{align*}
    Now, using equations (\ref{eqn::Pauli_matrices_basis_rep}) and (\ref{eqn::mutually_unbiasedness}) we have
    \begin{align*}
        \bra{{z_+}}\mathbbm{1}_{\mathbb{C}^2}+ \mathbbm{r}_x \sigma_x + \mathbbm{r}_y \sigma_y + \mathbbm{r}_z \sigma_z \ket{{z_+}} &= 1 + \mathbbm{r}_z
        \intertext{and}
        \bra{{z_-}}\mathbbm{1}_{\mathbb{C}^2}+ \mathbbm{r}_x \sigma_x + \mathbbm{r}_y \sigma_y + \mathbbm{r}_z \sigma_z \ket{{z_-}} &= 1 - \mathbbm{r}_z.
    \end{align*}
    This gives
    \begin{align*}
        \mathcal{P}_{\mathbbm{r}}\left( R_{M_{\alpha, z}} = z_+ \right) &= \frac{1}{2} \frac{1}{e^{\delta}+1} \left( e^{\delta} (1 + \mathbbm{r}_z) + 1 - \mathbbm{r}_z \right) = \frac{1}{2} + \frac{1}{2} \frac{e^{\delta} - 1}{e^{\delta} + 1} \mathbbm{r}_z
        \intertext{and analogously}
        \mathcal{P}_{\mathbbm{r}}\left( R_{M_{\alpha, z}} = z_-\right) &= \frac{1}{2} \frac{1}{e^{\delta}+1} \left( e^{\delta} (1 - \mathbbm{r}_z) + 1 + \mathbbm{r}_z \right) = \frac{1}{2} - \frac{1}{2} \frac{e^{\delta} - 1}{e^{\delta} + 1} \mathbbm{r}_z.
    \end{align*}
    We will now consider what happens to the state after the measurement. When we obtain outcome $+1$ the state collapses to
    \begin{align}
        \rho(\mathbbm{r})_{M_{\alpha, z} \to z_+} &= \frac{1}{\mathcal{P}_{\mathbbm{r}}\left( R_{M_{\alpha, z}} = z_+ \right)} M_{\alpha, z_+} \rho(\mathbbm{r}) M_{\alpha, z_+}^* \notag
        \\
        &= \frac{1}{2} \frac{1}{\mathcal{P}_{\mathbbm{r}}\left( R_{M_{\alpha, z}} = z_+ \right)} M_{\alpha, z_+} \left( \mathbbm{1}_{\mathbb{C}^2}+ \mathbbm{r}_x \sigma_x + \mathbbm{r}_y \sigma_y + \mathbbm{r}_z \sigma_z \right)M_{\alpha, z_+} \notag
        \\
        \label{eqn::z_axis_+_post_measurement_state}
        &= \frac{1}{2} \frac{1}{\mathcal{P}_{\mathbbm{r}}\left( R_{M_{\alpha, z}} = z_+ \right)} \left( S_1 + \mathbbm{r}_x S_2 + \mathbbm{r}_y S_3 + \mathbbm{r}_z S_4 \right)
    \end{align}
    and we will calculate the terms $S_1,...,S_4$ separately. For $S_1$ and $S_4$ we have
    \begin{align*}
        S_1 &= M_{\alpha, z_+}M_{\alpha, z_+} = \frac{e^{\delta}}{e^{\delta} + 1} \ket{{z_+}}\bra{{z_+}} + \frac{1}{e^{\delta} + 1} \ket{{z_+}}\bra{{z_+}}
        \\
        S_4 &= M_{\alpha, z_+} \sigma_z M_{\alpha, z_+} = \frac{e^{\delta}}{e^{\delta}+1} \ket{{z_+}}\bra{{z_+}} - \frac{1}{e^{\delta}+1} \ket{{z_+}}\bra{{z_+}}.
    \end{align*}
    For $S_2$ we obtain
    \begin{alignat*}{2}
        S_2 =& M_{\alpha, z_+} && \sigma_x M_{\alpha, z_+} 
        \\
        =& M_{\alpha, z_+} &&\left( \ket{{x_+}}\bra{{x_+}} -  \ket{{x_-}}\bra{{x_-}}\right) M_{\alpha, z_+}
        \\
        =& \frac{1}{e^{\delta} + 1} && \Big[ \left( e^{\frac{\delta}{2}} \ket{{z_+}} \bra{{z_+}}\ket{{x_+}} + \ket{{z_-}} \bra{{z_-}}\ket{{x_+}} \right) \left( e^{\frac{\delta}{2}} \bra{{x_+}}\ket{{z_+}} \bra{{z_+}} + \bra{{x_+}}\ket{{z_-}} \bra{{z_-}} \right)
        \\
        & && - \left( e^{\frac{\delta}{2}} \ket{{z_+}} \bra{{z_+}}\ket{{x_-}} + \ket{{z_-}} \bra{{z_-}}\ket{{x_-}} \right) \left( e^{\frac{\delta}{2}} \bra{{x_-}}\ket{{z_+}} \bra{{z_+}} + \bra{{x_-}}\ket{{z_-}} \bra{{z_-}} \right) \Big]
        \\
        =& \frac{1}{e^{\delta} + 1} && \Big[ \left( \ket{{x_+}} + \left( e^{\frac{\delta}{2}} - 1 \right) \bra{{z_+}}\ket{{x_+}} \ket{{z_+}} \right) \left( \bra{{x_+}} + \left( e^{\frac{\delta}{2}} - 1 \right) \bra{{x_+}}\ket{{z_+}} \bra{{z_+}} \right)
        \\
        & && - \left( \ket{{x_-}} + \left( e^{\frac{\delta}{2}} - 1 \right) \bra{{z_+}}\ket{{x_-}} \ket{{z_+}} \right) \left( \bra{{x_-}} + \left( e^{\frac{\delta}{2}} - 1 \right) \bra{{x_-}}\ket{{z_+}} \bra{{z_+}} \right) \Big]
        \\
        =& \frac{1}{e^{\delta}+1} &&\Bigg[ \ket{{x_+}}\bra{{x_+}} + \left( e^{\frac{\delta}{2}} - 1 \right)^2 \left|\bra{{z_+}}\ket{{x_+}}\right|^2 \ket{{z_+}}\bra{{z_+}} 
        \\
        & && + \left( e^{\frac{\delta}{2}} - 1 \right) \left( \bra{{z_+}}\ket{{x_+}} \ket{{z_+}}\bra{{x_+}} + \bra{{x_+}}\ket{{z_+}} \ket{{x_+}}\bra{{z_+}} \right)
        \\
        & && - \Big( \ket{{x_-}}\bra{{x_-}} + \left( e^{\frac{\delta}{2}} - 1 \right)^2 \left|\bra{{z_+}}\ket{{x_-}}\right|^2 \ket{{z_+}}\bra{{z_+}} \Big)
        \\
        & && - \Big( \left( e^{\frac{\delta}{2}} - 1 \right) \left( \bra{{z_+}}\ket{{x_-}} \ket{{z_+}}\bra{{x_-}} + \bra{{x_-}}\ket{{z_+}} \ket{{x_-}}\bra{{z_+}} \right) \Big) \Bigg].
        \intertext{Now, by equation (\ref{eqn::mutually_unbiasedness}) we have $\left|\bra{{z_+}}\ket{{x_+}}\right|^2 = \frac{1}{2} = \left|\bra{{z_+}}\ket{{x_-}}\right|^2$ which gives}
        S_2 =& \frac{1}{e^{\delta}+1} &&\Big[ \ket{{x_+}}\bra{{x_+}} - \ket{{x_-}}\bra{{x_-}}
        \\
        & && \left( e^{\frac{\delta}{2}} - 1 \right) \Big( \bra{{x_+}}\ket{{z_+}} \ket{{x_+}} - \bra{{x_-}}\ket{{z_+}} \ket{{x_-}} \Big) \bra{{z_+}} + 
        \\
        & && \left( e^{\frac{\delta}{2}} - 1 \right)  \ket{{z_+}} \Big( \bra{{z_+}}\ket{{x_+}} \bra{{x_+}} - \bra{{z_+}}\ket{{x_-}} \bra{{x_-}} \Big).
        \intertext{Straightforeward calculations show that $\bra{{x_+}}\ket{{z_+}} \ket{{x_+}} - \bra{{x_-}}\ket{{z_+}} \ket{{x_-}} = \ket{{z_-}}$ such that we can write}
        S_2 =& \frac{1}{e^{\delta} + 1} && \Big[ \ket{{x_+}}\bra{{x_+}} - \ket{{x_-}}\bra{{x_-}} + \left( e^{\frac{\delta}{2}} - 1 \right) \left( \ket{{z_-}}\bra{{z_+}} + \ket{{z_+}}\bra{{z_-}} \right) \Big]
        \\
        =& \frac{1}{e^{\delta}+1} &&\Big[ \sigma_x + \left( e^{\frac{\delta}{2}} -1 \right) \sigma_x] 
        \\
        =& \frac{e^{\frac{\delta}{2}}}{e^{\delta} + 1} && \sigma_x 
        \end{alignat*}
    since we have \begin{align*}
        (\ket{{z_-}}\bra{{z_+}} + \ket{{z_+}}\bra{{z_-}})\ket{{x_+}} &= \ket{{x_+}}
        \\
        (\ket{{z_-}}\bra{{z_+}} + \ket{{z_+}}\bra{{z_-}})\ket{{x_-}} &= -\ket{{x_-}}
    \end{align*}
    which shows that $\ket{{z_-}}\bra{{z_+}} + \ket{{z_+}}\bra{{z_-}} = \sigma_x$. \\
    
    Lets now turn to $S_3$. Here we have (similarly to $S_2$)
    \begin{alignat*}{2}
        S_3 =& M_{\alpha, z_+} && \sigma_y M_{\alpha, z_+} 
        \\
        =& M_{\alpha, z_+} &&\left( \ket{{y_+}}\bra{{y_+}} -  \ket{{y_-}}\bra{{y_-}}\right) M_{\alpha, z_+}
        \\
        =& \frac{1}{e^{\delta}+1} &&\Big[ \ket{{y_+}}\bra{{y_+}} - \ket{{y_-}}\bra{{y_-}}
        \\
        & && \left( e^{\frac{\delta}{2}} - 1 \right) \Big( \bra{{y_+}}\ket{{z_+}} \ket{{y_+}} - \bra{{y_-}}\ket{{z_+}} \ket{{y_-}} \Big) \bra{{z_+}} + 
        \\
        & && \left( e^{\frac{\delta}{2}} - 1 \right)  \ket{{z_+}} \Big( \bra{{z_+}}\ket{{y_+}} \bra{{y_+}} - \bra{{z_+}}\ket{{y_-}} \bra{{y_-}} \Big).
        \intertext{Here we can again calculate $\bra{{y_+}}\ket{{z_+}} \ket{{y_+}} - \bra{{y_-}}\ket{{z_+}} \ket{{y_-}} = i \ket{{z_-}}$ such that we obtain}
        S_3 =& \frac{1}{e^{\delta} + 1} && \Big[ \ket{{y_+}}\bra{{y_+}} - \ket{{y_-}}\bra{{y_-}} + \left( e^{\frac{\delta}{2}} - 1 \right) \left( i \ket{{z_-}}\bra{{z_+}} - i \ket{{z_+}}\bra{{z_-}} \right) \Big]
        \\
        =& \frac{1}{e^{\delta}+1} &&\Big[ \sigma_y + \left( e^{\frac{\delta}{2}} -1 \right) \sigma_y] 
        \\
        =& \frac{e^{\frac{\delta}{2}}}{e^{\delta} + 1} && \sigma_y
    \end{alignat*}
    since we have
    \begin{align*}
        \left(i \ket{{z_-}}\bra{{z_+}} - i \ket{{z_+}}\bra{{z_-}}\right)\ket{{y_+}} &= \ket{{y_+}}
        \\
        \left(i \ket{{z_-}}\bra{{z_+}} - i \ket{{z_+}}\bra{{z_-}}\right)\ket{{y_-}} &= -\ket{{y_-}}
    \end{align*}
    which shows $i \ket{{z_-}}\bra{{z_+}} - i \ket{{z_+}}\bra{{z_-}} = \sigma_y$.\\

    Returning to equation (\ref{eqn::z_axis_+_post_measurement_state}), we can now write this as
    \begin{alignat*}{2}
        &\rho(\mathbbm{r})_{M_{\alpha, z} \to z_+} &&
        \\
        =& \frac{1}{2} \frac{1}{\mathcal{P}_{\mathbbm{r}}\left( R_{M_{\alpha, z}} = z_+ \right)} \Big( S_1 + \mathbbm{r}_x && S_2 + \mathbbm{r}_y S_3 + \mathbbm{r}_z S_4 \Big)
        \\
        =& \frac{1}{2}\frac{1}{\mathcal{P}_{\mathbbm{r}}\left( R_{M_{\alpha, z}} = z_+ \right)} \frac{1}{e^{\delta} + 1}
        \Big[ e^{\delta} && \ket{{z_+}}\bra{{z_+}} + \ket{{z_-}}\bra{{z_-}}
        \\
        & && + e^{\frac{\delta}{2}} \mathbbm{r}_x \sigma_x + e^{\frac{\delta}{2}} \mathbbm{r}_y \sigma_y + \mathbbm{r}_z e^{\delta} \ket{{z_+}}\bra{{z_+}} - \mathbbm{r}_z \ket{{z_-}}\bra{{z_-}} \Big]
        \\
        =& \frac{1}{2 \, \mathcal{P}_{\mathbbm{r}}\left( R_{M_{\alpha, z}} = z_+ \right) (e^{\delta} +1 )} \Big[ && \frac{1}{2} \left( e^{\delta} + 1 + \mathbbm{r}_z \left( e^{\delta} - 1 \right) \right) \mathbbm{1}_{\mathbb{C}^2}
        \\
        & && + e{^\frac{\delta}{2}} \mathbbm{r}_x \sigma_x + e{^\frac{\delta}{2}} \mathbbm{r}_y \sigma_y + \frac{1}{2} \left( e^{\delta} - 1 + \mathbbm{r}_z \left( e^{\delta} +1 \right) \right) \sigma_z \Big]
        \\
        =& \frac{1}{2}\Big[ \mathbbm{1}_{\mathbb{C}^2} + \frac{2 e^{\frac{\delta}{2}} \mathbbm{r}_x}{e^{\delta} + 1 +\mathbbm{r}_z (e^{\delta} -1)} && \sigma_x + \frac{2 e^{\frac{\delta}{2}} \mathbbm{r}_y}{e^{\delta} + 1 +\mathbbm{r}_z (e^{\delta} -1)} \sigma_y + \frac{e^{\delta} -1 + \mathbbm{r}_z (e^{\delta} +1)}{e^{\delta} + 1 + \mathbbm{r}_z(e^{\delta} -1)} \sigma_z \Big]
        \\
        =& \rho(\mathbbm{r}_{M_{z,\delta} \to z_+}), &&
    \end{alignat*}
    where
    \begin{align*}
        \mathbbm{r}_{M_{z,\delta} \to z_+} &= \Big[ \frac{2 e^{\frac{\delta}{2}} \mathbbm{r}_x}{e^{\delta} + 1 +\mathbbm{r}_z (e^{\delta} -1)}, \frac{2 e^{\frac{\delta}{2}} \mathbbm{r}_y}{e^{\delta} + 1 +\mathbbm{r}_z (e^{\delta} -1)}, \frac{\mathbbm{r}_z (e^{\delta} +1) + e^{\delta} -1}{e^{\delta} + 1 + \mathbbm{r}_z(e^{\delta} -1)} \Big]
        \\
        &= \frac{1}{e^{\delta} + 1 +\mathbbm{r}_z (e^{\delta} -1)} \Big[ 2 e^{\frac{\delta}{2}} \mathbbm{r}_x, \;  2 e^{\frac{\delta}{2}} \mathbbm{r}_y, \; \mathbbm{r}_z (e^{\delta} +1) + e^{\delta} -1 \Big].
    \end{align*}

    Very similar calculations give the post measurement state when the outcome $z_-$ is obtained. There we have
    \begin{align*}
        &\rho(\mathbbm{r})_{M_{\alpha, z} \to z_-}
        \\
        =& \frac{1}{2}\Big[ \mathbbm{1}_{\mathbb{C}^2} + \frac{2 e^{\frac{\delta}{2}} \mathbbm{r}_x}{e^{\delta} + 1 - \mathbbm{r}_z (e^{\delta} -1)} \sigma_x + \frac{2 e^{\frac{\delta}{2}} \mathbbm{r}_y}{e^{\delta} + 1 - \mathbbm{r}_z (e^{\delta} -1)} \sigma_y + \frac{1 - e^{\delta} + \mathbbm{r}_z (e^{\delta} + 1)}{e^{\delta} + 1 - \mathbbm{r}_z(e^{\delta} -1)} \sigma_z \Big]
        \\
        =& \rho(\mathbbm{r}_{M_{z,\delta} \to z_-}),
    \end{align*}
    where
    \begin{align*}
        \mathbbm{r}_{M_{z,\delta} \to z_-} &= \Big[ \frac{2 e^{\frac{\delta}{2}} \mathbbm{r}_x}{e^{\delta} + 1 -\mathbbm{r}_z (e^{\delta} -1)}, \frac{2 e^{\frac{\delta}{2}} \mathbbm{r}_y}{e^{\delta} + 1 -\mathbbm{r}_z (e^{\delta} -1)}, \frac{\mathbbm{r}_z (e^{\delta} +1) - (e^{\delta} -1)}{e^{\delta} + 1 - \mathbbm{r}_z(e^{\delta} -1)} \Big]
        \\
        &= \frac{1}{e^{\delta} + 1 - \mathbbm{r}_z (e^{\delta} -1)} \Big[ 2 e^{\frac{\delta}{2}} \mathbbm{r}_x, \;  2 e^{\frac{\delta}{2}} \mathbbm{r}_y, \; \mathbbm{r}_z (e^{\delta} +1) - (e^{\delta} -1)\Big].
    \end{align*}
\end{proof}

\subsection{Additional proofs}
\label{sec::additonal_proofs}

\begin{blank}{\textbf{Restatement of Lemma \ref{lem::Help1}.}}
    Let $M$ be a quantum measurement with measurement operators $M_y$, where $E_y := M_y^* M_y$. Furthermore, let $\rho_i \in \mathcal{S}(\mathbb{C}^d)$ be quantum states. We set $$
    p_i^{R_M}(y) := \mathcal{P}_i(R_M = y) = \Tr(\rho_i E_y).
    $$ 
    Then:
    \begin{enumerate}
        \item[(i)] For all $y$ it holds: $\lambda_{min}(E_y) \leq p_i^{R_M}(y) \leq \lambda_{max}(E_y)$,

        \item[(ii)] For all $y$ it holds: $\left| p_{1}^{R_M}(y) - p_{2}^{R_M}(y) \right| \leq \left( \lambda_{max}(E_y) - \lambda_{min}(E_y) \right) \norm{\rho_1 - \rho_2}_{Tr}$ .
    \end{enumerate}
\end{blank}

\begin{proof}
    Let us start by showing that for two hermitian positive semi-definite matrices $A$ and $E$ we have
    \begin{equation}
        \label{eqn::lemHelp2}
        \lambda_{min}(E) \Tr(A) \leq \Tr(AE) \leq \lambda_{max}(E) \Tr(A).
    \end{equation}
    To see this, we note that for any hermitian and positive semi-definite matrices $A$ and $B$, even though $AB$ is not necessarily hermitian or positive, $\Tr(AB)$ is still positive. To see this, write
    \begin{equation*}
        \Tr(AB) = \Tr( \sqrt{A} B \sqrt{A}) = \sum_{i = 1}^{d} \bra{e_i} \sqrt{A} B \sqrt{A} \ket{e_i} \geq 0
    \end{equation*}
    for an ONB ($\ket{e_i}$) of $\mathbb{C}^d$. Evidently, we have 
    \begin{equation*}
        B_1 = E - \lambda_{min}(E) \mathbbm{1}_{\mathbb{C}^d} \geq 0 \hspace{20pt} \text{and} \hspace{20pt} B_2 = \lambda_{max}(E) \mathbbm{1}_{\mathbb{C}^d} - E \geq 0.
    \end{equation*}
    From this, we obtain
    \begin{align*}
        \Tr(AE) - \lambda_{min}(E) \Tr(A) &= \Tr \left( A (E - \lambda_{min}(E) \mathbbm{1}_{\mathbb{C}^d}) \right) = \Tr(AB_1) \geq 0
        \intertext{and}
        \lambda_{max}(E) \Tr(A) - \Tr(AE) &= \Tr \left( A (\lambda_{max}(E) \mathbbm{1}_{\mathbb{C}^d} - E) \right) = \Tr(AB_2) \geq 0.
    \end{align*}
    In order to show (i) we now note that by the definition of a quantum measurement we have
    \begin{align*}
        p_{i}^{R_M}(y) = \Tr(\rho_i E_y).
    \end{align*}
    The assertion follows from the inequalities in \eqref{eqn::lemHelp2} and the fact that $\Tr(\rho_i) = 1$. To show (ii) we note that by definition, we have:
    \begin{align*}
        p_{1}^{R_M}(y) - p_{2}^{R_M}(y) &= \Tr(\rho_1 E_y) - \Tr(\rho_2 E_y)
        \\
        &= \Tr(E_y (\rho_1 - \rho_2)) 
        \\
        &= \Tr(E_y (\rho_1 - \rho_2)_+) - \Tr(E_y (\rho_1 - \rho_2)_-)
        \\
        &\leq \lambda_{max}(E_y) \Tr((\rho_1 - \rho_2)_+) - \lambda_{min}(E_y) \Tr((\rho_1 - \rho_2)_-)
        \\
        &= \left( \lambda_{max}(E_y) - \lambda_{min}(E_y) \right) \Tr\left( (\rho_1 - \rho_2)_+ \right)
        \\
        &= \left( \lambda_{max}(E_y) - \lambda_{min}(E_y) \right) \norm{\rho_1 - \rho_2}_{Tr}
    \end{align*}
    where we used $\rho_1 - \rho_2 = (\rho_1 - \rho_2)_+ - (\rho_1 - \rho_2)_-$ together with inequalities~\eqref{eqn::lemHelp2} and 
    \begin{equation*}
        0 = \Tr(\rho_1 - \rho_2) = \Tr\left( (\rho_1 - \rho_2)_+ \right) - \Tr\left( (\rho_1 - \rho_2)_- \right)
    \end{equation*}
    to see with $|\rho_1 - \rho_2| = (\rho_1 - \rho_2)_+ + (\rho_1 - \rho_2)_-$that 
    \begin{equation*}
        \Tr\left( (\rho_1 - \rho_2)_+ \right) = \norm{\rho_1 - \rho_2}_{Tr}.
    \end{equation*}
        The symmetry of the term yields the result.
\end{proof}

\begin{blank}{\textbf{Restatement of Lemma \ref{lemma::3}.}}
    If $M$ is an $\alpha$-gentle measurement on the set of pure states $\mathcal{S}_{pure}(\mathbb{C}^d)$ such that the $M_y$ are positive semi-definite, then $M$ is already $\alpha$-gentle on the whole set of states $\mathcal{S}(\mathbb{C}^d)$.
\end{blank}

\begin{proof}
    Let us assume $M_y$ to be positive semi-definite and $\rho = \sum_{i} \mu_i \ket{\psi_i}\bra{\psi_i} $ in $\mathcal{S}(\mathbb{C}^d)$ be a (possibly mixed) state. By the Schrödinger-HJW Theorem \cite{Kirkpatrick2006}, we can purify the state on $\mathbb{C}^d \otimes \mathbb{C}^d$ in the following way. Define $\ket{\Psi} = \sum_{i} \sqrt{\mu_i} \ket{\psi_i} \otimes \ket{\psi_i}$. Then we have $\rho = \Tr_2\left( \ket{\Psi}\bra{\Psi} \right)$, where $\Tr_2$ denotes the partial trace over the second copy of $\mathbb{C}^d$ and $M_y$ is extended to $M_y \otimes \text{Id}$. Since the trace-norm is contractive under quantum channels we have
    \begin{align*}
        \norm{\rho - \rho_{M \to y}}_{Tr} &= \norm{\Tr_2\left( \ket{\Psi}\bra{\Psi} \right) - \Tr_2\left( \ket{\Psi}\bra{\Psi} \right)_{M \to y}}_{Tr} 
        \\
        &\leq \norm{\ket{\Psi}\bra{\Psi} -  \ket{\Psi}\bra{\Psi}_{M \otimes \text{Id} \to y}}_{Tr}
        \\
        &= 1 - \frac{\left| \bra{\Psi} (M_y \otimes \text{Id}) \ket{\Psi} \right|^2}{\bra{\Psi} (M_y^* M_y \otimes \text{Id}) \ket{\Psi}}.
    \end{align*}
    By assumption, we can write
    \begin{equation*}
        M_y = \sum_{i = 1}^{d} \mu_i \ket{v_i}\bra{v_i}
    \end{equation*}
    for an orthonormal basis $\ket{v_i}$ of $\mathcal{H}$. Any pure state $\ket{\psi} \in \mathcal{H}$ can therefore be written as
    \begin{equation*}
        \ket{\psi} = \sum_{i = 1}^{d} \sqrt{\nu_i} \ket{v_i}, \hspace{10pt} \text{where } \sum_{i = 0}^{d-1} |\nu_i| = 1.
    \end{equation*}
    Straightforward calculations now give
    \begin{align*}
        \bra{\psi}M_y\ket{\psi} = \sum_{i = 1}^{d} \mu_i |\nu_i| \hspace{20pt} \text{and} \hspace{20pt}
        \bra{\psi}M_y^2\ket{\psi} = \sum_{i = 1}^{d} \mu_i^2 |\nu_i|.
    \end{align*}
    Now, $M_y \otimes \text{Id}$ has the same eigenvalues as $M_y$ and the vectors $(\ket{v_i} \otimes \ket{v_j})_{i,j \in \{ 0,...,d-1\}}$ form an ONB of $\mathcal{H} \otimes \mathcal{H}$. Any pure state $\ket{\Psi} \in \mathcal{H} \otimes \mathcal{H}$ can therefore be written as
    \begin{equation*}
        \ket{\Psi} = \sum_{i,j = 1}^{d} \sqrt{\nu_{i,j}} \ket{v_i} \otimes \ket{v_j} , \hspace{10pt} \text{where } \sum_{i,j = 1}^{d} |\nu_{i,j}| = 1.
    \end{equation*}
    We can now again calculate
    \begin{align*}
        \bra{\Psi}(M_y \otimes \text{Id}) \ket{\Psi} = \sum_{i = 1}^{d} \mu_i \sum_{j = 0}^{d-1} |\nu_{i,j}| =: \sum_{i = 1}^{d} \mu_i |\Tilde{\nu_i}|
        \intertext{and}
        \bra{\Psi}(M_y^2 \otimes \text{Id}) \ket{\Psi} = \sum_{i = 1}^{d} \mu_i^2 \sum_{j = 1}^{d} |\nu_{i,j}| =: \sum_{i = 1}^{d} \mu_i^2 |\Tilde{\nu_i}|
    \end{align*}
    where we defined $|\Tilde{\nu_i}| := \sum_{j = 1}^{d} |\nu_{i,j}|$. We can see that for every pure state 
    \begin{equation*}
        \ket{\Psi} = \sum_{i,j = 1}^{d} \sqrt{\nu_{i,j}} \ket{v_i} \otimes \ket{v_j}
    \end{equation*}
    we can define a pure state 
    \begin{equation*}
        \ket{\psi} = \sum_{i = 1}^d \Tilde{\nu}_i \ket{v_i}, \hspace{20pt} \text{where } \Tilde{\nu_i} := \sqrt{\sum_{j = 1}^{d} |\nu_{i,j}|}
    \end{equation*}
    such that 
    \begin{align*}
        \frac{\left| \bra{\Psi} (M_y \otimes \text{Id}) \ket{\Psi} \right|^2}{\bra{\Psi} M_y^* M_y \otimes \text{Id}) \ket{\Psi}} = \frac{\left| \bra{\psi} M_y \ket{\psi} \right|^2}{\bra{\psi} M_y^* M_y \ket{\psi}}
    \end{align*}
    which shows that $\alpha$-gentleness of $M$ on the pure states on $\mathcal{S}_{pure}(\mathbb{C}^d)$ implies $\alpha$-gentleness of $M$ on $\mathcal{S}(\mathbb{C}^d)$.
\end{proof}

\newpage
\nocite{*}

\bibliographystyle{alpha} 
\bibliography{Qubitsbiblio}

@article {hirche2023quantumdifferentialprivacyinformation,
    AUTHOR = {Hirche, Christoph and Rouz\'e, Cambyse and Fran\c ca, Daniel
              Stilck},
     TITLE = {Quantum differential privacy: an information theory
              perspective},
   JOURNAL = {IEEE Trans. Inform. Theory},
  FJOURNAL = {Institute of Electrical and Electronics Engineers.
              Transactions on Information Theory},
    VOLUME = {69},
      YEAR = {2023},
    NUMBER = {9},
     PAGES = {5771--5787},
      ISSN = {0018-9448,1557-9654},
   MRCLASS = {68P27 (81P68)},
  MRNUMBER = {4635148},
       DOI = {10.1109/tit.2023.3272904},
       URL = {https://doi.org/10.1109/tit.2023.3272904},
}

@book {Myrvold_2009,
     TITLE = {Quantum reality, relativistic causality, and closing the
              epistemic circle},
    SERIES = {Western Ontario Series in Philosophy of Science},
    VOLUME = {73},
    EDITOR = {Myrvold, Wayne C. and Christian, Joy},
      NOTE = {Essays in honour of Abner Shimony,
              Including papers from the International Conference held in
              Waterloo, ON, July 18--21, 2006},
 PUBLISHER = {Springer, Dordrecht},
      YEAR = {2009},
     PAGES = {xii+516},
      ISBN = {978-1-4020-9106-3},
   MRCLASS = {81P05 (81-06 81P15 81P40 81T13)},
  MRNUMBER = {2797132},
MRREVIEWER = {Howard\ E.\ Brandt},
       DOI = {10.1007/978-1-4020-9107-0},
       URL = {https://doi.org/10.1007/978-1-4020-9107-0},
}

@incollection {Bubeck_9318006,
    AUTHOR = {Bubeck, Sebastien and Chen, Sitan and Li, Jerry},
     TITLE = {Entanglement is necessary for optimal quantum property
              testing},
 BOOKTITLE = {2020 {IEEE} 61st {A}nnual {S}ymposium on {F}oundations of
              {C}omputer {S}cience},
     PAGES = {692--703},
 PUBLISHER = {IEEE Computer Soc., Los Alamitos, CA},
      YEAR = {[2020] \copyright 2020},
      ISBN = {978-1-7281-9621-3},
   MRCLASS = {68Q12 (68W20)},
  MRNUMBER = {4232078},
       DOI = {10.1109/FOCS46700.2020.00070},
       URL = {https://doi.org/10.1109/FOCS46700.2020.00070},
}

@article{Schmied_2016,
   title={Quantum state tomography of a single qubit: comparison of methods},
   volume={63},
   ISSN={1362-3044},
   url={http://dx.doi.org/10.1080/09500340.2016.1142018},
   DOI={10.1080/09500340.2016.1142018},
   number={18},
   journal={Journal of Modern Optics},
   publisher={Informa UK Limited},
   author={Schmied, Roman},
   year={2016},
   pages={1744–1758} }

@article {Helstrom_69,
    AUTHOR = {Helstrom, Carl W.},
     TITLE = {Quantum detection and estimation theory},
   JOURNAL = {J. Statist. Phys.},
  FJOURNAL = {Journal of Statistical Physics},
    VOLUME = {1},
      YEAR = {1969},
     PAGES = {231--252},
      ISSN = {0022-4715,1572-9613},
   MRCLASS = {81.62},
  MRNUMBER = {250623},
MRREVIEWER = {J.\ A.\ Beekman},
       DOI = {10.1007/BF01007479},
       URL = {https://doi.org/10.1007/BF01007479},
}

@article {Degen_2017,
    AUTHOR = {Degen, C. L. and Reinhard, F. and Cappellaro, P.},
     TITLE = {Quantum sensing},
   JOURNAL = {Rev. Modern Phys.},
  FJOURNAL = {Reviews of Modern Physics},
    VOLUME = {89},
      YEAR = {2017},
    NUMBER = {3},
     PAGES = {035002, 39},
      ISSN = {0034-6861,1539-0756},
   MRCLASS = {81P45},
  MRNUMBER = {3713686},
       DOI = {10.1103/RevModPhys.89.035002},
       URL = {https://doi.org/10.1103/RevModPhys.89.035002},
}

@misc{singh2024quickguidequantumcommunication,
      title={A Quick Guide to Quantum Communication}, 
      author={Rohit Singh and Roshan M. Bodile},
      year={2024},
      eprint={2402.15707},
      archivePrefix={arXiv},
      primaryClass={eess.SP},
      url={https://arxiv.org/abs/2402.15707}, 
}

@article{IBM_2021,
	abstract = {Quantum algorithms offer efficient solutions to computational problems that are expensive to solve classically. Publicly available quantum computers, such as those provided by IBM, can now be used to run small quantum circuits that execute quantum algorithms. However, these quantum computers are highly prone to noise. Here, we introduce important concepts of quantum circuit noise and connectivity that must be addressed to obtain reliable results on quantum computers. We utilize several examples to show how noise scales with circuit depth. We present Simon's algorithm, a quantum algorithm for solving a computational problem of the same name, explain how to implement it in IBM's Qiskit platform, and compare the results of running it both on a noiseless simulator and on physical hardware subject to noise. We discuss the impact of Qiskit's transpiler, which adapts ideal quantum circuits for physical hardware with limited connectivity between qubits. We show that even circuits of only a few qubits can have their success rate significantly reduced by quantum noise unless specific measures are taken to minimize its impact.},
	author = {Johnstun, Scott and Van Huele, Jean-Fran{\c c}ois},
	doi = {10.1119/10.0006204},
	eprint = {https://pubs.aip.org/aapt/ajp/article-pdf/89/10/935/20099625/935\_1\_10.0006204.pdf},
	issn = {0002-9505},
	journal = {American Journal of Physics},
	month = {10},
	number = {10},
	pages = {935-942},
	title = {Understanding and compensating for noise on IBM quantum computers},
	url = {https://doi.org/10.1119/10.0006204},
	volume = {89},
	year = {2021},
	bdsk-url-1 = {https://doi.org/10.1119/10.0006204}}

@article{Gisin_2002,
   title={Quantum cryptography},
   volume={74},
   ISSN={1539-0756},
   url={http://dx.doi.org/10.1103/RevModPhys.74.145},
   DOI={10.1103/revmodphys.74.145},
   number={1},
   journal={Reviews of Modern Physics},
   publisher={American Physical Society (APS)},
   author={Gisin, Nicolas and Ribordy, Grégoire and Tittel, Wolfgang and Zbinden, Hugo},
   year={2002},
    pages={145–195} }

@misc{IBM_website,
    author={Scheider, Josh and Smalley, Ian},
    title = {What is quantum computing?},
    howpublished = {\url{https://www.ibm.com/think/topics/quantum-computing}},
    note = {Accessed: 2025-04-22}
}

@misc{Google_website,
    author={The Google Quantum AI Team},
    title = {Explore Google Quantum AI},
    howpublished = {\url{https://quantumai.google}},
    note = {Accessed: 2025-04-22}
}

@misc{Microsoft_website,
    author={Azure Quantum Doc Team},
    title = {The qubit in quantum computing},
    howpublished = {\url{https://learn.microsoft.com/en-us/azure/quantum/concepts-the-qubit}},
    note = {Accessed: 2025-04-22}
}

@misc{Apple_website,
    author={Apple Differential Privacy Team},
    title = {Learning with Privacy at Scale},
    howpublished = {\url{https://machinelearning.apple.com/research/learning-with-privacy-at-scale}},
    note = {Accessed: 2025-05-05}
}

@article{Google_natur,
	abstract = {Quantum error correction1--4 provides a path to reach practical quantum computing by combining multiple physical qubits into a logical qubit, in which the logical error rate is suppressed exponentially as more qubits are added. However, this exponential suppression only occurs if the physical error rate is below a critical threshold. Here we present two below-threshold surface code memories on our newest generation of superconducting processors, Willow: a distance-7 code and a distance-5 code integrated with a real-time decoder. The logical error rate of our larger quantum memory is suppressed by a factor of Λ= 2.14 $\pm$0.02 when increasing the code distance by 2, culminating in a 101-qubit distance-7 code with 0.143{\%} $\pm$0.003 per cent error per cycle of error correction. This logical memory is also beyond breakeven, exceeding the lifetime of its best physical qubit by a factor of 2.4 $\pm$0.3. Our system maintains below-threshold performance when decoding in real time, achieving an average decoder latency of 63 microseconds at distance 5 up to a million cycles, with a cycle time of 1.1 microseconds. We also run repetition codes up to distance 29 and find that logical performance is limited by rare correlated error events, occurring approximately once every hour or 3 ×109 cycles. Our results indicate device performance that, if scaled, could realize the operational requirements of large-scale fault-tolerant quantum algorithms.},
	author = {Google Quantum AI and Collaborators},
	date = {2025/02/01},
	date-added = {2025-04-22 12:31:20 +0200},
	date-modified = {2025-04-22 12:31:20 +0200},
	doi = {10.1038/s41586-024-08449-y},
	id = {Google2025},
	isbn = {1476-4687},
	journal = {Nature},
	number = {8052},
	pages = {920--926},
	title = {Quantum error correction below the surface code threshold},
	url = {https://doi.org/10.1038/s41586-024-08449-y},
	volume = {638},
	year = {2025},
	bdsk-url-1 = {https://doi.org/10.1038/s41586-024-08449-y}}

@article {odonnell2015quantumspectrumtesting,
    AUTHOR = {O'Donnell, Ryan and Wright, John},
     TITLE = {Quantum spectrum testing},
   JOURNAL = {Comm. Math. Phys.},
  FJOURNAL = {Communications in Mathematical Physics},
    VOLUME = {387},
      YEAR = {2021},
    NUMBER = {1},
     PAGES = {1--75},
      ISSN = {0010-3616,1432-0916},
   MRCLASS = {81P18 (05E10 62G10 62P35)},
  MRNUMBER = {4312358},
       DOI = {10.1007/s00220-021-04180-1},
       URL = {https://doi.org/10.1007/s00220-021-04180-1},
}

@inproceedings {aaronson2019gentle,
    AUTHOR = {Aaronson, Scott and Rothblum, Guy N.},
     TITLE = {Gentle measurement of quantum states and differential privacy},
 BOOKTITLE = {S{TOC}'19---{P}roceedings of the 51st {A}nnual {ACM} {SIGACT}
              {S}ymposium on {T}heory of {C}omputing},
     PAGES = {322--333},
 PUBLISHER = {ACM, New York},
      YEAR = {2019},
      ISBN = {978-1-4503-6705-9},
   MRCLASS = {68P25 (68Q12)},
  MRNUMBER = {4003342},
}

@article {Acharya_2019,
    AUTHOR = {Acharya, Anirudh and Kypraios, Theodore and Gu\c t\u a, M\u
              ad\u alin},
     TITLE = {A comparative study of estimation methods in quantum
              tomography},
   JOURNAL = {J. Phys. A},
  FJOURNAL = {Journal of Physics. A. Mathematical and Theoretical},
    VOLUME = {52},
      YEAR = {2019},
    NUMBER = {23},
     PAGES = {234001, 36},
      ISSN = {1751-8113,1751-8121},
   MRCLASS = {81P68 (81P15)},
  MRNUMBER = {3953819},
       DOI = {10.1088/1751-8121/ab1958},
       URL = {https://doi.org/10.1088/1751-8121/ab1958},
}

@book {NielsenChuang,
    AUTHOR = {Nielsen, Michael A. and Chuang, Isaac L.},
     TITLE = {Quantum computation and quantum information},
 PUBLISHER = {Cambridge University Press, Cambridge},
      YEAR = {2000},
     PAGES = {xxvi+676},
      ISBN = {0-521-63235-8; 0-521-63503-9},
   MRCLASS = {81P68 (68-01 68-02 68Q05 81-02)},
  MRNUMBER = {1796805},
}

@article {steinberger2023efficiency,
    AUTHOR = {Steinberger, Lukas},
     TITLE = {Efficiency in local differential privacy},
   JOURNAL = {Ann. Statist.},
  FJOURNAL = {The Annals of Statistics},
    VOLUME = {52},
      YEAR = {2024},
    NUMBER = {5},
     PAGES = {2139--2166},
      ISSN = {0090-5364,2168-8966},
   MRCLASS = {68P27 (62B15 62F12 62F30 62L05)},
  MRNUMBER = {4829483},
       DOI = {10.1214/24-aos2425},
       URL = {https://doi.org/10.1214/24-aos2425},
}

@article{Warner_1965,
 ISSN = {01621459, 1537274X},
 URL = {http://www.jstor.org/stable/2283137},
 abstract = {For various reasons individuals in a sample survey may prefer not to confide to the interviewer the correct answers to certain questions. In such cases the individuals may elect not to reply at all or to reply with incorrect answers. The resulting evasive answer bias is ordinarily difficult to assess. In this paper it is argued that such bias is potentially removable through allowing the interviewee to maintain privacy through the device of randomizing his response. A randomized response method for estimating a population proportion is presented as an example. Unbiased maximum likelihood estimates are obtained and their mean square errors of conventional estimates under various assumptions about the underlying population.},
 author = {Stanley L. Warner},
 journal = {Journal of the American Statistical Association},
 number = {309},
 pages = {63--69},
 publisher = {[American Statistical Association, Taylor & Francis, Ltd.]},
 title = {Randomized Response: A Survey Technique for Eliminating Evasive Answer Bias},
 urldate = {2024-07-15},
 volume = {60},
 year = {1965}
}

@incollection {duchi2014localprivacydataprocessing,
    AUTHOR = {Duchi, John C. and Jordan, Michael I. and Wainwright, Martin
              J.},
     TITLE = {Local privacy and statistical minimax rates},
 BOOKTITLE = {2013 {IEEE} 54th {A}nnual {S}ymposium on {F}oundations of
              {C}omputer {S}cience---{FOCS} 2013},
     PAGES = {429--438},
 PUBLISHER = {IEEE Computer Soc., Los Alamitos, CA},
      YEAR = {2013},
      ISBN = {978-0-7695-5135-7},
   MRCLASS = {68Q32 (62C20 68Pxx)},
  MRNUMBER = {3246246},
       DOI = {10.1109/FOCS.2013.53},
       URL = {https://doi.org/10.1109/FOCS.2013.53},
}

@article{Huang2020PredictingMP,
  title={Predicting many properties of a quantum system from very few measurements},
  author={Hsin-Yuan Huang and Richard Kueng and John Preskill},
  journal={Nature Physics},
  year={2020},
  volume={16},
  pages={1050 - 1057},
  url={https://api.semanticscholar.org/CorpusID:211205098}
}

@article{Huang2021InformationtheoreticBO,
  title={Information-theoretic bounds on quantum advantage in machine learning},
  author={Hsin-Yuan Huang and Richard Kueng and John Preskill},
  journal={Physical review letters},
  year={2021},
  volume={126 19},
  pages={
          190505
        },
  url={https://api.semanticscholar.org/CorpusID:230799363}
}

@article{Elben2022TheRM,
  title={The randomized measurement toolbox},
  author={Andreas Elben and Steven T. Flammia and Hsin-Yuan Huang and Richard Kueng and John Preskill and Benoit Vermersch and Peter Zoller},
  journal={Nature Reviews Physics},
  year={2022},
  volume={5},
  pages={9-24},
  url={https://api.semanticscholar.org/CorpusID:247596997}
}

@incollection {Watts2022QuantumEL,
    AUTHOR = {Watts, Adam Bene and Bostanci, John},
     TITLE = {Quantum event learning and gentle random measurements},
 BOOKTITLE = {15th {I}nnovations in {T}heoretical {C}omputer {S}cience
              {C}onference},
    SERIES = {LIPIcs. Leibniz Int. Proc. Inform.},
    VOLUME = {287},
     PAGES = {Art. No. 97, 22},
 PUBLISHER = {Schloss Dagstuhl. Leibniz-Zent. Inform., Wadern},
      YEAR = {2024},
      ISBN = {978-3-95977-309-6},
   MRCLASS = {68Q12},
  MRNUMBER = {4695253},
       DOI = {10.4230/lipics.itcs.2024.97},
       URL = {https://doi.org/10.4230/lipics.itcs.2024.97},
}

@misc{wakakuwa2021gentlemeasurementprinciplequantum,
      title={Gentle Measurement as a Principle of Quantum Theory}, 
      author={Eyuri Wakakuwa},
      year={2021},
      eprint={2103.15110},
      archivePrefix={arXiv},
      primaryClass={quant-ph},
      url={https://arxiv.org/abs/2103.15110}, 
}

@article {B_descu_2024,
    AUTHOR = {B\u adescu, Costin and O'Donnell, Ryan},
     TITLE = {Improved quantum data analysis},
   JOURNAL = {TheoretiCS},
  FJOURNAL = {TheoretiCS},
    VOLUME = {3},
      YEAR = {2024},
     PAGES = {Art. 7, 34},
      ISSN = {2751-4838},
   MRCLASS = {68Q12 (62R07 68T09)},
  MRNUMBER = {4719554},
}

@article {guta2018faststatetomographyoptimal,
    AUTHOR = {Gu\c t\u a, M. and Kahn, J. and Kueng, R. and Tropp, J. A.},
     TITLE = {Fast state tomography with optimal error bounds},
   JOURNAL = {J. Phys. A},
  FJOURNAL = {Journal of Physics. A. Mathematical and Theoretical},
    VOLUME = {53},
      YEAR = {2020},
    NUMBER = {20},
     PAGES = {204001, 28},
      ISSN = {1751-8113,1751-8121},
   MRCLASS = {81P50},
  MRNUMBER = {4093475},
MRREVIEWER = {Mingxing\ Luo},
       DOI = {10.1088/1751-8121/ab8111},
       URL = {https://doi.org/10.1088/1751-8121/ab8111},
}

@article {Haah_2017,
    AUTHOR = {Haah, Jeongwan and Harrow, Aram W. and Ji, Zhengfeng and Wu,
              Xiaodi and Yu, Nengkun},
     TITLE = {Sample-optimal tomography of quantum states},
   JOURNAL = {IEEE Trans. Inform. Theory},
  FJOURNAL = {Institute of Electrical and Electronics Engineers.
              Transactions on Information Theory},
    VOLUME = {63},
      YEAR = {2017},
    NUMBER = {9},
     PAGES = {5628--5641},
      ISSN = {0018-9448,1557-9654},
   MRCLASS = {94A17 (81P50)},
  MRNUMBER = {3688051},
       DOI = {10.1109/tit.2017.2719044},
       URL = {https://doi.org/10.1109/tit.2017.2719044},
}

@InProceedings{liu2024rolesharedrandomnessquantum,
  title = 	 {The role of randomness in quantum state certification with unentangled measurements},
  author =       {Liu, Yuhan and Acharya, Jayadev},
  booktitle = 	 {Proceedings of Thirty Seventh Conference on Learning Theory},
  pages = 	 {3523--3555},
  year = 	 {2024},
  editor = 	 {Agrawal, Shipra and Roth, Aaron},
  volume = 	 {247},
  series = 	 {Proceedings of Machine Learning Research},
  month = 	 {30 Jun--03 Jul},
  publisher =    {PMLR},
  pdf = 	 {https://proceedings.mlr.press/v247/liu24a/liu24a.pdf},
  url = 	 {https://proceedings.mlr.press/v247/liu24a.html},
  abstract = 	 {Given $n$ copies of an unknown quantum state $\rho\in\mathbb{C}^{d\times d}$, quantum state certification is the task of determining whether $\rho=\rho_0$ or $\|\rho-\rho_0\|_1&gt;\varepsilon$, where $\rho_0$ is a known reference state. We study quantum state certification using unentangled quantum measurements, namely measurements which operate only on one copy of $\rho$ at a time. When there is a common source of randomness available and the unentangled measurements are chosen based on this randomness, prior work has shown that $\Theta(d^{3/2}/\varepsilon^2)$ copies are necessary and sufficient. This holds even when the measurements are allowed to be chosen adaptively. We consider deterministic measurement schemes (as opposed to randomized) and demonstrate that ${\Theta}(d^2/\varepsilon^2)$ copies are necessary and sufficient for state certification. This shows a separation between algorithms with and without randomness.  We develop a lower bound framework for both fixed and randomized measurements that relates the hardness of testing to the well-established Lüders rule. More precisely, we obtain lower bounds for randomized and fixed schemes as a function of the eigenvalues of the Lüders channel which characterizes one possible post-measurement state transformation.}
}

@article{acharya2025paulimeasurementsoptimalsinglecopy,
  author       = {Jayadev Acharya and
                  Abhilash Dharmavarapu and
                  Yuhan Liu and
                  Nengkun Yu},
  title        = {Pauli measurements are not optimal for single-copy tomography},
  journal      = {CoRR},
  volume       = {abs/2502.18170},
  year         = {2025},
  url          = {https://doi.org/10.48550/arXiv.2502.18170},
  doi          = {10.48550/ARXIV.2502.18170},
  eprinttype    = {arXiv},
  eprint       = {2502.18170},
  timestamp    = {Thu, 01 May 2025 20:28:29 +0200},
  biburl       = {https://dblp.org/rec/journals/corr/abs-2502-18170.bib},
  bibsource    = {dblp computer science bibliography, https://dblp.org}
}

@incollection {chen2021exponentialseparationslearningquantum,
    AUTHOR = {Chen, Sitan and Cotler, Jordan and Huang, Hsin-Yuan and Li,
              Jerry},
     TITLE = {Exponential separations between learning with and without
              quantum memory},
 BOOKTITLE = {2021 {IEEE} 62nd {A}nnual {S}ymposium on {F}oundations of
              {C}omputer {S}cience---{FOCS} 2021},
     PAGES = {574--585},
 PUBLISHER = {IEEE Computer Soc., Los Alamitos, CA},
      YEAR = {[2022] \copyright 2022},
      ISBN = {978-1-6654-2055-6},
   MRCLASS = {68Q12},
  MRNUMBER = {4399716},
       DOI = {10.1109/FOCS52979.2021.00063},
       URL = {https://doi.org/10.1109/FOCS52979.2021.00063},
}

@article{Bagan_2004,
   title={Collective versus local measurements in a qubit mixed-state estimation},
   volume={69},
   ISSN={1094-1622},
   url={http://dx.doi.org/10.1103/PhysRevA.69.010304},
   DOI={10.1103/physreva.69.010304},
   number={1},
   journal={Physical Review A},
   publisher={American Physical Society (APS)},
   author={Bagan, E. and Baig, M. and Muñoz-Tapia, R. and Rodriguez, A.},
   year={2004}}

@article{Bagan_2006,
   title={Optimal full estimation of qubit mixed states},
   volume={73},
   ISSN={1094-1622},
   url={http://dx.doi.org/10.1103/PhysRevA.73.032301},
   DOI={10.1103/physreva.73.032301},
   number={3},
   journal={Physical Review A},
   publisher={American Physical Society (APS)},
   author={Bagan, E. and Ballester, M. A. and Gill, R. D. and Monras, A. and Muñoz-Tapia, R.},
   year={2006}}

@article {kueng2014lowrankmatrixrecovery,
    AUTHOR = {Kueng, Richard and Rauhut, Holger and Terstiege, Ulrich},
     TITLE = {Low rank matrix recovery from rank one measurements},
   JOURNAL = {Appl. Comput. Harmon. Anal.},
  FJOURNAL = {Applied and Computational Harmonic Analysis. Time-Frequency
              and Time-Scale Analysis, Wavelets, Numerical Algorithms, and
              Applications},
    VOLUME = {42},
      YEAR = {2017},
    NUMBER = {1},
     PAGES = {88--116},
      ISSN = {1063-5203,1096-603X},
   MRCLASS = {94A20 (60B20 81P50 90C90 94A12)},
  MRNUMBER = {3574562},
MRREVIEWER = {Yuli\ Fu},
       DOI = {10.1016/j.acha.2015.07.007},
       URL = {https://doi.org/10.1016/j.acha.2015.07.007},
}

@inproceedings{odonnell2015efficientquantumtomography,
author = {O'Donnell, Ryan and Wright, John},
title = {Efficient quantum tomography},
year = {2016},
isbn = {9781450341325},
publisher = {Association for Computing Machinery},
address = {New York, NY, USA},
url = {https://doi.org/10.1145/2897518.2897544},
doi = {10.1145/2897518.2897544},
abstract = {In the quantum state tomography problem, one wishes to estimate an unknown d-dimensional mixed quantum state&nbsp;ρ, given few copies. We show that O(d/ε) copies suffice to obtain an estimate ρ that satisfies ||ρ − ρ||F2 ≤ ε (with high probability). An immediate consequence is that O((ρ) · d/ε2) ≤ O(d2/ε2) copies suffice to obtain an ε-accurate estimate in the standard trace distance. This improves on the best known prior result of&nbsp;O(d3/ε2) copies for full tomography, and even on the best known prior result of O(d2log(d/ε)/ε2) copies for spectrum estimation. Our result is the first to show that nontrivial tomography can be obtained using a number of copies that is just linear in the dimension. Next, we generalize these results to show that one can perform efficient principal component analysis on&nbsp;ρ. Our main result is that O(k d/ε2) copies suffice to output a rank-k approximation&nbsp;ρ whose trace-distance error is at most ε more than that of the best rank-k approximator to ρ. This subsumes our above trace distance tomography result and generalizes it to the case when&nbsp;ρ is not guaranteed to be of low rank. A key part of the proof is the analogous generalization of our spectrum-learning results: we show that the largest k eigenvalues of&nbsp;ρ can be estimated to trace-distance error&nbsp;ε using O(k2/ε2) copies. In turn, this result relies on a new coupling theorem concerning the Robinson–Schensted–Knuth algorithm that should be of independent combinatorial interest.},
booktitle = {Proceedings of the Forty-Eighth Annual ACM Symposium on Theory of Computing},
pages = {899–912},
numpages = {14},
keywords = {tomography, longest increasing subsequence, Robinson-Schensted-Knuth correspondence, Quantum computing},
location = {Cambridge, MA, USA},
series = {STOC '16}
}

@incollection {yu:LIPIcs.ITCS.2021.11,
    AUTHOR = {Yu, Nengkun},
     TITLE = {Sample efficient identity testing and independence testing of
              quantum states},
 BOOKTITLE = {12th {I}nnovations in {T}heoretical {C}omputer {S}cience
              {C}onference},
    SERIES = {LIPIcs. Leibniz Int. Proc. Inform.},
    VOLUME = {185},
     PAGES = {Art. No. 11, 20},
 PUBLISHER = {Schloss Dagstuhl. Leibniz-Zent. Inform., Wadern},
      YEAR = {2021},
      ISBN = {978-3-95977-177-1},
   MRCLASS = {68W20 (68Q12)},
  MRNUMBER = {4214255},
}

@article{Bennett_2006,
   title={Universal quantum data compression via nondestructive tomography},
   volume={73},
   ISSN={1094-1622},
   url={http://dx.doi.org/10.1103/PhysRevA.73.032336},
   DOI={10.1103/physreva.73.032336},
   number={3},
   journal={Physical Review A},
   publisher={American Physical Society (APS)},
   author={Bennett, Charles H. and Harrow, Aram W. and Lloyd, Seth},
   year={2006}}

@article{RevModPhys.52.341,
  title = {On the measurement of a weak classical force coupled to a quantum-mechanical oscillator. I. Issues of principle},
  author = {Caves, Carlton M. and Thorne, Kip S. and Drever, Ronald W. P. and Sandberg, Vernon D. and Zimmermann, Mark},
  journal = {Rev. Mod. Phys.},
  volume = {52},
  issue = {2},
  pages = {341--392},
  numpages = {0},
  year = {1980},
  publisher = {American Physical Society},
  doi = {10.1103/RevModPhys.52.341},
  url = {https://link.aps.org/doi/10.1103/RevModPhys.52.341}
}

@article{hayashi2002simpleconstructionquantumuniversal,
author = {Hayashi, Masahito and Matsumoto, Keiji},
title = {Simple construction of quantum universal variable-length source coding},
year = {2002},
issue_date = {December 2002},
publisher = {Rinton Press, Incorporated},
address = {Paramus, NJ},
volume = {2},
number = {7},
issn = {1533-7146},
abstract = {We simply construct a quantum universal variable-length source code in which, independent of information source, both of the average error and the probability that the coding rate is greater than the entropy rate H(ρp), tend to 0. If H(ρp) is estimated, we can compress the coding rate to the admissible rate H(ρp) with a probability close to 1. However, when we perform a naive measurement for the estimation of H(ρp), the input state is demolished. By smearing the measurement, we successfully treat the trade-off between the estimation of H(ρp) and the non-demolition of the input state. Our protocol can be used not only for the Schumacher's scheme but also for the compression of entangled states.},
journal = {Quantum Info. Comput.},
pages = {519–529},
numpages = {11},
keywords = {compression of entangled states, demolition of states, quantum universal source coding, variable-length}
}

@incollection {watts2024quantumeventlearninggentle,
    AUTHOR = {Watts, Adam Bene and Bostanci, John},
     TITLE = {Quantum event learning and gentle random measurements},
 BOOKTITLE = {15th {I}nnovations in {T}heoretical {C}omputer {S}cience
              {C}onference},
    SERIES = {LIPIcs. Leibniz Int. Proc. Inform.},
    VOLUME = {287},
     PAGES = {Art. No. 97, 22},
 PUBLISHER = {Schloss Dagstuhl. Leibniz-Zent. Inform., Wadern},
      YEAR = {2024},
      ISBN = {978-3-95977-309-6},
   MRCLASS = {68Q12},
  MRNUMBER = {4695253},
       DOI = {10.4230/lipics.itcs.2024.97},
       URL = {https://doi.org/10.4230/lipics.itcs.2024.97},
}

@INPROCEEDINGS{10886009,
  author={Farokhi, Farhad and Kim, Sejeong},
  booktitle={2024 IEEE 63rd Conference on Decision and Control (CDC)}, 
  title={Measuring Quantum Information Leakage Under Detection Threat}, 
  year={2024},
  volume={},
  number={},
  pages={781-786},
  keywords={Lower bound;Privacy;Noise;Cloning;Quantum state;Information leakage;Approximation algorithms;Encoding;Quantum key distribution;Security},
  doi={10.1109/CDC56724.2024.10886009}}

@article{Gulbahar2023,
	abstract = {Quantum state tomography (QST) is a fundamental tool requiring privacy in future distributed systems where unknown states are measured for extracting information. Gentle measurement and differential privacy (DP)-based privacy solutions minimize the damage on unknown state and leakage about the quantum information, respectively. In this article, we propose a fundamentally different design for privacy-preserving QST in a multi-party setting. We assume that Alice delegates QST task of a distant source for which she has no access to a third-party player Bob accessing to the source while preserving the source privacy against the operations realized by Bob. Encrypted QST algorithm is proposed which encodes or maps source computational basis states by exploiting phase estimation and feature mapping concept of quantum machine learning (QML). Bob maps basis states to eigenvalues of a specially designed unitary operator in an entangled manner with his ancillary qubits while teleporting the source qubits back to Alice before applying conventional QST. Encoding mechanism is conjectured as having NP-hard decoding complexity based on difficulty of subset-sum problem combined with Hadamard transform. Linear optical design and quantum circuit implementations are presented for future experiments in noisy intermediate-scale quantum (NISQ) devices. Theoretical and numerical supporting evidences are proposed supporting the proposed eigenstructure. EQST promises further applications for multiple source classification tasks and as a novel feature mapping method for future data embedding tasks in QML.},
	author = {Gulbahar, Burhan},
	doi = {10.1007/s11128-023-04034-w},
	isbn = {1573-1332},
	journal = {Quantum Information Processing},
	number = {7},
	pages = {288},
	title = {Encrypted quantum state tomography with phase estimation for quantum Internet},
	url = {https://doi.org/10.1007/s11128-023-04034-w},
	volume = {22},
	year = {2023}}

@book{doran1986characterizations,
  title={Characterizations of C* Algebras: the Gelfand Naimark Theorems},
  author={Doran, R.},
  isbn={9780824775698},
  lccn={lc85029234},
  series={Chapman \& Hall/CRC Pure and Applied Mathematics},
  url={https://books.google.co.ug/books?id=6jNbsnJVjMoC},
  year={1986},
  publisher={Taylor \& Francis}
}

@article {NussbaumSzkola2009,
    AUTHOR = {Nussbaum, Michael and Szko\l{a}, Arleta},
     TITLE = {The {C}hernoff lower bound for symmetric quantum hypothesis
              testing},
   JOURNAL = {Ann. Statist.},
  FJOURNAL = {The Annals of Statistics},
    VOLUME = {37},
      YEAR = {2009},
    NUMBER = {2},
     PAGES = {1040--1057},
      ISSN = {0090-5364},
   MRCLASS = {62P35 (62G10)},
  MRNUMBER = {2502660},
MRREVIEWER = {Stephan Morgenthaler},
       DOI = {10.1214/08-AOS593},
       URL = {https://doi.org/10.1214/08-AOS593},
}

@article{Audenaert_2008,
   title={Asymptotic Error Rates in Quantum Hypothesis Testing},
   volume={279},
   ISSN={1432-0916},
   url={http://dx.doi.org/10.1007/s00220-008-0417-5},
   DOI={10.1007/s00220-008-0417-5},
   number={1},
   journal={Communications in Mathematical Physics},
   publisher={Springer Science and Business Media LLC},
   author={Audenaert, K. M. R. and Nussbaum, M. and Szkoła, A. and Verstraete, F.},
   year={2008},
   pages={251–283} }

@inproceedings{dunsche2021multivariatemeancomparisondifferential,
	abstract = {The comparison of multivariate population means is a central task of statistical inference . While statistical theory provides a variety of analysis tools, they usually do not protect individuals' privacy. This knowledge can create incentives for participants in a study to conceal their true data (especially for outliers), which might result in a distorted analysis. In this paper, we address this problem by developing a hypothesis test for multivariate mean comparisons that guarantees differential privacy to users. The test statistic is based on the popular Hotelling's {\$}{\$}t^2{\$}{\$}t2-statistic, which has a natural interpretation in terms of the Mahalanobis distance. In order to control the type-1-error, we present a bootstrap algorithm under differential privacy that provably yields a reliable test decision. In an empirical study, we demonstrate the applicability of this approach.},
	address = {Cham},
	author = {Dunsche, Martin and Kutta, Tim and Dette, Holger},
	booktitle = {Privacy in Statistical Databases},
	editor = {Domingo-Ferrer, Josep and Laurent, Maryline},
	isbn = {978-3-031-13945-1},
	pages = {31--45},
	publisher = {Springer International Publishing},
	title = {Multivariate Mean Comparison Under Differential Privacy},
	year = {2022}}

@article{DYDA2021100366,
title = {Differential privacy for public health data: An innovative tool to optimize information sharing while protecting data confidentiality},
journal = {Patterns},
volume = {2},
number = {12},
pages = {100366},
year = {2021},
issn = {2666-3899},
doi = {https://doi.org/10.1016/j.patter.2021.100366},
url = {https://www.sciencedirect.com/science/article/pii/S2666389921002282},
author = {Amalie Dyda and Michael Purcell and Stephanie Curtis and Emma Field and Priyanka Pillai and Kieran Ricardo and Haotian Weng and Jessica C. Moore and Michael Hewett and Graham Williams and Colleen L. Lau},
keywords = {surveillance, data privacy, COVID-19},
abstract = {Summary
Coronavirus disease 2019 (COVID-19) has highlighted the need for the timely collection and sharing of public health data. It is important that data sharing is balanced with protecting confidentiality. Here we discuss an innovative mechanism to protect health data, called differential privacy. Differential privacy is a mathematically rigorous definition of privacy that aims to protect against all possible adversaries. In layperson's terms, statistical noise is applied to the data so that overall patterns can be described, but data on individuals are unlikely to be extracted. One of the first use cases for health data in Australia is the development of the COVID-19 Real-Time Information System for Preparedness and Epidemic Response (CRISPER), which provides proof of concept for the use of this technology in the health sector. If successful, this will benefit future sharing of public health data.}
}

@article {kargin2003chernoffboundefficiencyquantum,
    AUTHOR = {Kargin, Vladislav},
     TITLE = {On the {C}hernoff bound for efficiency of quantum hypothesis
              testing},
   JOURNAL = {Ann. Statist.},
  FJOURNAL = {The Annals of Statistics},
    VOLUME = {33},
      YEAR = {2005},
    NUMBER = {2},
     PAGES = {959--976},
      ISSN = {0090-5364,2168-8966},
   MRCLASS = {62G10 (81P15)},
  MRNUMBER = {2163165},
MRREVIEWER = {Sreenivasan\ Ravi},
       DOI = {10.1214/009053604000001219},
       URL = {https://doi.org/10.1214/009053604000001219},
}

@book{Tsybakov,
  author       = {Alexandre B. Tsybakov},
  title        = {Introduction to Nonparametric Estimation},
  series       = {Springer series in statistics},
  publisher    = {Springer},
  year         = {2009},
  url          = {https://doi.org/10.1007/b13794},
  doi          = {10.1007/B13794},
  isbn         = {978-0-387-79051-0},
  timestamp    = {Tue, 16 May 2017 14:01:34 +0200},
  biburl       = {https://dblp.org/rec/books/daglib/0035708.bib},
  bibsource    = {dblp computer science bibliography, https://dblp.org}
}

@article{Kirkpatrick2006,
	abstract = {A concise presentation of Schr{\"o}dinger's ancilla theorem (Proc. Camb. Phil. Soc. 32, 446 (1936)) and its several recent rediscoveries.},
	author = {Kirkpatrick, K.  A. },
	date = {2006/02/01},
	date-added = {2025-05-14 19:20:39 +0200},
	date-modified = {2025-05-14 19:20:39 +0200},
	doi = {10.1007/s10702-006-1852-1},
	isbn = {1572-9524},
	journal = {Foundations of Physics Letters},
	number = {1},
	pages = {95--102},
	title = {The Schr{\"o}dinger-HJW Theorem},
	url = {https://doi.org/10.1007/s10702-006-1852-1},
	volume = {19},
	year = {2006},
	bdsk-url-1 = {https://doi.org/10.1007/s10702-006-1852-1}}

@article{MOSLEHIAN2012376,
title = {Recent developments of the operator Kantorovich inequality},
journal = {Expositiones Mathematicae},
volume = {30},
number = {4},
pages = {376-388},
year = {2012},
issn = {0723-0869},
doi = {https://doi.org/10.1016/j.exmath.2012.08.007},
url = {https://www.sciencedirect.com/science/article/pii/S0723086912000485},
author = {Mohammad Sal Moslehian},
keywords = {-algebra, Semi-inner product -module, Covariance–variance inequality, Kantorovich inequality, Operator mean, Operator inequality},
abstract = {First, we take a historical glimpse at some significant refinements and extensions of the Kantorovich inequality. Second, we present some operator Kantorovich inequalities involving unital positive linear mappings and the operator geometric mean in the framework of semi-inner product C∗-modules and give some new and classical results in a unified approach.}
}

@inproceedings{
abbas2023on,
title={On quantum backpropagation, information reuse, and cheating measurement collapse},
author={Amira Abbas and Robbie King and Hsin-Yuan Huang and William J. Huggins and Ramis Movassagh and Dar Gilboa and Jarrod Ryan McClean},
booktitle={Thirty-seventh Conference on Neural Information Processing Systems},
year={2023},
url={https://openreview.net/forum?id=HF6bnhfSqH}
}

@article {Butucea2020,
    AUTHOR = {Butucea, Cristina and Dubois, Amandine and Kroll, Martin and
              Saumard, Adrien},
     TITLE = {Local differential privacy: elbow effect in optimal density
              estimation and adaptation over {B}esov ellipsoids},
   JOURNAL = {Bernoulli},
  FJOURNAL = {Bernoulli. Official Journal of the Bernoulli Society for
              Mathematical Statistics and Probability},
    VOLUME = {26},
      YEAR = {2020},
    NUMBER = {3},
     PAGES = {1727--1764},
      ISSN = {1350-7265,1573-9759},
   MRCLASS = {62G07 (62C20 62R07)},
  MRNUMBER = {4091090},
       DOI = {10.3150/19-BEJ1165},
       URL = {https://doi.org/10.3150/19-BEJ1165},
}

@article {Butucea2023,
    AUTHOR = {Butucea, Cristina and Rohde, Angelika and Steinberger, Lukas},
     TITLE = {Interactive versus noninteractive locally differentially
              private estimation: two elbows for the quadratic functional},
   JOURNAL = {Ann. Statist.},
  FJOURNAL = {The Annals of Statistics},
    VOLUME = {51},
      YEAR = {2023},
    NUMBER = {2},
     PAGES = {464--486},
      ISSN = {0090-5364,2168-8966},
   MRCLASS = {62G05 (62C20 62G10)},
  MRNUMBER = {4600989},
MRREVIEWER = {Masayuki\ Hirukawa},
       DOI = {10.1214/22-aos2254},
       URL = {https://doi.org/10.1214/22-aos2254},
}

@article {Cai2021,
    AUTHOR = {Cai, T. Tony and Wang, Yichen and Zhang, Linjun},
     TITLE = {The cost of privacy: optimal rates of convergence for
              parameter estimation with differential privacy},
   JOURNAL = {Ann. Statist.},
  FJOURNAL = {The Annals of Statistics},
    VOLUME = {49},
      YEAR = {2021},
    NUMBER = {5},
     PAGES = {2825--2850},
      ISSN = {0090-5364,2168-8966},
   MRCLASS = {62F30 (62F12 62J05)},
  MRNUMBER = {4338894},
       DOI = {10.1214/21-aos2058},
       URL = {https://doi.org/10.1214/21-aos2058},
}

@article {Wasserman2010,
    AUTHOR = {Wasserman, Larry and Zhou, Shuheng},
     TITLE = {A statistical framework for differential privacy},
   JOURNAL = {J. Amer. Statist. Assoc.},
  FJOURNAL = {Journal of the American Statistical Association},
    VOLUME = {105},
      YEAR = {2010},
    NUMBER = {489},
     PAGES = {375--389},
      ISSN = {0162-1459,1537-274X},
   MRCLASS = {62B10 (60F10 94A62)},
  MRNUMBER = {2656057},
MRREVIEWER = {Oliver\ Johnson},
       DOI = {10.1198/jasa.2009.tm08651},
       URL = {https://doi.org/10.1198/jasa.2009.tm08651},
}

@article {Rohde2020,
    AUTHOR = {Rohde, Angelika and Steinberger, Lukas},
     TITLE = {Geometrizing rates of convergence under local differential
              privacy constraints},
   JOURNAL = {Ann. Statist.},
  FJOURNAL = {The Annals of Statistics},
    VOLUME = {48},
      YEAR = {2020},
    NUMBER = {5},
     PAGES = {2646--2670},
      ISSN = {0090-5364,2168-8966},
   MRCLASS = {62G05 (62C20)},
  MRNUMBER = {4152116},
       DOI = {10.1214/19-AOS1901},
       URL = {https://doi.org/10.1214/19-AOS1901},
}

@article{acharya_inference_2021,
	title = {Inference {Under} {Information} {Constraints} {III}: {Local} {Privacy} {Constraints}},
	volume = {2},
	issn = {2641-8770},
	shorttitle = {Inference {Under} {Information} {Constraints} {III}},
	url = {https://ieeexplore.ieee.org/document/9333586},
	doi = {10.1109/JSAIT.2021.3053569},
	abstract = {We study goodness-of-fit and independence testing of discrete distributions in a setting where samples are distributed across multiple users. The users wish to preserve the privacy of their data while enabling a central server to perform the tests. Under the notion of local differential privacy, we propose simple, sample-optimal, and communication-efficient protocols for these two questions in the noninteractive setting, where in addition users may or may not share a common random seed. In particular, we show that the availability of shared (public) randomness greatly reduces the sample complexity. Underlying our public-coin protocols are privacy-preserving mappings which, when applied to the samples, minimally contract the distance between their respective probability distributions.},
	number = {1},
	urldate = {2025-11-13},
	journal = {IEEE Journal on Selected Areas in Information Theory},
	author = {Acharya, Jayadev and Canonne, Clément L. and Freitag, Cody and Sun, Ziteng and Tyagi, Himanshu},
	year = {2021},
	keywords = {Complexity theory, Differential privacy, Distributed inference, goodness-of-fit, Information theory, local differential privacy, privacy, Privacy, Protocols, Task analysis, Testing},
	pages = {253--267},
	file = {Snapshot:/Users/henningstein/Meine Ablage/Uni/Zotero/storage/FNH4QXCW/9333586.html:text/html;Submitted Version:/Users/henningstein/Meine Ablage/Uni/Zotero/storage/ANDWPK7K/Acharya et al. - 2021 - Inference Under Information Constraints III Local Privacy Constraints.pdf:application/pdf},
}

@inproceedings{harrow_sequential_2017,
	address = {USA},
	series = {{SODA} '17},
	title = {Sequential measurements, disturbance and property testing},
	url = {https://dl.acm.org/doi/10.5555/3039686.3039791},
	abstract = {We describe two procedures which, given access to one copy of a quantum state and a sequence of two-outcome measurements, can distinguish between the case that at least one of the measurements accepts the state with high probability, and the case that all of the measurements have low probability of acceptance. The measurements cannot simply be tried in sequence, because early measurements may disturb the state being tested. One procedure is based on a variant of Marriott-Watrous amplification. The other procedure is based on the use of a test for this disturbance, which is applied with low probability. We find a number of applications:• Quantum query complexity separations in the property testing model for testing isomorphism of functions under group actions. We give quantum algorithms for testing isomorphism, linear isomorphism and affine isomorphism of boolean functions which use exponentially fewer queries than is possible classically, and a quantum algorithm for testing graph isomorphism which uses polynomially fewer queries than the best algorithm known.• Testing properties of quantum states and operations. We show that any finite property of quantum states can be tested using a number of copies of the state which is logarithmic in the size of the property, and give a test for genuine multipartite entanglement of states of n qubits that uses O(n) copies of the state. We also show that equivalence of two unitary operations under conjugation by a unitary picked from a fixed set can be tested efficiently. This is a natural quantum generalisation of testing isomorphism of boolean functions.• Correcting an error in a result of Aaronson on de-Merlinizing quantum protocols. This result claimed that, in any one-way quantum communication protocol where two parties are assisted by an all-powerful but untrusted third party, the third party can be removed with only a modest increase in the communication cost. We give a corrected proof of a key technical lemma required for Aaronson's result.},
	urldate = {2026-04-28},
	booktitle = {Proceedings of the {Twenty}-{Eighth} {Annual} {ACM}-{SIAM} {Symposium} on {Discrete} {Algorithms}},
	publisher = {Society for Industrial and Applied Mathematics},
	author = {Harrow, Aram W. and Lin, Cedric Yen-Yu and Montanaro, Ashley},
	year = {2017},
	pages = {1598--1611},
	annote = {Studies sequential gentle measurements in problems of testing isomorphisms of functions under group actions and quantum property testing
},
	file = {Full Text PDF:/Users/henningstein/Meine Ablage/Uni/Zotero/storage/2Z39G7QD/Harrow et al. - 2017 - Sequential measurements, disturbance and property testing.pdf:application/pdf},
}

@article{katz_reversal_2008,
	title = {Reversal of the {Weak} {Measurement} of a {Quantum} {State} in a {Superconducting} {Phase} {Qubit}},
	volume = {101},
	url = {https://link.aps.org/doi/10.1103/PhysRevLett.101.200401},
	doi = {10.1103/PhysRevLett.101.200401},
	abstract = {We demonstrate in a superconducting qubit the conditional recovery (uncollapsing) of a quantum state after a partial-collapse measurement. A weak measurement extracts information and results in a nonunitary transformation of the qubit state. However, by adding a rotation and a second partial measurement with the same strength, we erase the extracted information, canceling the effect of both measurements. The fidelity of the state recovery is measured using quantum process tomography and found to be above 70\% for partial-collapse strength less than 0.6.},
	number = {20},
	urldate = {2026-05-27},
	journal = {Physical Review Letters},
	publisher = {American Physical Society},
	author = {Katz, Nadav and Neeley, Matthew and Ansmann, M. and Bialczak, Radoslaw C. and Hofheinz, M. and Lucero, Erik and O’Connell, A. and Wang, H. and Cleland, A. N. and Martinis, John M. and Korotkov, Alexander N.},
	month = nov,
	year = {2008},
	pages = {200401},
}

\end{document}